\newtheorem{theorem}{Theorem}[section]
\newtheorem{lemma}[theorem]{Lemma}
\newtheorem{example}[theorem]{Example}
\newtheorem{proposition}[theorem]{Proposition}
\newtheorem{corollary}[theorem]{Corollary}
\newtheorem{conjecture}[theorem]{Conjecture}
\newtheorem{definition}[theorem]{Definition}
\newtheorem{remark}[theorem]{Remark}
\newcommand{\RR}{\mathbb{R}}
\newcommand{\QQ}{\mathbb{Q}}
\newcommand{\NN}{\mathbb{N}}
\newcommand{\ZZ}{\mathbb{Z}}
\newcommand{\PP}{\mathbb{P}}
\newcommand{\EE}{\mathbb{E}}
\newcommand{\One}{\mathbbm{1}}
\newcommand{\Ex}{\mathop{\mathbb{E}}}  % puts symbols below
\newcommand{\sC}{\mathcal{C}}
\newcommand{\sD}{\mathcal{D}}
\newcommand{\sF}{\mathcal{F}}
\newcommand{\sN}{\mathcal{N}}
\newcommand{\sO}{\mathcal{O}}
\newcommand{\sS}{\mathcal{S}}
\newcommand{\sX}{\mathcal{X}}
\newcommand{\bC}{\bm C}
\newcommand{\bM}{\bm M}
\newcommand{\bP}{\bm P}
\newcommand{\bQ}{\bm Q}
\newcommand{\bW}{\bm W}
\newcommand{\bY}{\bm Y}
\newcommand{\bg}{\bm g}
\newcommand{\what}{\widehat}
\newcommand{\bv}{\bm v}
\newcommand{\bx}{\bm x}
\newcommand{\by}{\bm y}
\newcommand{\la}{\langle}
\newcommand{\ra}{\rangle}
\DeclareMathOperator{\Tr}{Tr}
\newcommand{\iid}{\mathsf{iid}}
\newcommand{\sgn}{\mathrm{sgn}}
\newcommand{\trunc}{\mathrm{trunc}}
\newcommand{\PCA}{\mathsf{PCA}}
\newcommand{\GOE}{\mathsf{GOE}}
\newcommand{\Wishart}{\mathsf{Wishart}}
\renewcommand*{\@fnsymbol}[1]{\ensuremath{\ifcase#1\or *\or \ddagger\or
    \mathsection\or \mathparagraph\or \|\or **\or \dagger\dagger
    \or \ddagger\ddagger \else\@ctrerr\fi}}
\title{Computational Hardness of Certifying Bounds on Constrained PCA Problems}
\author{Afonso S.\ Bandeira, Dmitriy Kunisky, Alexander S.\ Wein}
\author[1,2]{Afonso S.\ Bandeira\thanks{Email: \textit{bandeira@cims.nyu.edu}. Partially supported by NSF grants DMS-1712730 and DMS-1719545, and by a grant from the Sloan
Foundation.}}
\author[1]{Dmitriy Kunisky\thanks{Email: \textit{kunisky@cims.nyu.edu}. Partially supported by NSF grants DMS-1712730 and DMS-1719545.}}
\author[1]{Alexander S.\ Wein\thanks{Email: \textit{awein@cims.nyu.edu}. Partially supported by NSF grant DMS-1712730 and by the Simons Collaboration on Algorithms and Geometry.}}
\affil[1]{Department of Mathematics, Courant Institute of Mathematical Sciences, New York University, New York, NY, USA}
\affil[2]{Center for Data Science, New York University, New York, NY, USA}
\date{April 5, 2019}
\begin{document}

\maketitle

\begin{abstract}
Given a random $n \times n$ symmetric matrix $\bm W$ drawn from the Gaussian orthogonal ensemble (GOE), we consider the problem of certifying an upper bound on the maximum value of the quadratic form $\bm x^\top \bm W \bm x$ over all vectors $\bm x$ in a constraint set $\mathcal{S} \subset \mathbb{R}^n$. For a certain class of normalized constraint sets $\mathcal{S}$ we show that, conditional on certain complexity-theoretic assumptions, there is no polynomial-time algorithm certifying a better upper bound than the largest eigenvalue of $\bm W$.
A notable special case included in our results is the hypercube $\sS = \{ \pm 1 / \sqrt{n}\}^n$, which corresponds to the problem of certifying bounds on the Hamiltonian of the Sherrington-Kirkpatrick spin glass model from statistical physics.

Our proof proceeds in two steps. First, we give a reduction from the detection problem in the \emph{negatively-spiked Wishart model} to the above certification problem. We then give evidence that this Wishart detection problem is computationally hard below the classical spectral threshold, by showing that no low-degree polynomial can (in expectation) distinguish the spiked and unspiked models.
This method for identifying computational thresholds was proposed in a sequence of recent works on the sum-of-squares hierarchy, and is believed to be correct for a large class of problems.
Our proof can be seen as constructing a distribution over symmetric matrices that appears computationally indistinguishable from the GOE, yet is supported on matrices whose maximum quadratic form over $\bm x \in \mathcal{S}$ is much larger than that of a GOE matrix.
\end{abstract}

\newpage

\section{Introduction}

An important phenomenon in the study of the computational characteristics of random problems is the appearance of \emph{statistical-to-computational gaps}, wherein a problem may be solved by an inefficient algorithm---typically a brute-force search---but empirical evidence, heuristic formal calculations, and negative results for classes of powerful algorithms all suggest that the same problem cannot be solved by any algorithm running in polynomial time.
Many examples of this phenomenon arise from Bayesian estimation tasks, in which the goal is to recover a \emph{signal} from noisy observations.
Bayesian problems exhibiting statistical-to-computational gaps in certain regimes include graph problems such as community detection \cite{block-model}, estimation for models of structured matrices and tensors \cite{mmse-rank-1,tensor-pca}, statistical problems arising from imaging and microscopy tasks \cite{synch-amp,mra-het}, and many others.
A different family of examples comes from random optimization problems that are \emph{signal-free}, where there is no ``planted'' structure to recover; rather, the task is simply to optimize a random objective function as effectively as possible.
Notable instances of problems of this kind that exhibit statistical-to-computational gaps include finding a large clique in a random graph \cite{jerrum-clique}, finding a large submatrix of a random matrix \cite{gamarnik-submatrix}, or finding an approximate solution to a random constraint satisfaction problem \cite{coja-barriers}.

In this paper, we study a problem from the latter class, the problem of maximizing a quadratic form $\bx^\top \bW \bx$ over a constraint set $\bx \in \sS \subset \RR^n$, where $\bW$ is a random matrix drawn from the Gaussian orthogonal ensemble,\footnote{Gaussian orthogonal ensemble (GOE): $\bW$ is symmetric with $W_{ij} = W_{ji} \sim \sN(0, 1/n)$ for $i \neq j$ and $W_{ii} \sim \sN(0, 2/n)$ independently.} $\bW \sim \GOE(n)$.
Unlike previous works that have studied whether an efficient algorithm can \emph{optimize} and find $\bx = \bx(\bW)$ that achieves a large objective value, we study whether an efficient algorithm can \emph{certify} an upper bound on the objective over all $\bx \in \sS$.
In the notable case of the \emph{Sherrington-Kirkpatrick (SK) Hamiltonian} \cite{sk,pan-sk}, where $\sS = \{ \pm 1 / \sqrt{n}\}^n$, while there is an efficient algorithm believed to optimize arbitrarily close to the true maximum \cite{mon-sk}, we show that, conditional on the correctness of the \emph{low-degree likelihood ratio method} recently developed in the sum-of-squares literature \cite{sos-clique,HS-bayesian,sos-power,sam-thesis} (which we explain in Section~\ref{sec:low-deg}), there is \emph{no} efficient algorithm to certify an upper bound that improves on a simple spectral certificate.
Thus, the certification task for this problem exhibits a statistical-to-computational gap, while the optimization task does not.

\paragraph{Signal-free random optimization problems.}

The general task we will be concerned with is the optimization of a random function,
\begin{equation}
    \begin{array}{ll}
    \text{maximize} & f_{\omega}(\bx) \\
    \text{subject to} & \bx \in \sS \\
    \text{where} & \omega \sim \PP.
    \end{array}
\end{equation}
\noindent
Sometimes, such a task arises in statistical estimation as a likelihood maximization, where the random function $f_\omega(\bx)$ is the likelihood of an observed dataset $\omega$ for a given parameter value $\bx$.
But the same formal task, stripped of this statistical origin, is still common: in statistical physics, random functions arise in models of magnetism in disordered media; in optimization, random functions encode uncertainty in the parameters of a problem; and in theoretical computer science, random instances of algorithmic tasks describe the average-case rather than worst-case computational characteristics of a problem.
Below, we review a well-studied example showing the connection between a prominent statistical estimation problem and a related signal-free random optimization problem.

\begin{example}
    \label{ex:sk}
    Consider the \emph{Rademacher-spiked Wigner model}, a family of probability distributions $\PP_{\lambda, \bx}$ indexed by $\bx \in \{ \pm 1 / \sqrt{n} \}^n$ and $\lambda > 0$.
    Letting $\bW \sim \GOE(n)$, $\PP_{\lambda, \bx}$ is the law of the random matrix $\lambda \bx\bx^\top + \bW$.
    For $\lambda$ fixed, the log-likelihood of $\bx$ for some observed data $\bY \in \RR^{n \times n}$ is
    \[ \log \PP_{\lambda, \bx}[\bY] = -\frac{n}{4}\|\bY - \lambda \bx\bx^\top \|_F^2 = c_{\bY} + \frac{\lambda n}{2} \bx^\top \bY \bx, \]
    where $c_{\bY}$ depends on $\bY$ but not on $\bx$.

    Thus, drawing $\bY \sim \PP_{\lambda, \bx^\star}$ and maximizing the likelihood gives a random optimization problem,
    \begin{equation}
    \begin{array}{ll}
    \text{maximize} & \bx^\top \bY \bx \\
    \text{subject to} & \bx \in \{\pm 1 / \sqrt{n}\}^n \\
    \text{where} & \bY \sim \PP_{\lambda, \bx^\star}.
    \end{array}
    \end{equation}
    Success in the associated estimation problem corresponds to recovering $\bx^\star$ as the solution to this problem; the ``overlap'' $\la \bx, \bx^\star \ra$ is often used as a quantitative measure of success in this task.

    A natural ``signal-free'' version of this problem arises by setting $\lambda = 0$.
    In this case, note that $\PP_{0, \bx^\star} = \PP_0$ does not actually depend on $\bx^\star$, leaving us with the optimization
    \begin{equation}
    \label{eq:sk-intro}
    \begin{array}{ll}
    \text{maximize} & \bx^\top \bW \bx \\
    \text{subject to} & \bx \in \{\pm 1 / \sqrt{n}\}^n \\
    \text{where} & \bW \sim \GOE(n).
    \end{array}
    \end{equation}
    Up to scaling and a change of sign, this task is the same as that of identifying the ground state configuration or energy in the Sherrington-Kirkpatrick (SK) spin glass model \cite{sk,pan-sk}. For this reason we refer to~\eqref{eq:sk-intro} as the \emph{SK problem}.
    Note that there is no ``planted'' solution $\bx^\star$ with respect to which we may measure an algorithm's performance; rather, the quality of the $\bx$ an algorithm obtains is measured only by the value of $\bx^\top \bW \bx$.
\end{example}

\begin{comment}
Statistical-to-computational gaps in signal-free optimization problems appear to be more difficult to analyze than those in Bayesian estimation problems.
One reason for this may be that, while showing the computational hardness of Bayesian estimation may be framed as showing the hardness of distinguishing between the ``null'' and a ``planted'' model (in this case $\PP_{0, \bx^\star}$ and $\PP_{\lambda, \bx^\star}$ respectively), in a signal-free optimization problem it is not always clear what ``comparison distribution'' is useful to consider.
Our main result gives an example where a subtle choice of the correct comparison distribution is required to prove computational hardness.
\end{comment}

\paragraph{Computational tasks: optimization vs.\ certification.}

Let us contrast two computational tasks of interest for a given optimization problem.
The first, most obvious task is that of \emph{optimization}, producing an algorithm computing $\mathsf{alg}_{\mathsf{opt}}: \Omega \to \sS$ such that $f_{\omega}(\mathsf{alg}_{\mathsf{opt}}(\omega))$ is as large as possible (say, in expectation, or with high probability as the size of the problem diverges).

Another task is that of \emph{certification}, producing instead an algorithm computing a number $\mathsf{alg}_{\mathsf{cert}}: \Omega \to \RR$, such that \emph{for all} $\omega \in \Omega$ and all $x \in \mathcal{S}$ we have $f_{\omega}(x) \leq \mathsf{alg}_{\mathsf{cert}}(\omega)$.
The main additional challenge of certification over optimization is that $\mathsf{alg}_{\mathsf{cert}}$ must produce a valid upper bound on $f_{\omega}$ for \emph{every} possible value of the data $\omega \in \Omega$, no matter how unlikely $\omega$ is to occur under $\PP$.
Subject to this requirement, we seek to minimize $\mathsf{alg}_{\mathsf{cert}}(\omega)$ (again, in a suitable probabilistic sense when $\omega \sim \PP$).
\emph{Convex relaxations} are a common approach to certification, where $\sS$ is relaxed to a convex superset $\sS^\prime \supset \sS$ admitting a sufficiently simple description that it is possible to optimize exactly over $\sS^\prime$ using convex optimization.

If $\bx^\star = \bx^\star(\omega)$ is the true maximizer of $f_{\omega}$, then for any pair of optimization and certification algorithms as above, we have
\begin{equation}
    f_{\omega}(\mathsf{alg}_{\mathsf{opt}}(\omega)) \leq f_{\omega}(\bx^\star) \leq \mathsf{alg}_{\mathsf{cert}}(\omega).
\end{equation}
Thus, in the case of a maximization problem, optimization algorithms approximate the true value $f_{\omega}(\bx^\star)$ from below, while certification algorithms approximate it from above.
We are then interested in how tight either inequality for random problems in growing dimension. Of course, we can achieve ``perfect'' optimization and certification by exhaustive search over all $\bx \in \sS$, but we are interested only in computationally efficient algorithms.

To make these definitions concrete, we review an instance of each type of algorithm for the problem of Example~\ref{ex:sk}.

\setcounter{theorem}{0}
\begin{example}[Continued]
    In the SK problem~\eqref{eq:sk-intro}, two related \emph{spectral algorithms} give simple examples of algorithms for both optimization and certification.

    For certification, writing $\lambda_{\max}$ for the largest eigenvalue of $\bW$, we may use the bound
    \begin{equation}
    \label{eq:spectral-cert}
    \bx^\top \bW \bx \leq \lambda_{\max} \cdot \|\bx\|^2 = \lambda_{\max} \approx 2
    \end{equation}
    for all $\bx \in \{ \pm 1 / \sqrt{n}\}^n$, whereby $\lambda_{\max}$ is a certifiable upper bound on~\eqref{eq:sk-intro}.
    From classical random matrix theory (see, e.g., \cite{AGZ-book}), it is known that $\lambda_{\max} \approx 2$ as $n \to \infty$.

    For optimization, for $\bv_{\max}$ the eigenvector of $\lambda_{\max}$, we may take $\bx = \bx(\bW) \colonequals \sgn(\bv_{\max}) / \sqrt{n}$ where $\sgn$ denotes the $\{\pm 1\}$-valued sign function, applied entrywise.
    The vector $\bv_{\max}$ is distributed as an isotropically random unit vector in $\RR^n$, so the quality of this solution may be computed as
    \begin{equation}
    \bx^\top \bW \bx = \lambda_{\max} \cdot \la \bx, \bv_{\max} \ra^2 + O\left(\frac{1}{\sqrt{n}}\right) = \lambda_{\max} \cdot \frac{\|\bv_{\max}\|_1^2}{n} + O\left(\frac{1}{\sqrt{n}}\right) \approx \frac{4}{\pi} \approx 1.2732
    \end{equation}
    with high probability as $n \to \infty$.
    (The error in the first equation is obtained as $\sum_{i}\lambda_i \la \bv_i, \bx\ra^2 \approx \frac{1}{n}\Tr(\bW)(1 - \la \bv_{\max}, \bx \ra^2)$, where the sum is over all eigenvectors $\bv_i$ except $\bv_{\max}$. This analysis appeared in \cite{ALR-sk}, an early rigorous mathematical work on the SK model.)

    On the other hand, deep results of statistical physics imply that the true optimal value approaches
    \begin{equation}
        \bx^{\star^\top} \bW \bx^{\star} \approx 2\mathsf{P}_* \approx 1.5264
    \end{equation}
    as $n \to \infty$, where the constant $\mathsf{P}_*$ is expressed via the celebrated Parisi formula for the free energy of the SK model \cite{par-frsb,pan-sk,tal-parisi}.
    The approximate value we give above was estimated with numerical experiments in previous works (see, e.g., \cite{par-sk-num,cris-sk}).
\end{example}
\noindent
The recent result of \cite{mon-sk} implies, assuming a widely-believed conjecture from statistical physics, that for any $\varepsilon > 0$ there exists a polynomial-time optimization algorithm achieving with high probability a value of $2\mathsf{P}_* - \varepsilon$ on the SK problem.
This work builds on that of \cite{ABM-crem,sub-frsb}, and these works taken together formalize the heuristic idea from statistical physics that optimization is tractable for certain optimization problems exhibiting \emph{full replica symmetry breaking}.
On the other hand, there are few results addressing the SK \emph{certification} problem.
The only previous work we are aware of in this direction is \cite{MS-sdp}, where a simple semidefinite programming relaxation is shown to achieve the same value as the spectral certificate~\eqref{eq:spectral-cert}.

\paragraph{Our contributions.}
The main result of this paper, which we now state informally, shows that for the SK certification problem, the simple spectral certificate~\eqref{eq:spectral-cert} is optimal. See Corollary~\ref{cor:pca-hard} for the formal statement.
\begin{theorem}[Informal]\label{thm:main-informal}
Conditional on the correctness of the low-degree likelihood ratio method (see Section~\ref{sec:low-deg}), for any $\varepsilon > 0$, there is no polynomial-time algorithm that certifies the upper bound $2-\varepsilon$ on the SK problem~\eqref{eq:sk-intro} with probability $1-o(1)$ as $n \to \infty$.
% Conditional on the ``low-degree polynomials conjecture'' concerning the computational hardness of random problems, for any $\varepsilon > 0$, there is no polynomial-time algorithm that certifies the upper bound $2-\varepsilon$ on the SK problem~\eqref{eq:sk-intro} with probability $1-o(1)$ as $n \to \infty$.
\end{theorem}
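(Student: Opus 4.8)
The plan is to follow the two-step route signalled in the abstract: reduce the SK certification problem to a detection problem in the negatively-spiked Wishart model, and then give low-degree evidence that this Wishart problem is computationally hard. For the reduction, fix an aspect ratio and consider distinguishing $\PP_0 = \sN(0,\bm I_n)^{\otimes N}$ from $\PP_1 = \EE_{\bv\sim\mathrm{Unif}(\sS)}\,\sN(0,\bm I_n-\beta\bv\bv^\top)^{\otimes N}$, where $\sS=\{\pm 1/\sqrt n\}^n$, $N=N(n)$ is polynomial, and $\beta=\beta(n)\in(0,1)$; the negative spike is the natural choice, since the log-likelihood of $\bv$ given the samples is, up to an additive constant, a negative multiple of $\bv^\top\widehat{\bm\Sigma}\bv$ with $\widehat{\bm\Sigma}=\tfrac1N\sum_i\bu_i\bu_i^\top$, so maximum-likelihood estimation of $\bv$ is literally a quadratic maximization over $\sS$. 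Given a sample, I would produce a symmetric matrix $\bW=\bW(\widehat{\bm\Sigma})$ by an explicit recentering/rescaling engineered so that under $\PP_0$ the law of $\bW$ is within $o(1)$ total variation of $\GOE(n)$. Now suppose $\mathsf{alg}$ were a polynomial-time algorithm certifying the bound $2-\varepsilon$ on $\max_{\bx\in\sS}\bx^\top\bW\bx$ with probability $1-o(1)$ for $\bW\sim\GOE(n)$. Under $\PP_0$ the transformed matrix is TV-close to $\GOE(n)$, so $\mathsf{alg}(\bW)\le 2-\varepsilon$ with probability $1-o(1)$. Under $\PP_1$, the key point is that a valid certification algorithm satisfies $\mathsf{alg}(\bW)\ge\max_{\bx\in\sS}\bx^\top\bW\bx$ on \emph{every} input, with no distributional hypothesis — so it suffices to show $\max_{\bx\in\sS}\bx^\top\bW\bx>2-\varepsilon$ with probability $1-o(1)$ under $\PP_1$, whence $\mathsf{alg}(\bW)>2-\varepsilon$. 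Thresholding $\mathsf{alg}(\bW)$ at $2-\varepsilon$ would then solve the Wishart detection problem in polynomial time.

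Thus I need two things about the transformation. (a) Under $\PP_0$, $\bW$ is TV-close to $\GOE(n)$: a quantitative central limit theorem for the Wishart fluctuation $\sqrt N(\widehat{\bm\Sigma}-\bm I)$, whose entries are asymptotically independent Gaussians with exactly the GOE variance profile, valid once $N/n$ grows fast enough relative to $n$. (b) Under $\PP_1$, the constrained maximum of the resulting quadratic form exceeds $2-\varepsilon$ with high probability; here the planted spike in $\widehat{\bm\Sigma}$ induces an effective planted direction $\bv\in\sS$ in $\bW$, so $\bv^\top\bW\bv$ concentrates at a value one controls through the Wishart parameters, and one then wants a high-probability lower bound on $\max_{\bx\in\sS}\bx^\top\bW\bx$ (via $\bx=\bv$, or a first-moment analysis of the planted landscape).

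For the second step I would bound the power of low-degree tests via the (truncated) low-degree likelihood ratio. A short Gaussian computation gives the clean closed form $\|L\|^2=\EE_{\bv,\bv'\sim\mathrm{Unif}(\sS)}\big[(1-\beta^2\la\bv,\bv'\ra^2)^{-N/2}\big]$, and projecting onto degree $\le D$ replaces $(1-x)^{-N/2}$ by its degree-$\le D/2$ Taylor polynomial in $x=\beta^2\la\bv,\bv'\ra^2$, i.e.\ $\|L^{\le D}\|^2=\EE_{\bv,\bv'}\big[\sum_{k\le D/2}\binom{N/2+k-1}{k}\beta^{2k}\la\bv,\bv'\ra^{2k}\big]$. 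Using $\la\bv,\bv'\ra=\tfrac1n\sum_i\epsilon_i$ with $\epsilon_i$ i.i.d.\ signs — sub-Gaussian, with $\EE[\la\bv,\bv'\ra^{2k}]\approx(2k-1)!!\,n^{-k}$ for moderate $k$ and a controlled large-deviation tail — each term is $\approx(\beta^2 N/n)^k/\sqrt{\pi k}$, so $\|L^{\le D}\|^2=1+o(1)$ provided $\beta^2 N/n$ stays below $1$ (up to lower-order corrections) and $D\le n^{o(1)}$; invoking the low-degree conjecture then gives hardness of Wishart detection in that regime, and with it the theorem.

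The main difficulty — and the crux of the argument — is to choose $(\beta,N)$ and the transformation so that (a), (b), and the low-degree bound hold \emph{simultaneously}. This is genuinely delicate: since the spectral norm is polynomial-time computable, any distribution computationally indistinguishable from $\GOE(n)$ has spectral norm $2+o(1)$, hence constrained maximum at most $2+o(1)$, so (b) asks the planted ensemble to saturate this ceiling up to $\varepsilon$; yet the naive constructions — e.g.\ planting a rank-one spike along a vertex of $\sS$ with strength below the spectral threshold — leave the constrained maximum at the Sherrington--Kirkpatrick value $\approx 2\mathsf{P}_*\approx 1.5264$ and fall far short. Making the reduction work therefore requires precisely the ``subtle choice of the correct comparison distribution'' alluded to in the introduction: the transformation and parameter regime must be arranged so that the transformed planted Wishart looks spectrally and low-degree like $\GOE(n)$, while still possessing a vertex of $\sS$ on which the quadratic form is nearly as large as the spectral norm. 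A secondary technical point is the high-probability lower bound in (b), which is a statement about the planted optimization landscape and needs its own argument rather than a generic concentration estimate.
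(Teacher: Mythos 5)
Your overall architecture matches the paper's (a reduction from detection in the negatively-spiked Wishart model to SK certification, plus a low-degree likelihood ratio bound for that detection problem), and your closed form for $\|L^{\le D}\|^2$ as a truncated Taylor expansion of $(1-\beta^2\la\bx^1,\bx^2\ra^2)^{-N/2}$ in the overlap is exactly what the paper derives, with the same small/large-deviation analysis of the overlap. But there is a genuine gap in the reduction itself, and it is precisely the point you flag as ``the crux'' without resolving. Your proposed map from Wishart samples to a matrix $\bW$ --- recenter and rescale $\widehat{\bm\Sigma}$ so that under the null it is TV-close to $\GOE(n)$ via a quantitative CLT --- cannot work in the relevant parameter regime. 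TV-closeness of the (recentered, rescaled) Wishart ensemble to $\GOE(n)$ requires $N \gg n^3$, whereas the hard regime you need is $N=\Theta(n)$ with $n/N \to \gamma$ slightly above $1$ and $\beta$ near $-1$; for $N=\Theta(n)$ the null spectrum of $\sqrt{N}(\widehat{\bm\Sigma}-\bm I)$ is Marchenko--Pastur-shaped rather than semicircular and the two ensembles are at TV distance $1-o(1)$. Conversely, if you push $N$ large enough for the CLT, the hardness condition $\beta^2 < n/N$ forces $\beta\to 0$ and the planted quadratic form cannot approach $2$. So requirements (a), (b), and the low-degree bound are mutually incompatible for any recentering/rescaling of $\widehat{\bm\Sigma}$.

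The missing idea is to discard everything about the samples except their span. The paper sets $V=\mathrm{span}\{\by_1,\ldots,\by_N\}$ (with $N<n$), draws a \emph{fresh} $\bW'\sim\GOE(n)$, and builds $\bW=\sum_i\lambda_i\bv_i\bv_i^\top$ where $\lambda_1<\cdots<\lambda_n$ are the eigenvalues of $\bW'$, the top $n-N$ eigenvalues are assigned to a random orthonormal basis of $V^\perp$, and the rest to a random orthonormal basis of $V$. Under the null, $V$ is a uniformly random $N$-dimensional subspace, so by rotational invariance $\bW$ is \emph{exactly} $\GOE(n)$ --- no CLT needed. Under the planted model with $\beta\approx-1$, the spike $\bx$ is nearly orthogonal to every sample, hence $\|\bP_V\bx\|^2=\frac{1+\beta}{(1-\sqrt\gamma)^2}+o(1)$ is small; since $\bx$ then lies almost entirely in the top eigenspace, whose eigenvalues all exceed $2-g(\gamma)$ with $g(\gamma)\to0$ as $\gamma\to1^+$ by the semicircle law, one gets $\bx^\top\bW\bx\ge 2-\varepsilon/3$ with high probability. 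This simultaneously achieves your (a) exactly, your (b) by an explicit eigenspace computation rather than a landscape argument, and keeps $(\gamma,\beta)$ in the low-degree-hard regime $\beta^2<\gamma$. Without this (or an equivalent) construction, your proof does not go through. A secondary, smaller gap: your claim that projecting $L$ onto degree $\le D$ simply truncates the Taylor series of the second-moment integrand is correct but is itself the main technical content of the low-degree computation (the paper proves it via a Hermite expansion with mismatched variances), so it cannot be asserted as a ``short Gaussian computation.''
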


\noindent
Theorem~\ref{thm:main-informal} reveals a striking gap between optimization and certification: it is possible to efficiently give a tight lower bound on the maximum objective value by exhibiting a solution $\bx$, but impossible to efficiently give a tight upper bound. In other words, an algorithm can efficiently find a near-optimal solution, but cannot be sure that it has done so.
The same result also holds for a wide variety of constraints other than $\bx \in \{\pm 1/\sqrt{n}\}^n$ (see Corollary~\ref{cor:pca-hard}). Due to the high-dimensional setting of the problem, we expect the value of a certification algorithm to concentrate tightly; thus we also expect Theorem~\ref{thm:main-informal} to still hold if $1-o(1)$ is replaced by any positive constant.

Our result has important consequences for convex programming. A natural approach for optimizing the SK problem~\eqref{eq:sk-intro} would be to use a convex programming relaxation such as a semidefinite program based on the sum-of-squares hierarchy \cite{shor-sos,par-sos,las-sos}. Such a method would relax the constraints of the SK problem to weaker ones for which the associated optimization problem can be solved efficiently. One can either hope that the relaxation is \emph{tight} and gives a valid solution $\bx \in \{\pm 1/\sqrt{n}\}^n$ (with high probability), or use a \emph{rounding} procedure to extract a valid solution from the relaxation. The optimal value of any convex relaxation of~\eqref{eq:sk-intro} provides an upper bound on the optimal value of~\eqref{eq:sk-intro} and therefore gives a certification algorithm. Thus Theorem~\ref{thm:main-informal} implies that (conditional on the correctness of the low-degree likelihood ratio method) no polynomial-time convex relaxation of~\eqref{eq:sk-intro} can have value $\le 2-\varepsilon$ (resolving a question posed by \cite{JKR-meanfield}) and in particular cannot be tight.
%Consequently, any convex programming approach for optimization should fail to find a solution of value close to $2\mathsf{P}_*$ (even if a rounding procedure is used).
As a result, we expect that natural relax-and-round approaches for optimization should fail to find a solution of value close to $2\mathsf{P}_*$.
This would suggest a fundamental weakness of convex programs: even the most powerful convex programs (such as sum-of-squares relaxations) seem to fail to optimize~\eqref{eq:sk-intro}, even though other methods succeed (namely, the message-passing algorithm of \cite{mon-sk}).\footnote{In contrast, simple rounded convex relaxations are believed to approximate many similar problems optimally in the worst-case (rather than average-case) setting \cite{KKMO-csp}.} An explanation for this suboptimality is that convex relaxations are actually solving a fundamentally harder problem: certification.

\paragraph{Related work.}

The SK problem is not the first known instance of a problem where perfect optimization is tractable but perfect certification appears to be hard. One example comes from random constraint satisfaction problems (CSPs).

\begin{example}
    In random $\mathsf{MAX}$-$3\mathsf{SAT}$, we draw a uniformly random $n$-variable $m$-clause 3-CNF formula and maximize $s_{\bC}(\bx)$, the number of satisfied clauses, over $\bx \in \{0,1\}^n$.

    If $m/n \to \infty$ as $n \to \infty$, the optimal value $\max_{\bx} s_{\bC}(\bx)$ is $(7/8 + o(1)) m$ with probability $1-o(1)$ \cite{coja-refutation}. This is achieved by the trivial optimization algorithm that chooses a uniformly random assignment $\bx$. On the other hand, sum-of-squares lower bounds suggest that it is hard to certify even $s_{\bC}(\bx) < m$ unless $m \gg n^{3/2}$ \cite{sos-refute}.
\end{example}

\noindent Prior work has used \emph{sum-of-squares lower bounds} to argue for hardness of certification in problems such as random CSPs \cite{sos-refute}, planted clique \cite{DM-clique,MPW-clique,sos-clique}, tensor injective norm \cite{tensor-pca,sos-power}, graph coloring \cite{BKM-coloring}, community detection in hypergraphs \cite{hypergraph}, and others. These results prove that the sum-of-squares hierarchy (at some degree) fails to certify. If sum-of-squares fails at every constant degree (e.g., \cite{sos-clique,sos-refute,sos-power}), this suggests that all polynomial-time algorithms should also fail. In our case, it appears difficult to prove sum-of-squares lower bounds for the SK problem, so we instead take a new approach based on a related heuristic for computational hardness, which we explain in the next section.

\paragraph{Overview of techniques.}

The proof of Theorem~\ref{thm:main-informal} has two parts. First, we give a reduction from hypothesis testing in the \emph{negatively-spiked Wishart model} \cite{J-spike,BBP,BS-wishart,PWBM-contig} to the SK certification problem. We then use a method introduced in the sum-of-squares literature based on the \emph{low-degree likelihood ratio} \cite{HS-bayesian,sos-power,sam-thesis} to give evidence that detection in the negatively-spiked Wishart model is computationally hard (in the relevant parameter regime).

In the spiked Wishart model, we observe either $N$ i.i.d.\ samples $\by_1, \dots, \by_N \sim \sN(0,\bm I_n)$, or $N$ i.i.d.\ samples $\by_1, \dots, \by_N \sim \sN(0,\bm I_n + \beta \bx \bx^\top)$ where the ``spike'' $\bx \in \{\pm 1/\sqrt{n}\}^n$ is a uniformly random hypercube vector, and $\beta \in (-1,\infty)$. The goal is to distinguish between these two cases with probability $1-o(1)$ as $n \to \infty$. In the \emph{negatively-spiked} ($\beta < 0$) case with $\beta \approx -1$, this task amounts to deciding whether there is a hypercube vector $\bx \in \{\pm 1/\sqrt{n}\}^n$ that is nearly orthogonal to all of the samples $\by_i$. When $N = \Theta(n)$, a simple spectral method succeeds when $\beta^2 > n/N$ \cite{BBP,BS-wishart}, and we expect the problem to be computationally hard when $\beta^2 < n/N$.

Let us now intuitively explain the relation between the negatively-spiked Wishart model and the SK certification problem. Suppose we want to certify that
$$\mathsf{SK}(\bW) \colonequals \max_{\bx \in \{\pm 1/\sqrt{n}\}^n} \bx^\top \bW \bx \le 2 - \varepsilon$$
where $\bW \sim \GOE(n)$, for some small constant $\varepsilon > 0$. Since the eigenvalues of $\bW$ follow the semicircle distribution on $[-2,2]$ \cite{wigner-semicircle}, we need to certify that the top $\delta n$-dimensional eigenspace of $\bW$ does not (approximately) contain a hypercube vector, for some small $\delta > 0$ depending on $\varepsilon$. In particular, we need to distinguish a uniformly random $\delta n$-dimensional subspace (the distribution of the actual top eigenspace of $\bW \sim \GOE(n)$) from a $\delta n$-dimensional subspace that contains a hypercube vector. Equivalently, taking orthogonal complements, we need to distinguish a uniformly random $(1-\delta) n$-dimensional subspace from a $(1-\delta) n$-dimensional subspace that is orthogonal to a hypercube vector. This is essentially the problem of detection in the negatively-spiked Wishart model with $\beta \approx -1$ and $N = (1-\delta)n$, and these parameters lie in the ``hard regime'' $\beta^2 < n/N$.

Formally, we construct a distribution $\sD(n)$ over $n \times n$ symmetric matrices with $\mathsf{SK}(\bW) \geq 2-\varepsilon/2$ when $\bW \sim \sD(n)$. This $\sD(n)$ also has the property that, conditional on the hardness of the above detection problem, it is computationally hard to distinguish $\bW \sim \sD(n)$ from $\bW \sim \GOE(n)$. The existence of such $\sD(n)$ implies hardness of certification for the SK problem, because if an algorithm could certify that $ \mathsf{SK}(\bW) \le 2-\varepsilon$ when $\bW \sim \GOE(n)$, then it could distinguish $\sD(n)$ from $\GOE(n)$.

The idea of ``planting'' a hidden solution (in our case, a hypercube vector $\bx$) in such a way that it is difficult to detect is referred to as \emph{quiet planting} \cite {quiet-1,quiet-2}. Our quiet planting scheme $\sD(n)$ draws $\bW \sim \GOE(n)$ and then rotates the top eigenspace of $\bW$ to align with a random hypercube vector $\bx$, while leaving the eigenvalues of $\bW$ unchanged. (The more straightforward planting scheme, $\bW + (2-\varepsilon/2) \bx \bx^\top$ with $\bW \sim \GOE(n)$, is not quiet because it changes the largest eigenvalue of $\bW$ \cite{fp}.) The question of how to design optimal quiet planting schemes in general remains an interesting open problem.

The final ingredient in our proof is to argue that detection in the spiked Wishart model is computationally hard below the spectral threshold. We do this through a calculation involving the projection of the likelihood ratio between the ``null'' and ``planted'' distributions of this model onto the subspace of low-degree polynomials. This method may be viewed as an implementation of the intuitive idea that the correct strategy for quiet planting is to match the low-degree moments of the distributions $\sD(n)$ and $\GOE(n)$. We discuss the details of this method further in Section~\ref{sec:low-deg}.

Our results on hardness in the spiked Wishart model may be of independent interest: our calculations indicate that, for a large class of spike priors, no polynomial-time algorithm can successfully distinguish the spiked and unspiked models below the classical spectral threshold \cite{BBP,BS-wishart}, both in the negatively-spiked and positively-spiked regimes.

\section{Background}

\subsection{Probability Theory}

All our asymptotic notation (e.g., $O(\cdot), o(\cdot)$) pertains to the limit $n \to \infty$. We consider parameters of the problem (e.g., $\beta,\gamma,\sX,\sS$) to be held fixed as $n \to \infty$. Thus, the constants hidden by $O(\cdot)$ and $o(\cdot)$ do not depend on $n$ but may depend on the other parameters.

\begin{definition}
    If $(\Omega_n, \sF_n, \PP_n)$ is a sequence of probability spaces, and $A = (A_n)_{n \in \NN}$ is a sequence of events with $A_n \in \sF_n$, then we say $A$ holds \emph{with high probability} if $\lim_{n \to \infty} \PP_n[A_n] = 1$.
\end{definition}

\begin{definition}\label{def:subg}
A real-valued random variable $\pi$ with $\EE[\pi] = 0$ is \emph{subgaussian} if there exists $\sigma^2$ (the \emph{variance proxy}) such that, for all $t \in \RR$, $M(t) \colonequals \EE[\exp(t \pi)] \le \exp(\sigma^2 t^2 / 2)$.
\end{definition}
\noindent
% It need not be the case that $\sigma^2$ is the variance of $\pi$, but $\sigma^2 \ge \mathrm{Var}[\pi]$ if $\pi$ is subgaussian with variance proxy $\sigma^2$.
The name \emph{subgaussian} refers to the fact that if $\pi \sim \sN(0, \sigma^2)$, then $M(t) = \exp(\sigma^2 t^2 / 2)$. A random variable with law $\sN(0, \sigma^2)$ is therefore subgaussian. Any bounded centered random variable is also subgaussian: if $\pi \in [a, b]$ almost surely, then $\pi$ is subgaussian with $\sigma^2 = \frac{1}{4}(b - a)^2$ (see, e.g., \cite{rig-notes}).

We next give some background facts from random matrix theory (see, e.g., \cite{AGZ-book}).
\begin{definition}
    The \emph{Gaussian orthogonal ensemble} $\GOE(n)$ is a probability distribution over symmetric matrices $\bW \in \RR^{n \times n}$, under which $W_{ii} \sim \sN(0, 2/n)$ and $W_{ij} \sim \sN(0, 1/n)$ when $i \neq j$, where the entries $W_{ij}$ are independent for distinct pairs $(i, j)$ with $i \leq j$.
\end{definition}
\noindent The choice of variances ensures the following crucial invariance property of $\GOE(n)$.
\begin{proposition}
    \label{prop:goe-invariance}
    For any $\bQ \in \sO(n)$, if $\bW \sim \GOE(n)$, then the law of $\bQ\bW\bQ^\top$ is also $\GOE(n)$.
\end{proposition}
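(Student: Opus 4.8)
The plan is to use the rotation-invariant form of the Lebesgue density of $\GOE(n)$. Writing the $n(n+1)/2$ independent coordinates of a symmetric matrix as $(W_{ij})_{i \le j}$, the prescribed variances $W_{ii} \sim \sN(0, 2/n)$ and $W_{ij} \sim \sN(0,1/n)$ for $i < j$, together with independence, give a joint density with respect to Lebesgue measure on these coordinates proportional to $\exp\!\big(-\tfrac{n}{4}\sum_i W_{ii}^2 - \tfrac{n}{2}\sum_{i<j} W_{ij}^2\big)$. Since $\|\bW\|_F^2 = \Tr(\bW^2) = \sum_i W_{ii}^2 + 2\sum_{i<j} W_{ij}^2$, this density equals $c_n \exp(-\tfrac{n}{4}\Tr(\bW^2))$ for a normalizing constant $c_n$; the point of the rewriting is that the exponent depends on $\bW$ only through the orthogonal invariant $\Tr(\bW^2)$.

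Next I would analyze the map $\phi_\bQ : \mathrm{Sym}(n) \to \mathrm{Sym}(n)$, $\phi_\bQ(\bW) \colonequals \bQ\bW\bQ^\top$, which is well-defined (its image is symmetric) and linear. By cyclicity of the trace and $\bQ^\top\bQ = \bm I$, $\Tr(\phi_\bQ(\bW)^2) = \Tr(\bQ\bW\bQ^\top\bQ\bW\bQ^\top) = \Tr(\bW^2)$, so the density above is invariant under $\phi_\bQ$. It remains to check that $\phi_\bQ$ preserves Lebesgue measure on $\mathrm{Sym}(n)$. Equipping $\mathrm{Sym}(n)$ with the inner product $\la \bm A, \bm B\ra \colonequals \Tr(\bm A\bm B)$, the same cyclicity computation shows $\phi_\bQ$ is an isometry of this Euclidean space, hence $\det \phi_\bQ = \pm 1$; since $|\det|$ of a linear endomorphism is independent of the chosen basis, $\phi_\bQ$ also has $|\det| = 1$ in the coordinates $(W_{ij})_{i\le j}$. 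A linear map that both preserves Lebesgue measure and leaves the density $c_n\exp(-\tfrac n4\Tr(\cdot^2))$ invariant preserves the probability measure $\GOE(n)$, so the law of $\bQ\bW\bQ^\top$ is again $\GOE(n)$.

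An alternative I would keep in reserve, which sidesteps the density computation entirely, is to realize $\bW \stackrel{d}{=} \tfrac{1}{\sqrt{2n}}(\bG + \bG^\top)$ with $\bG$ having i.i.d.\ $\sN(0,1)$ entries — a one-line variance check confirms this matches the $\GOE(n)$ variances and independence structure — and then observe $\bQ\bW\bQ^\top \stackrel{d}{=} \tfrac1{\sqrt{2n}}\big(\bQ\bG\bQ^\top + (\bQ\bG\bQ^\top)^\top\big)$, where $\mathrm{vec}(\bQ\bG\bQ^\top) = (\bQ\otimes\bQ)\,\mathrm{vec}(\bG)$ and $\bQ\otimes\bQ \in \sO(n^2)$; hence $\bQ\bG\bQ^\top$ again has i.i.d.\ $\sN(0,1)$ entries and $\bQ\bW\bQ^\top \stackrel{d}{=} \bW$.

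The only genuinely delicate point in the first approach is the Jacobian bookkeeping: one must be careful that the density is taken with respect to Lebesgue measure in the coordinates $(W_{ij})_{i \le j}$, that this spanning set of $\mathrm{Sym}(n)$ is not orthonormal for $\la\cdot,\cdot\ra = \Tr(\cdot\,\cdot)$ (the off-diagonal generators $\bm E_{ij}+\bm E_{ji}$ have squared norm $2$), and that this mismatch is irrelevant because the determinant of a linear map does not depend on the basis. Everything else is routine. The second approach has no such subtlety, at the cost of invoking the elementary facts that a linear image of a Gaussian vector is Gaussian and that a Kronecker product of orthogonal matrices is orthogonal.
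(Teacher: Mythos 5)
Your proof is correct; the paper states this proposition as a standard background fact and defers to \cite{AGZ-book} rather than proving it, and your first argument (the density $c_n\exp(-\tfrac{n}{4}\Tr(\bW^2))$ is a function of an orthogonal invariant, and conjugation by $\bQ$ is a Lebesgue-measure-preserving linear map on $\mathrm{Sym}(n)$) is precisely the standard proof, with the Jacobian subtlety correctly handled. Your alternative via $\bW \stackrel{d}{=} \tfrac{1}{\sqrt{2n}}(\bG+\bG^\top)$ is also sound and needs nothing further.
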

\noindent Our scaling of the entries of $\GOE(n)$ is chosen to ensure a spectrum of constant width.
\begin{proposition}
    \label{prop:goe-spectrum}
    Let $\bW_n \sim \GOE(n)$.
    Then, almost surely, $\lambda_{\min}(\bW_n) \to -2$ and $\lambda_{\max}(\bW_n) \to 2$ as $n \to \infty$.
    In particular, for any $\varepsilon > 0$, $\|\bW_n\| \leq 2 + \varepsilon$ with high probability.
\end{proposition}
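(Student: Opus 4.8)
\textbf{Proof proposal for Proposition~\ref{prop:goe-spectrum}.}
The plan is to reduce everything to the single statement that $\lambda_{\max}(\bW_n)\to 2$ almost surely, and then prove that via a lower bound from the semicircle law and an upper bound from the trace/moment method. First I would observe that since $-\bW_n\sim\GOE(n)$ and $\lambda_{\min}(\bW_n)=-\lambda_{\max}(-\bW_n)$, the claim $\lambda_{\min}(\bW_n)\to -2$ is equivalent to $\lambda_{\max}(\bW_n)\to 2$; and since $\|\bW_n\|=\max\{\lambda_{\max}(\bW_n),\,-\lambda_{\min}(\bW_n)\}$, the ``in particular'' clause follows immediately from the two limits (almost sure convergence implies convergence in probability, hence the high-probability bound $\|\bW_n\|\le 2+\varepsilon$). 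So it suffices to handle $\lambda_{\max}$.

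For the lower bound $\liminf_n \lambda_{\max}(\bW_n)\ge 2$ almost surely, I would invoke the Wigner semicircle law (see, e.g., \cite{AGZ-book}): the empirical spectral distribution $\frac1n\sum_{i}\delta_{\lambda_i(\bW_n)}$ converges weakly, almost surely, to the semicircle law $\mu_{\mathrm{sc}}$ with density $\frac{1}{2\pi}\sqrt{4-x^2}$ on $[-2,2]$. Since $\mu_{\mathrm{sc}}$ assigns positive mass to every open subinterval of $(-2,2)$, in particular to $(2-\varepsilon,\,2-\varepsilon/2)$, weak convergence forces $\bW_n$ to have an eigenvalue exceeding $2-\varepsilon$ for all large $n$; thus $\liminf_n\lambda_{\max}(\bW_n)\ge 2-\varepsilon$ almost surely, and intersecting over $\varepsilon\in\{1/m:m\in\NN\}$ gives the claim. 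For the upper bound $\limsup_n\lambda_{\max}(\bW_n)\le 2$ almost surely, I would use the moment method: expand $\EE[\Tr\bW_n^{2k}]=\sum \EE[W_{i_1i_2}W_{i_2i_3}\cdots W_{i_{2k}i_1}]$ over closed walks of length $2k$ on $[n]$, apply Wick's formula to the Gaussian entries, and note that with the $\Theta(1/n)$ variance scaling the dominant contribution comes from walks whose multigraph is a tree with each edge traversed exactly twice; a Füredi--Komlós-type count then yields $\EE[\Tr\bW_n^{2k}]\le C_k\, n\,(1+o(1))$, uniformly for $k$ up to $k_n:=\lfloor(\log n)^2\rfloor$, where $C_k=\frac{1}{k+1}\binom{2k}{k}\le 4^k$ is the Catalan number. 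By Markov's inequality, $\PP[\lambda_{\max}(\bW_n)\ge 2+\varepsilon]\le(2+\varepsilon)^{-2k_n}\EE[\Tr\bW_n^{2k_n}]\le n\,(1+o(1))\,(1+\varepsilon/2)^{-2k_n}$, which decays faster than any power of $n$ and is therefore summable; Borel--Cantelli and intersecting over $\varepsilon\in\{1/m\}$ finish the upper bound. (Alternatively, one can upgrade convergence in probability to almost sure convergence via Gaussian concentration: $\lambda_{\max}$ is $\sqrt{2/n}$-Lipschitz as a function of the i.i.d.\ $\mathcal N(0,1)$ variables generating $\bW_n$, so $\PP[|\lambda_{\max}(\bW_n)-\EE\lambda_{\max}(\bW_n)|>t]\le 2\exp(-nt^2/4)$ is summable, reducing matters to showing $\EE\lambda_{\max}(\bW_n)\to 2$, for which the upper bound still comes from $\EE\lambda_{\max}(\bW_n)\le(\EE[\Tr\bW_n^{2k_n}])^{1/2k_n}\le 2\,n^{1/2k_n}(1+o(1))\to 2$.)

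I expect the main obstacle to be the combinatorial bookkeeping in the trace expansion: one must show that walks that are not ``double trees'' contribute only a lower-order amount, and crucially that the estimate $\EE[\Tr\bW_n^{2k}]\le C_k n(1+o(1))$ holds uniformly as $k=k_n$ grows with $n$ (fixed $k$ is not enough, since $n^{1/2k}$ does not tend to $1$). The lower-bound half and the deduction of the ``in particular'' statement are then routine.
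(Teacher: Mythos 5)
The paper does not prove this proposition: it is stated as a background fact from classical random matrix theory with a citation to \cite{AGZ-book}, so there is no in-paper argument to compare against. Your proposal is a correct and standard proof of that classical result: the symmetry reduction to $\lambda_{\max}$, the lower bound from almost-sure weak convergence to the semicircle law (Portmanteau applied to an open subinterval near $2$), and the upper bound from the trace/moment method with $k$ growing like $(\log n)^2$ followed by Markov and Borel--Cantelli are exactly the textbook route, and you correctly identify the one genuinely delicate point, namely that the Füredi--Komlós bound $\EE[\Tr \bW_n^{2k}] \le C_k\, n\,(1+o(1))$ must hold uniformly as $k = k_n \to \infty$ (fixed $k$ only gives $n^{1/2k} \not\to 1$). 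The alternative concentration route you mention (Lipschitz constant $\sqrt{2/n}$ plus Gaussian concentration, reducing to convergence of $\EE\lambda_{\max}$) is also valid. No gaps beyond the acknowledged combinatorial bookkeeping, which is standard.
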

\noindent Furthermore, by Wigner's semicircle law \cite{wigner-semicircle}, the empirical distribution of eigenvalues of $\bW_n$ converges weakly to a semicircle distribution supported on $[-2,2]$.

\subsection{Constrained PCA}

\begin{definition}
A \emph{constraint set} is a sequence $\sS = (\sS_n)_{n \in \NN}$ where $\sS_n \subseteq \RR^n$. The \emph{constrained principal component analysis (PCA) problem with constraint set $\sS$}, denoted $\PCA(\sS)$, is
\[ \begin{array}{ll}
    \text{maximize} & \bx^\top \bW \bx \\
    \text{subject to} & \bx \in \sS_n \\
    \text{where} & \bW \sim \GOE(n).
\end{array}  \]
\end{definition}
\noindent We will work only with constraint sets supported on vectors of approximately unit norm.

\begin{example}
    Problems that may be described in the constrained PCA framework include:
    \begin{itemize}
        \item the Sherrington-Kirkpatrick (SK) spin glass model: $\sS_n = \{\pm 1 / \sqrt{n}\}^n$ \cite{sk,pan-sk},
        \item the sparse PCA null model: $\sS_n = \{\bx \in \RR^n \;:\; \|\bx\| = 1, \|\bx\|_0 \le \rho\}$ \cite{info-sparse-pca,mi-rank-1},
        \item the spherical $2p$-spin spin glass model: $\sS_{pn} = \{\bx^{\otimes p} \;:\; \bx \in \RR^n, \,\|\bx\| = 1\}$ \cite{pspin-1,pspin-2},
        \item the positive PCA null model: $\sS_n = \{\bx \in \RR^n \;:\; x_i \ge 0,\, \|\bx\| = 1\}$ \cite{positive-pca}.
    \end{itemize}
\end{example}
\noindent Our results apply to the first two examples: the SK model, and sparse PCA when $\rho = \Theta(n)$.

\begin{definition}
Let $f$ be a (randomized) algorithm that takes a square matrix $\bm W$ as input and outputs a number $f(\bm W) \in \RR$. We say that $f$ \emph{certifies} a value $B$ on $\PCA(\sS)$ if
\begin{enumerate}
    \item for \emph{any} symmetric matrix $\bm W \in \RR^{n \times n}$, $\max_{\bx \in \sS_n} \bx^\top \bm W \bx \le f(\bm W)$, and
    \item if $\bm W_n \sim \GOE(n)$ then $f(\bm W_n) \le B + o(1)$ with high probability.
\end{enumerate}
\end{definition}
\noindent We allow $f$ to be randomized (i.e., it may use randomness in its computations, but the output $B$ must be an upper bound almost surely). We do not expect certification algorithms to require randomness, but it may be convenient, e.g., to randomly initialize an iterative optimization procedure.

\subsection{Spiked Wishart Models}

\begin{definition}
A \emph{normalized spike prior} is a sequence $\sX = (\sX_n)_{n \in \NN}$ where $\sX_n$ is a probability distribution over $\RR^n$, such that if $\bx \sim \sX_n$ then $\|\bx\| \to 1$ in probability as $ n \to \infty$.
\end{definition}

\begin{definition}[Spiked Wishart model]
    \label{def:spiked-wishart}
    Let $\sX$ be a normalized spike prior, let $\gamma > 0$, and let $\beta \in [-1, \infty)$.
    Let $N = \lceil n/\gamma \rceil$. We define two probability distributions over $(\RR^n)^N$:
    \begin{enumerate}
    \item Under $\QQ$, the \emph{null} model, draw $\bm y_i \sim \sN(\bm 0, \bm I_n)$ independently for $i \in [N]$.
    \item Under $\PP$, the \emph{planted} model, draw $\bx \sim \sX_n$. If $\beta\|\bx\|^2 \geq -1$, then draw $\bm y_i \sim \sN(\bm 0, \bm I_n + \beta\bx\bx^\top)$ independently for $i \in [N]$. Otherwise, draw $\bm y_i \sim \sN(\bm 0, \bm I_n)$ independently for $i \in [N]$.
    \end{enumerate}
    Taken together, $\PP$ and $\QQ$ form the \emph{spiked Wishart model} $(\PP, \QQ) \equalscolon \Wishart(n, \gamma, \beta, \sX)$.
    For fixed $\gamma$ and $\beta$ we denote the sequence $(\Wishart(n, \gamma, \beta, \sX))_{n \in \NN}$ by $\Wishart(\gamma, \beta, \sX)$.
\end{definition}

\noindent
Several remarks on this definition are in order.
First, we make the explicit choice $N = \lceil n/\gamma \rceil$ for concreteness, but our results apply to any choice of $N = N(n)$ for which $n/N \to \gamma$ as $n \to \infty$.

Second, often the Wishart model is described in terms of the distribution of the sample covariance matrix $\frac{1}{N} \sum_{i=1}^N \by_i \by_i^\top$. We instead work directly with the samples $\by_i$ so as not to restrict ourselves to algorithms that only use the sample covariance matrix.
(This modification only makes our results on computational hardness of detection more general.)

Finally, the definition of $\PP$ has two cases to ensure that the covariance matrix $\bm I_n + \beta \bx \bx^\top$ is positive semidefinite. We will work in the setting $\beta > -1$ where the first case ($\beta \|\bx\|^2 \ge -1$) occurs with high probability. Priors for which this case occurs almost surely will be especially important, so we define the following terminology for this situation.
\begin{definition}
    Let $\beta \in (-1, \infty)$ and let $\sX$ be a normalized spike prior.
    We say that $\sX$ is \emph{$\beta$-good} if when $\bx \sim \sX_n$ then $\beta\|\bx\|^2 > -1$ almost surely.
\end{definition}

We will often consider spike priors having i.i.d.\ entries, and will sometimes need to slightly modify the spike prior to ensure that it is $\beta$-good and has bounded norm.
\begin{definition}\label{def:iid}
Let $\pi$ be a probability distribution over $\RR$ such that $\EE[\pi] = 0$ and $\EE[\pi^2] = 1$. Let $\iid(\pi/\sqrt{n})$ denote the normalized spike prior $\sX = (\sX_n)$ that draws each entry of $\bx$ independently from $\frac{1}{\sqrt n} \pi$. (We do not allow $\pi$ to depend on $n$.)
\end{definition}

\begin{definition}\label{def:trunc}
For a normalized spike prior $\sX$, let the $\beta$-truncation $\trunc_\beta(\sX)$ of $\sX$ denote the following normalized spike prior. To sample $\bx$ from $(\trunc_\beta(\sX))_n$, first sample $\bx' \sim \sX_n$. Then, let $\bx = \bx^\prime$ if $\beta\|\bx^\prime\|^2 > -1$ and $\|\bx^\prime\|^2 \leq 2$, and let $\bx = \bm 0$ otherwise.
\begin{comment}
\begin{equation}
    \label{eq:trunc}
    \bx = \left\{\begin{array}{ll} \bx' & \text{if } \beta \|\bx'\|^2 > -1 \text{ and } \|\bx'\|^2 \le 2 \\ 0 & \text{otherwise.} \end{array}\right.
\end{equation}
\end{comment}
\end{definition}

\noindent If $\beta > -1$ then since $\sX$ is normalized ($\|\bx'\| \to 1$ in probability), the first case of Definition~\ref{def:trunc} occurs with high probability. The upper bound $\|\bx'\| \le 2$ is for technical convenience, and the constant 2 is not essential. Note also that the $\beta$-truncation of an i.i.d.\ prior is no longer i.i.d.

We consider the algorithmic task of distinguishing between $\PP$ and $\QQ$ in the following sense.

\begin{definition}\label{def:detection}
For sequences of distributions $\PP = (\PP_n)_{n \in \NN}$ and $\QQ = (\QQ_n)_{n \in \NN}$ over measurable spaces $(\Omega_n, \sF_n)_{n \in \NN}$, an algorithm $f_n: \Omega_n \to \{0,1\}$ achieves \emph{strong detection} between $\PP$ and $\QQ$ if $$\QQ_n[f_n(\by) = 0] = 1-o(1) \qquad\text{and}\qquad \PP_n[f_n(\by) = 1] = 1-o(1).$$
\end{definition}
\noindent
The celebrated BBP transition \cite{BBP} implies a spectral algorithm for strong detection in the spiked Wishart model whenever $\beta^2 > \gamma$.
\begin{theorem}[\cite{BBP,BS-wishart}]
\label{thm:bbp}
Let $\sX$ be any normalized spike prior. If $\beta^2 > \gamma$ then there exists a polynomial-time algorithm for strong detection in $\Wishart(\gamma,\beta,\sX)$.
\end{theorem}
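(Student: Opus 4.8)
The plan is to use the obvious spectral test on the sample covariance matrix $\what{\bm\Sigma} \colonequals \frac1N\sum_{i=1}^N \by_i\by_i^\top$ and to show that one of its extreme eigenvalues separates the two hypotheses. Write $g(\beta) \colonequals (1+\beta)(1+\gamma/\beta) = 1+\gamma+\beta+\gamma/\beta$ for the BBP outlier location. First I would record the elementary fact that, under the hypothesis $\beta^2 > \gamma$, one has $g(\beta) > (1+\sqrt\gamma)^2$ if $\beta>0$, while $g(\beta) < (1-\sqrt\gamma)^2$ if $-1<\beta<0$ (and that in the latter case $\beta^2>\gamma$ forces $\gamma<1$, so the lower bulk edge $(1-\sqrt\gamma)^2$ is positive and, under the null, $\what{\bm\Sigma}$ is a.s.\ nonsingular with $\lambda_{\min}$ at that edge). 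The algorithm then outputs $1$ iff $\lambda_{\max}(\what{\bm\Sigma}) > \frac12[(1+\sqrt\gamma)^2 + g(\beta)]$ when $\beta>0$, and iff $\lambda_{\min}(\what{\bm\Sigma}) < \frac12[(1-\sqrt\gamma)^2+g(\beta)]$ when $\beta<0$; in each case the threshold lies strictly between the predicted null and planted limits.

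Next I would analyze the null model $\QQ$: here $\what{\bm\Sigma}$ is a normalized null Wishart matrix of aspect ratio $n/N\to\gamma$, so by the classical Bai--Yin / Marchenko--Pastur bulk-edge results (see \cite{AGZ-book}) we have $\lambda_{\max}(\what{\bm\Sigma})\to(1+\sqrt\gamma)^2$ and, when $\gamma<1$, $\lambda_{\min}(\what{\bm\Sigma})\to(1-\sqrt\gamma)^2$ almost surely; hence $\QQ[f(\by)=0]=1-o(1)$. For the planted model $\PP$ I would condition on the spike $\bx$ and work on the event $\{\,1-\delta<\|\bx\|^2<1+\delta\,\}$ for a small fixed $\delta>0$, which has probability $1-o(1)$ since $\sX$ is normalized and (for $\delta$ small and $\beta>-1$) lies inside the non-degenerate branch of Definition~\ref{def:spiked-wishart}. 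On this event I would write $\by_i = (\bm I_n+\beta\bx\bx^\top)^{1/2}\bz_i$ with $\bz_i\sim\sN(\bm 0,\bm I_n)$ i.i.d., so that $\what{\bm\Sigma} = (\bm I_n+\beta\bx\bx^\top)^{1/2}\,\what{\bm\Sigma}_0\,(\bm I_n+\beta\bx\bx^\top)^{1/2}$ with $\what{\bm\Sigma}_0$ a null Wishart; by orthogonal invariance of the standard Gaussian, the conditional law of the spectrum of $\what{\bm\Sigma}$ depends on $\bx$ only through $s=\|\bx\|^2$, i.e.\ it is exactly that of the rank-one spiked Wishart ensemble with effective spike strength $\beta'=\beta s$. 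Since $\beta^2>\gamma$ strictly and $s\to 1$, we have $(\beta')^2>\gamma$ with probability $1-o(1)$, so the BBP transition \cite{BBP,BS-wishart} (positive spike) and its negative-spike counterpart give $\lambda_{\max}(\what{\bm\Sigma})\to g(\beta)$, respectively $\lambda_{\min}(\what{\bm\Sigma})\to g(\beta)$, where continuity of $g$ absorbs the small fluctuation of $\beta'$ about $\beta$. This yields $\PP[f(\by)=1]=1-o(1)$, so the test achieves strong detection. The boundary case $\beta=-1$ I would treat analogously, restricting to the non-degenerate branch of $\PP$, which has probability one for $\beta$-good priors.

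The main obstacle is really just quoting the BBP phase transition correctly as a black box; the genuinely new content here is minimal. The external inputs are the null bulk-edge limits (Bai--Yin) and the outlier limits above and below the classical threshold for a rank-one perturbation of a Wishart matrix (Baik--Ben Arous--P\'ech\'e and Baik--Silverstein, together with the symmetric statement for a negative spike). The one place requiring a little care is the reduction in the planted analysis: the textbook BBP statements assume a deterministic, exactly unit-norm spike, whereas here the spike norm is random with $\|\bx\|\to 1$ and the definition of $\PP$ carries a degenerate fallback branch. Both are handled by conditioning on $\bx$, using orthogonal invariance to reduce to the scalar spike strength $\beta'=\beta\|\bx\|^2$, and invoking continuity of the limiting eigenvalue formulas; I do not expect any technical difficulty beyond this.
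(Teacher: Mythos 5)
Your proposal is correct and coincides with the paper's intended argument: the paper does not prove Theorem~\ref{thm:bbp} but cites \cite{BBP,BS-wishart}, and the remark immediately following it describes exactly your test (threshold the largest eigenvalue of $\frac{1}{N}\sum_i \by_i\by_i^\top$ if $\beta>0$, the smallest if $\beta<0$, using that it converges to different limits under $\PP$ and $\QQ$). Your additional care with the random spike norm via conditioning and orthogonal invariance, and with the degenerate branch of $\PP$, is exactly the right way to reduce to the textbook BBP statements.
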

\noindent The algorithm computes the largest eigenvalue (if $\beta > 0$) or smallest eigenvalue (if $\beta < 0$) of the sample covariance matrix $\frac{1}{N} \sum_{i=1}^N \by_i \by_i^\top$. This eigenvalue converges almost surely to a limiting value which is different under $\PP$ and $\QQ$.

We will argue (see Corollary~\ref{cor:wish-hard}) that if $\sX = \iid(\pi / \sqrt{n})$ with $\pi$ subgaussian, then no polynomial-time algorithm achieves strong detection below the BBP threshold (when $\beta^2 < \gamma$). For some priors, exponential-time strong detection is possible below the BBP threshold \cite{PWBM-contig}. Very sparse priors with $\bx$ supported on $O(\sqrt{n})$ entries give rise to the \emph{sparse PCA} regime, where polynomial-time strong detection is possible below the BBP threshold \cite{JL-sparse,DM-sparse}. However, a normalized prior with this level of sparsity cannot take the form $\iid(\pi/\sqrt{n})$, since $\pi$ cannot depend on $n$.

\subsection{The Low-Degree Likelihood Ratio}\label{sec:low-deg}

Inspired by the sum-of-squares hierarchy (e.g., \cite{shor-sos,par-sos,las-sos}) and in particular the pseudo-calibration approach \cite{sos-clique}, recent works \cite{HS-bayesian,sos-power,sam-thesis} have proposed a strikingly simple method for predicting computational hardness of Bayesian inference problems. This method recovers widely-conjectured computational thresholds for high-dimensional inference problems such as planted clique, densest-$k$-subgraph, random constraint satisfaction, community detection, and sparse PCA (see \cite{sam-thesis}). We now give an overview of this method.

Consider the problem of distinguishing two simple hypotheses $\PP_n$ and $\QQ_n$ which are probability distributions on some domain $\Omega_n = \RR^{d(n)}$ (where typically the dimension $d(n)$ grows with $n$). One example is the spiked Wishart model $\Wishart(\gamma,\beta,\sX)$ for some fixed choice of the parameters $\beta,\gamma,\sX$. The idea is to take low-degree polynomials as a proxy for polynomial-time algorithms and consider whether there are low-degree polynomials $f_n: \Omega_n \to \RR$ that can distinguish $\PP_n$ from $\QQ_n$.

We view $\QQ_n$ as the ``null'' distribution, which is often i.i.d.\ Gaussian (as in the Wishart model) or i.i.d.\ Rademacher ($\pm 1$-valued).
$\QQ_n$ induces an inner product on $L^2$ functions $f: \Omega_n \to \RR$ given by $\la f,g \ra_{L^2(\QQ_{n})} = \EE_{\by \sim \QQ_n}[f(\by)g(\by)]$, and a norm $\|f\|_{L^2(\QQ_n)}^2 = \la f,f \ra_{L^2(\QQ_n)}$. For $D \in \NN$, let $\RR[\by]_{\le D}$ denote the polynomials $\Omega_n \to \RR$ of degree at most $D$. For $f: \Omega_n \to \RR$, let $f^{\le D}$ denote the orthogonal projection (with respect to $\la \cdot,\cdot \ra_{L^2(\QQ_n)}$) of $f$ onto $\RR[\by]_{\le D}$. The following relates the distinguishing power of low-degree polynomials to the \emph{low-degree likelihood ratio}.

\begin{theorem}[\cite{HS-bayesian}]
Let $\PP_n$ and $\QQ_n$ be probability distributions on $\Omega_n$ for each $n \in \NN$. Suppose $\PP_n$ is absolutely continuous with respect to $\QQ_n$, so that the likelihood ratio $L_n = \frac{d\PP_n}{d\QQ_n}$ is defined. Then
\begin{equation}\label{eq:max-f}
\max_{f \in \RR[\by]_{\le D} \setminus \{0\}} \frac{\EE_{\by \sim \PP_n} f(\by)}{\sqrt{\EE_{\by \sim \QQ_n} f(\by)^2}} = \|L_n^{\le D}\|_{L^2(\QQ_n)}.
\end{equation}
\end{theorem}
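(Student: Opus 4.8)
The plan is to recognize the left-hand side as a Riesz-representation / operator-norm computation in the Hilbert space $L^2(\QQ_n)$. Set $V \colonequals \RR[\by]_{\le D}$. Since $\QQ_n$ is Gaussian in the applications of interest (more generally, it has finite moments of all orders), $V$ is a finite-dimensional subspace of $L^2(\QQ_n)$, and the orthogonal projection $f \mapsto f^{\le D}$ onto $V$ is well-defined; the quantity we want to compute is then exactly the norm, as a linear functional on $V$, of the map $f \mapsto \EE_{\by \sim \PP_n} f(\by)$.

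First I would rewrite the numerator via the change of measure: for any $f \in V$,
\[ \EE_{\by \sim \PP_n} f(\by) \;=\; \EE_{\by \sim \QQ_n}\!\big[ L_n(\by)\, f(\by) \big] \;=\; \la L_n, f \ra_{L^2(\QQ_n)}, \]
using $\PP_n \ll \QQ_n$ with density $L_n$. Next I would use orthogonality of the projection: since $f \in V$ while $L_n - L_n^{\le D} \perp V$ by definition of $L_n^{\le D}$, we have $\la L_n, f \ra_{L^2(\QQ_n)} = \la L_n^{\le D}, f \ra_{L^2(\QQ_n)}$, and hence
\[ \max_{f \in V \setminus \{0\}} \frac{\EE_{\by \sim \PP_n} f(\by)}{\sqrt{\EE_{\by \sim \QQ_n} f(\by)^2}} \;=\; \max_{f \in V \setminus \{0\}} \frac{\la L_n^{\le D}, f \ra_{L^2(\QQ_n)}}{\|f\|_{L^2(\QQ_n)}}. \]
Finally, Cauchy--Schwarz in $L^2(\QQ_n)$ bounds the right-hand side by $\|L_n^{\le D}\|_{L^2(\QQ_n)}$, and equality is attained at $f = L_n^{\le D}$, which is an admissible competitor because $L_n^{\le D} \in V$ and $L_n^{\le D} \neq 0$ (its degree-$0$ component equals $\EE_{\by \sim \QQ_n}[L_n] = 1$). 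This gives the claimed identity.

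The only step requiring real care --- and the main obstacle --- is the integrability bookkeeping behind the change of measure: one must justify $\EE_{\PP_n} f = \la L_n, f \ra_{L^2(\QQ_n)}$ for every $f \in V$, and, more fundamentally, ensure that $L_n^{\le D}$ is a bona fide element of $L^2(\QQ_n)$ so that the projection and the norm appearing on the right-hand side are finite and well-defined. In the stated generality this is precisely what the hypothesis $\PP_n \ll \QQ_n$ is there to guarantee (one may first check the identity for bounded $f$ and pass to all of $V$); in our concrete Wishart application one would in addition verify $L_n \in L^2(\QQ_n)$ directly. Everything else is the short Hilbert-space optimization above.
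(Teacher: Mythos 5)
Your argument is correct and is essentially the same as the paper's: rewrite the numerator as $\la f, L_n\ra_{L^2(\QQ_n)}$, replace $L_n$ by its projection $L_n^{\le D}$ since $f$ lies in the degree-$\le D$ subspace, and apply Cauchy--Schwarz with equality at $f = L_n^{\le D}$. The paper compresses all of this into ``by basic Hilbert space theory,'' so your version just makes the same one-line argument explicit.
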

\begin{proof}
The left-hand side can be rewritten as
$$\max_{f \in \RR[\by]_{\le D} \setminus \{0\}} \frac{\la f,L_n \ra_{L^2(\QQ_n)}}{\|f\|_{L^2(\QQ_n)}},$$
so by basic Hilbert space theory, the maximum is attained by taking $f = L_n^{\le D}$.
\end{proof}

\noindent The left-hand side of (\ref{eq:max-f}) heuristically measures whether any degree-$D$ polynomial can distinguish $\PP_n$ from $\QQ_n$. Thus we expect $\|L_n^{\le D}\|_{L^2(\QQ_n)} = \omega(1)$ if some degree-$D$ polynomial that achieves strong detection, and $\|L_n^{\le D}\|_{L^2(\QQ_n)} = O(1)$ otherwise.

We take $O(\log n)$-degree polynomials $\Omega_n \to \RR$ as a proxy for functions computable in polynomial-time. One justification for this is that many polynomial-time algorithms compute the leading eigenvalue of a matrix $\bM$ whose entries are constant-degree polynomials in the data; in fact, there is formal evidence that such \emph{low-degree spectral methods} are as powerful as the sum-of-squares hierarchy \cite{sos-power}. Typically, $O(\log n)$ rounds of power iteration are sufficient to compute the leading eigenvalue accurately, which amounts to evaluating the $O(\log n)$-degree polynomial $\Tr(\bM^{2q})$ for some $q = O(\log n)$. This motivates the following informal conjecture.

\begin{conjecture}[Informal \cite{HS-bayesian,sos-power,sam-thesis}] \label{conj:low-degree}
For ``nice'' distributions $\PP_n$ and $\QQ_n$, if $\|L_n^{\le D}\|_{L^2(\QQ_n)} = O(1)$ for some $D = \log^{1+\Omega(1)}(n)$, then there is no randomized polynomial-time algorithm for strong detection between $\PP$ and $\QQ$.
\end{conjecture}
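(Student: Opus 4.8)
I will be upfront that Conjecture~\ref{conj:low-degree} is not something one \emph{proves}: relating ``no degree-$D$ polynomial distinguishes $\PP_n$ from $\QQ_n$'' to ``no polynomial-time algorithm distinguishes them'' is, unconditionally, at least as hard as a complexity-class separation. The realistic plan has three parts: (i) assemble the heuristic and empirical evidence that makes the statement credible, (ii) isolate the structural hypotheses that the word ``nice'' must encode, and (iii) prove rigorous instances of the implication for broad but explicitly delimited classes of algorithms.

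For (i), the strongest evidence is that the low-degree criterion, applied uniformly, reproduces the \emph{conjectured} computational thresholds of a long list of problems whose hardness is believed for independent reasons --- planted clique, densest-$k$-subgraph, random constraint satisfaction, stochastic block models, tensor PCA, sparse PCA (see \cite{sam-thesis}) --- in the sense that $\|L_n^{\le D}\|_{L^2(\QQ_n)}$ passes from $O(1)$ to $\omega(1)$ exactly at the algorithmic threshold. I would catalogue these calculations and verify the match. A second strand is the tight connection to the sum-of-squares hierarchy: pseudo-calibration \cite{sos-clique} builds a candidate pseudo-expectation essentially out of the low-degree projection of $L_n$, and \cite{sos-power} gives formal evidence that degree-$D$ ``low-degree spectral methods'' are no weaker than degree-$\Theta(D)$ sum-of-squares; since sum-of-squares captures a large fraction of known algorithmic techniques, matching low-degree moments is a strong obstruction. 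A third strand is the power-iteration argument sketched in the text: a spectral algorithm computing the top eigenvalue of a matrix $\bM$ with constant-degree polynomial entries is well approximated by $\Tr(\bM^{2q})$ with $q = O(\log n)$, a polynomial of degree $O(\log n)$, and the choice $D = \log^{1+\Omega(1)}(n)$ sits just above this scale --- leaving a buffer while still excluding genuinely super-polynomial (e.g.\ $n^{\Omega(1)}$-degree) computations.

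For (iii), rather than targeting all polynomial-time algorithms, I would prove clean theorems in delimited models. One natural target is the statistical-query framework, where there are now quantitative equivalences between low-degree lower bounds and SQ lower bounds: establishing $\|L_n^{\le D}\|_{L^2(\QQ_n)} = O(1) \Rightarrow$ SQ-hardness for $(\PP,\QQ)$ of the relevant form would be a fully rigorous incarnation of the conjecture's philosophy, and this is exactly the regime the present paper lives in (the $\Wishart$ null model is i.i.d.\ Gaussian). A second target is ``local''/iterative algorithms --- approximate message passing, low-degree spectral methods, a constant number of message-passing rounds --- for which one can sometimes show directly that the algorithm's trajectory is a function of low-degree functionals of the data, hence cannot separate $\PP_n$ from $\QQ_n$ when those agree to degree $D$.

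The main obstacle is pinning down ``nice''. There are genuine counterexamples to any naive reading: algebraic algorithms --- solving a \emph{noiseless} system of linear equations by Gaussian elimination, or lattice-reduction (LLL-type) attacks on certain planted problems --- succeed despite low-degree polynomials failing, because they exploit exact algebraic structure to which degree-$D$ $L^2$ approximation is blind. So ``nice'' must at least rule out exact/noiseless observations and, more subtly, hidden symmetry or number-theoretic structure. Formulating a hypothesis on $(\PP_n, \QQ_n)$ that (a) holds for the spiked Wishart model and the other motivating examples and (b) is strong enough to license a rigorous hardness conclusion --- without simply assuming that conclusion --- is the crux of any honest attack, and is precisely why the statement is flagged here as informal and used only as a conditional assumption.
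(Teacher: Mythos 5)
The statement is a conjecture that the paper itself does not (and cannot) prove; it only offers heuristic justification, and your discussion matches the paper's own: the power-iteration/$\Tr(\bM^{2q})$ rationale for the $O(\log n)$ degree scale, the pseudo-calibration and low-degree-spectral-methods connection to sum-of-squares, the matching of conjectured thresholds across planted clique, CSPs, block models, and sparse PCA, and the need for ``nice'' to exclude brittle algebraic structure (the paper's XOR-SAT/Gaussian-elimination example, your noiseless-linear-system/LLL examples). Your assessment that the honest content is heuristic evidence plus rigorous instances in restricted models, with the crux being a precise formulation of ``nice,'' is exactly the stance the paper takes in deferring to \cite{sam-thesis} and treating the conjecture purely as a conditional assumption.
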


\noindent This conjecture is useful because the norm of the low-degree likelihood ratio, $\|L_n^{\le D}\|_{L^2(\QQ_n)}$, can be computed (or at least bounded) for various distributions such as the stochastic block model \cite{HS-bayesian} and the spiked tensor model \cite{sos-power,sam-thesis}.

\begin{remark}
We do not expect the converse of Conjecture~\ref{conj:low-degree} to hold. If $\|L_n^{\le D}\|_{L^2(\QQ_n)} = \omega(1)$ for some $D = O(\log n)$ then we expect an $n^{O(\log n)}$-time algorithm but not necessarily a polynomial-time algorithm, because not every $O(\log n)$-degree polynomial can be evaluated in polynomial time.
\end{remark}

Conjecture~\ref{conj:low-degree} is informal in that we do not specify what is meant by ``nice'' distributions. See \cite{sam-thesis} for a precise variant of Conjecture~\ref{conj:low-degree}; however, this variant uses the more refined notion of \emph{coordinate degree} and so does not apply to the calculations we will perform. Roughly speaking, ``nice'' distributions $\PP$ and $\QQ$ should satisfy the following:
\begin{enumerate}
    \item $\QQ$ should be a product distribution, e.g., i.i.d.\ Gaussian or i.i.d.\ Rademacher;
    \item $\PP$ should be sufficiently symmetric with respect to permutations of its coordinates; and
    \item we should be able to add a small amount of noise to $\PP$, ruling out distributions with brittle algebraic structure (such as random satisfiable instances of XOR-SAT, which can be identified using Gaussian elimination \cite{CD-xorsat}).
\end{enumerate}
We refer the reader to \cite{HS-bayesian,sam-thesis} for further details and evidence in favor of Conjecture~\ref{conj:low-degree}.

\section{Main Results}

\subsection{Spiked Wishart Models}

We expect that Conjecture~\ref{conj:low-degree} applies to the spiked Wishart model. The following states this assumption formally.

\begin{conjecture}\label{conj:wish-lowdeg}
Fix $\gamma > 0$, $\beta > -1$ and a normalized spike prior $\sX$. Let $\PP_n$ and $\QQ_n$ be the planted and null models, respectively, of $\Wishart(\gamma,\beta,\sX)$.
Define the likelihood ratio $L_n = L_{n,\gamma,\beta,\sX} = \frac{d\PP_{n}}{d\QQ_{n}}$. If there exists some $D = D(n) = \log^{1+\Omega(1)}(n)$ such that $\|L_{n}^{\le D}\|_{L^2(\QQ_n)} = O(1)$, then there is no randomized polynomial-time algorithm for strong detection in $\Wishart(\gamma,\beta,\sX)$.
\end{conjecture}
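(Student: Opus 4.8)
The statement is an instance of the informal Conjecture~\ref{conj:low-degree}, so strictly speaking there is nothing to \emph{prove}: the task is to argue that the spiked Wishart model is a ``nice'' pair of distributions in the (admittedly imprecise) sense demanded by that conjecture, after which specializing Conjecture~\ref{conj:low-degree} to $(\PP_n,\QQ_n) = \Wishart(n,\gamma,\beta,\sX)$ yields exactly the claimed implication. The plan is therefore to walk through the three informal ``niceness'' hypotheses listed after Conjecture~\ref{conj:low-degree} and check that each holds here.

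The first condition asks that $\QQ$ be a product distribution: under $\QQ_n$ the samples $\by_1,\dots,\by_N$ are i.i.d.\ $\sN(\bm 0, \bm I_n)$, so $\QQ_n$ is literally the product measure on $\RR^{nN}$ of i.i.d.\ standard Gaussians, which is the canonical ``null'' distribution for which the low-degree heuristic is formulated. The second condition asks that $\PP$ be sufficiently symmetric under coordinate permutations: $\PP_n$ is a mixture over $\bx \sim \sX_n$ of the product Gaussian with covariance $\bm I_n + \beta\bx\bx^\top$, and as long as $\sX_n$ is exchangeable (invariant under permutations of the $n$ coordinates) this mixture is invariant under simultaneously permuting the $n$ coordinates of all $N$ samples, and is trivially invariant under permuting the $N$ samples. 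Exchangeability holds for the priors we actually use, namely $\iid(\pi/\sqrt n)$ and its $\beta$-truncations (truncation depends only on $\|\bx'\|$, which is permutation-invariant); for a general $\sX$ one would simply impose exchangeability as a hypothesis. The third condition asks that the model be robust to a small amount of added noise, ruling out brittle algebraic structure: the observations are already Gaussian and carry no exact algebraic constraints of the XOR-SAT type, and convolving each sample with an independent $\sN(\bm 0, \epsilon \bm I_n)$ merely rescales the model to another spiked Wishart instance with covariance $(1+\epsilon)\bm I_n + \beta\bx\bx^\top$, something a polynomial-time algorithm can undo, so polynomial-time detectability is unaffected. Having checked the three conditions, one invokes Conjecture~\ref{conj:low-degree} to conclude that $\|L_n^{\le D}\|_{L^2(\QQ_n)} = O(1)$ for some $D = \log^{1+\Omega(1)}(n)$ implies no randomized polynomial-time algorithm achieves strong detection in $\Wishart(\gamma,\beta,\sX)$, which is the statement.

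The main obstacle is that this argument is inherently heuristic rather than rigorous. There is no formal version of Conjecture~\ref{conj:low-degree} general enough to cover our setting: the precise variant of~\cite{sam-thesis} is stated in terms of \emph{coordinate degree} rather than ordinary polynomial degree, and so does not directly apply to the degree-$\le D$ polynomials we will work with. Making Conjecture~\ref{conj:wish-lowdeg} unconditional would require first establishing a suitable general low-degree hardness conjecture, which is well beyond current techniques. Accordingly, all of the concrete mathematical work in the paper goes into the \emph{unconditional} computation of $\|L_n^{\le D}\|_{L^2(\QQ_n)}$ that serves as the hypothesis of this conjecture (see Corollary~\ref{cor:wish-hard}); Conjecture~\ref{conj:wish-lowdeg} itself is only the standard bridge from that computation to a statement about polynomial-time algorithms.
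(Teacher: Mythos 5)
This statement is a conjecture that the paper does not (and cannot) prove; the paper simply states it as the formal specialization of the informal low-degree Conjecture~\ref{conj:low-degree} to the spiked Wishart model, which is exactly the stance you take. Your walkthrough of the three ``niceness'' conditions and your observation that the paper's actual rigorous content lies in the unconditional bound on $\|L_n^{\le D}\|_{L^2(\QQ_n)}$ (Theorem~\ref{thm:wish-lowdeg}) accurately reflect the paper's treatment, so this is essentially the same approach.
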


\noindent We now give bounds on $\|L_{n,\gamma,\beta,\sX}^{\le D}\|_{L^2(\QQ_n)}$.

\begin{theorem}\label{thm:wish-lowdeg}
Fix constants $\gamma > 0$ and $\beta > -1$.
\begin{enumerate}
    \item Suppose $\beta^2 < \gamma$. Let $\sX = \trunc_\beta(\iid(\pi/\sqrt{n}))$ where $\pi$ is subgaussian with $\EE[\pi] = 0$ and $\EE[\pi^2] = 1$. Then, for any $D = o(n/\log n)$, we have $\|L_{n, \gamma, \beta, \sX}^{\leq D}\|_{L^2(\QQ_{n})} = O(1)$.
    \item Suppose $\beta^2 > \gamma$. Let $\sX = \iid(\pi/\sqrt{n})$ be $\beta$-good with $\pi$ symmetric about zero, $\EE[\pi] = 0$, and $\EE[\pi^2] = 1$. Then, for any $D = \omega(1)$, we have $\|L_{n, \gamma, \beta, \sX}^{\leq D}\|_{L^2(\QQ_{n})} = \omega(1)$.
\end{enumerate}
\end{theorem}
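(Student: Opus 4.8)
The plan is to compute $\|L_n^{\le D}\|_{L^2(\QQ_n)}^2$ explicitly via a second-moment-style identity. The standard tool (going back to the low-degree literature) is the formula for the squared norm of the low-degree projection of a likelihood ratio when $\QQ_n$ is a product of standard Gaussians: expanding $L_n$ in the Hermite basis and keeping only degree-$\le D$ terms, one obtains
\[
\|L_n^{\le D}\|_{L^2(\QQ_n)}^2 = \sum_{|\alpha| \le D} \widehat{L_n}(\alpha)^2,
\]
and each Hermite coefficient $\widehat{L_n}(\alpha) = \EE_{\by \sim \QQ_n}[L_n(\by) h_\alpha(\by)] = \EE_{\by \sim \PP_n} h_\alpha(\by)$ can be evaluated by first conditioning on the spike $\bx$ (so the $\by_i$ become i.i.d.\ Gaussians with covariance $\bm I_n + \beta \bx\bx^\top$) and then taking the expectation over $\bx$. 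A convenient way to organize this is to introduce an independent copy $\bx'$ of the spike and write
\[
\|L_n^{\le D}\|_{L^2(\QQ_n)}^2 = \EE_{\bx, \bx'}\Big[\,\text{(degree-}{\le}D\text{ truncation of)}\;(1 - \beta^2 \la \bx, \bx' \ra^2)^{-N/2}\,\Big],
\]
or more precisely the truncation of the power series in $\la \bx,\bx'\ra^2$ of the Gaussian-channel correlation. Since $N = \lceil n/\gamma\rceil$ and $\la \bx, \bx'\ra$ concentrates (it is of order $n^{-1/2}$ for i.i.d.\ priors, i.e.\ $n\la\bx,\bx'\ra^2 = O(1)$ typically), the exponent $N/2 \asymp n/(2\gamma)$ interacts with $\la\bx,\bx'\ra^2 \asymp 1/n$ to produce a quantity of order $1/\gamma$ in the exponent against $\beta^2$ — which is exactly where the threshold $\beta^2 = \gamma$ appears.

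For part (1) ($\beta^2 < \gamma$): I would expand $(1 - \beta^2 t)^{-N/2}$ as a power series $\sum_k c_k (\beta^2 t)^k$ with $t = \la\bx,\bx'\ra^2$ and $c_k = \binom{N/2 + k - 1}{k} \le (N/2 + k)^k / k!$; the degree-$\le D$ truncation keeps terms up to $k \approx D/2$. Then $\EE_{\bx,\bx'}[\la\bx,\bx'\ra^{2k}]$ must be bounded: because the prior is (a truncation of) an i.i.d.\ prior with subgaussian entries, $\sqrt{n}\la\bx,\bx'\ra$ is a sum of $n$ independent mean-zero subgaussian terms conditioned on $\bx'$, hence itself subgaussian with variance proxy $O(1)$, giving $\EE[\la\bx,\bx'\ra^{2k}] \le (Ck/n)^k$ for an absolute constant $C$. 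Combining, the $k$-th term is at most $\beta^{2k} (N/2+k)^k (Ck/n)^k / k! \le (C'\beta^2/\gamma)^k \cdot (k/k!)^{\text{(something)}}$; as long as $D = o(n/\log n)$ the ``$k \le D/2$'' range keeps $(N/2+k)k/n = O(1/\gamma)$, so the sum is dominated by a convergent geometric series $\sum_k (\beta^2/\gamma)^k \cdot O(1) = O(1)$ since $\beta^2/\gamma < 1$. The truncation $\trunc_\beta$ only restricts the support, which can only decrease the moments $\EE[\la\bx,\bx'\ra^{2k}]$ up to $o(1)$ corrections and ensures the spike has norm $\le 2$; I would handle this by a routine comparison argument. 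The main care needed is the bookkeeping of the binomial/factorial factors to confirm the series converges uniformly in $n$ — that is the principal technical obstacle, and it is where the precise hypothesis $D = o(n/\log n)$ gets used (to keep $c_k$ subexponential in $k$ relative to the moment decay).

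For part (2) ($\beta^2 > \gamma$): here I only need a \emph{lower} bound $\omega(1)$ for any $D = \omega(1)$, which is easier — I would keep a single well-chosen low-degree term (or a short initial segment) of the Hermite expansion rather than the whole series. Taking, say, the degree-$2$ part already gives a contribution $\asymp \beta^4 \cdot N^2 \cdot \EE[\la\bx,\bx'\ra^4]$; with $\EE[\la\bx,\bx'\ra^4] \asymp 1/n^2$ (using $\EE[\pi^2]=1$ and, crucially, the symmetry of $\pi$ to kill cross terms and control the fourth mixed moment) and $N \asymp n/\gamma$, this is $\asymp \beta^4/\gamma^2$, a positive constant. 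That alone doesn't diverge, so to get $\omega(1)$ I would instead sum a growing number $k \le D/2 = \omega(1)$ of terms: each term $k$ contributes roughly $(\beta^2 N \EE^{1/k}[\la\bx,\bx'\ra^{2k}]/\ldots)^k \approx (\beta^2/\gamma)^k$ (the leading-order behavior of $c_k \EE[\la\bx,\bx'\ra^{2k}] \beta^{2k}$ as $n\to\infty$ with $k$ fixed is $(\beta^2/\gamma)^k$ times a positive constant, by concentration of $n\la\bx,\bx'\ra^2$ around a constant), so $\sum_{k \le D/2}(\beta^2/\gamma)^k \to \infty$ since $\beta^2/\gamma > 1$. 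The symmetry assumption on $\pi$ is what guarantees these per-term limits are clean and positive (no sign cancellation), and I would verify via dominated convergence / Fatou that the finite truncated sum's $\liminf$ is at least the corresponding truncated sum of $(\beta^2/\gamma)^k$, then let the truncation grow. I expect part (1)'s uniform convergence estimate to be the harder of the two and the real heart of the proof; part (2) is a comparatively soft lower bound once the moment asymptotics are in hand.
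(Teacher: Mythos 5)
Your setup is right: the squared norm of the low-degree projection is exactly the expectation over two independent spikes $\bx^1,\bx^2$ of the degree-$\lfloor D/2\rfloor$ Taylor truncation of $(1-\beta^2\la\bx^1,\bx^2\ra^2)^{-N/2}$, and your Part~2 argument (use symmetry of $\pi$ to make all terms nonnegative, keep only the pairing contributions, and extract a term of size roughly $(\beta^2/\gamma)^d$ for $1\ll d\ll n$) is essentially the paper's proof of Part~2.

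Part~1, however, has a genuine gap at its core step. First, the claim that $\sqrt{n}\,\la\bx^1,\bx^2\ra$ is subgaussian with variance proxy $O(1)$ is false in general: the entrywise products $\pi\pi'$ are only sub\emph{exponential}, so the normalized sum is sub-gamma, with Gaussian-type tails only in a window $|t|\le\delta$. More importantly, even granting a moment bound $\EE[\la\bx^1,\bx^2\ra^{2k}]\le(Ck/n)^k$, the resulting geometric ratio is $\mathrm{const}\cdot\beta^2/\gamma$ with a constant determined by $C$ (not by $\EE[\pi^2]=1$), so your series only converges for $\beta^2 < c\gamma$ with some $c<1$ --- it does not reach the sharp threshold $\beta^2<\gamma$ claimed in the theorem. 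To get the sharp constant you would need the moments of $\la\bx^1,\bx^2\ra$ to match those of $\sN(0,1/n)$ up to a factor $(1+\eta)^k$ uniformly for $k$ up to $D/2$, which is precisely the hard CLT-type statement you are implicitly assuming. The paper avoids this by splitting the expectation at $|\la\bx^1,\bx^2\ra|\le\varepsilon$: on the small-deviation event it bounds the truncated polynomial by the full function $(1-\beta^2\la\bx^1,\bx^2\ra^2)^{-N/2}$ and uses a \emph{local} Chernoff bound $\Pr\{|\la\bx^1,\bx^2\ra|\ge t\}\le C\exp(-\tfrac12(1-\eta)nt^2)$ valid for $t\in[0,\delta]$, whose exponent is arbitrarily close to the CLT-sharp value and hence beats $\exp(\tfrac{N}{2}\beta^2t^2)$ exactly when $\beta^2<\gamma$; on the large-deviation event it uses the exponentially small probability against a crude $n^{O(D)}$ bound on the truncated polynomial. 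Note also that the hypothesis $D=o(n/\log n)$ is used only in this large-deviation step (the small-deviation bound is degree-independent), not in the small-$k$ bookkeeping where you place it.
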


\noindent We prove Theorem~\ref{thm:wish-lowdeg} in Section~\ref{sec:proof-wish}. Part 1 of Theorem~\ref{thm:wish-lowdeg}, combined with Conjecture~\ref{conj:wish-lowdeg}, implies that for i.i.d.\ subgaussian priors, strong detection is computationally hard below the BBP threshold.
\begin{corollary}\label{cor:wish-hard}
Suppose Conjecture~\ref{conj:wish-lowdeg} holds. Fix constants $\gamma > 0$ and $\beta > -1$. Let $\pi$ be subgaussian with $\EE[\pi]=0$ and $\EE[\pi^2]=1$. Let $\sX$ be either $\iid(\pi/\sqrt{n})$ or $\trunc_\beta(\iid(\pi/\sqrt{n}))$. If $\beta^2 < \gamma$, then there is no randomized polynomial-time algorithm for strong detection in $\Wishart(\gamma,\beta,\sX)$.
\end{corollary}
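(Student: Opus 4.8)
The plan is to deduce the corollary by combining Part~1 of Theorem~\ref{thm:wish-lowdeg} with Conjecture~\ref{conj:wish-lowdeg}, handling the two allowed forms of $\sX$ separately. The truncated case is essentially immediate, and the untruncated case is reduced to it via a total-variation argument that exploits the fact that truncation alters the spike prior only on an event of vanishing probability.

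First, suppose $\sX = \trunc_\beta(\iid(\pi/\sqrt{n}))$. I would fix $D = D(n) \colonequals \lceil \log^2 n \rceil$ (any fixed exponent strictly between $1$ and $2$ works equally well). This single choice satisfies both $D = \log^{1+\Omega(1)}(n)$ and $D = o(n/\log n)$, since $\log^3 n = o(n)$. Hence Part~1 of Theorem~\ref{thm:wish-lowdeg} applies with this $D$ and gives $\|L_{n,\gamma,\beta,\sX}^{\le D}\|_{L^2(\QQ_n)} = O(1)$, and then Conjecture~\ref{conj:wish-lowdeg} (invoked with the same $D$) yields that no randomized polynomial-time algorithm achieves strong detection in $\Wishart(\gamma,\beta,\sX)$.

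Next, suppose $\sX = \iid(\pi/\sqrt{n})$, and set $\sX' \colonequals \trunc_\beta(\sX)$, for which the conclusion was just established. The null models of $\Wishart(\gamma,\beta,\sX)$ and $\Wishart(\gamma,\beta,\sX')$ coincide (the null model does not depend on the prior), so it suffices to bound the total variation between the two planted models, $d_{\mathrm{TV}}(\PP_n,\PP'_n) = o(1)$. Since $\iid(\pi/\sqrt{n})$ is a normalized spike prior, $\|\bx'\|^2 \to 1$ in probability when $\bx' \sim \sX_n$; because $\beta > -1$, the truncation event $\{\beta\|\bx'\|^2 > -1\}\cap\{\|\bx'\|^2 \le 2\}$ then has probability $1-o(1)$, so coupling $\bx \sim \sX'_n$ to agree with $\bx' \sim \sX_n$ on that event gives $d_{\mathrm{TV}}(\sX_n,\sX'_n) = o(1)$. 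Both planted models are obtained from their respective spike priors by applying the \emph{same} Markov kernel $K$ (the map sending $\bx$ to $N$ i.i.d.\ draws from $\sN(\bm 0,\bm I_n + \beta\bx\bx^\top)$, or from $\sN(\bm 0,\bm I_n)$ when the sign condition of Definition~\ref{def:spiked-wishart} fails), so the data-processing inequality gives $d_{\mathrm{TV}}(\PP_n,\PP'_n) \le d_{\mathrm{TV}}(\sX_n,\sX'_n) = o(1)$. Consequently, if a randomized polynomial-time algorithm $f_n$ achieved strong detection in $\Wishart(\gamma,\beta,\sX)$, then $\PP'_n[f_n(\by)=1] \ge \PP_n[f_n(\by)=1] - d_{\mathrm{TV}}(\PP_n,\PP'_n) = 1-o(1)$ while $\QQ_n[f_n(\by)=0] = 1-o(1)$ is unchanged, so $f_n$ would achieve strong detection in $\Wishart(\gamma,\beta,\sX')$ as well, a contradiction.

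There is no serious obstacle here; the points that need care are purely bookkeeping. One must check that a single $D$ simultaneously meets the lower bound $\log^{1+\Omega(1)}(n)$ demanded by the conjecture and the upper bound $o(n/\log n)$ demanded by Theorem~\ref{thm:wish-lowdeg}, and one must treat the degenerate branch of Definition~\ref{def:spiked-wishart} carefully so that the identical kernel $K$ genuinely governs both planted models (in particular, the truncated prior may put mass at $\bm 0$, for which $\beta\|\bm 0\|^2 = 0 \ge -1$ and $K$ outputs standard Gaussians). It is also worth flagging that the hypothesis $\beta > -1$ is precisely what makes the truncation event negligible and hence what powers the reduction in the untruncated case.
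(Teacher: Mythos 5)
Your proposal is correct and follows the paper's own argument: the truncated case comes directly from Part~1 of Theorem~\ref{thm:wish-lowdeg} together with Conjecture~\ref{conj:wish-lowdeg}, and the untruncated case is reduced to it because the two priors agree except on an event of probability $o(1)$. You simply make explicit the bookkeeping (a single valid choice of $D$, the coupling/data-processing step) that the paper compresses into one sentence.
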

\begin{proof}
The case $\sX = \trunc_\beta(\iid(\pi/\sqrt{n}))$ follows immediately from Part~1 of Theorem~\ref{thm:wish-lowdeg}. If strong detection is impossible for $\sX = \trunc_\beta(\iid(\pi/\sqrt{n}))$, then strong detection is also impossible for $\sX = \iid(\pi/\sqrt{n})$, as these two spike priors differ with probability $o(1)$ (under the natural coupling).
\end{proof}
\noindent
A number of technical remarks on the content of Theorem~\ref{thm:wish-lowdeg} and Corollary~\ref{cor:wish-hard} are in order.

\begin{remark}
Even if Conjecture~\ref{conj:wish-lowdeg} does not hold, note that Theorem~\ref{thm:wish-lowdeg} still implies unconditional lower bounds against low-degree polynomials in the sense of \eqref{eq:max-f}.
\end{remark}

\begin{comment}
\begin{remark}
Even if Conjecture~\ref{conj:wish-lowdeg} does not hold, we still obtain a concrete result on the effectiveness of low-degree polynomials for detection in the spiked Wishart model: the conclusion of Part~1 of Theorem~\ref{thm:wish-lowdeg} implies that the value of the optimization \eqref{eq:max-f} must be $O(1)$ for any $D = o(n/\log n)$.
\end{remark}
\end{comment}

\begin{remark}
Part 2 of Theorem~\ref{thm:wish-lowdeg} serves only to check that we do not predict computational hardness when $\beta^2 > \gamma$ (as polynomial-time strong detection is possible in this regime; see Theorem~\ref{thm:bbp}). The assumption that $\pi$ is symmetric about zero should not be essential.
\end{remark}

\begin{remark}
In Part~1 of Theorem~\ref{thm:wish-lowdeg} and in Corollary~\ref{cor:wish-hard}, the requirement that $\sX$ be a $\beta$-truncated i.i.d.\ prior can be relaxed. We only require that $\sX$ it is the $\beta$-truncation of a normalized prior admitting a \emph{local Chernoff bound} (see Definition~\ref{def:local-chernoff}).
%Priors \emph{not} having this property include those with $\bx$ very sparse, e.g., supported on $O(\sqrt{n})$ entries; this is the \emph{sparse PCA} regime for which polynomial-time strong detection is possible below the BBP threshold \cite{JL-sparse,DM-sparse}.
\end{remark}

\begin{remark}
Part~1 of Theorem~\ref{thm:wish-lowdeg} holds for any $D = o(n/\log n)$, much larger than the $D = \log^{1+\Omega(1)}(n)$ required by Conjecture~\ref{conj:wish-lowdeg}. Since degree-$D$ polynomials should correspond to $n^{\widetilde{ O}(D)}$-time algorithms \cite{sam-thesis}, this suggests that the conclusion of Corollary~\ref{cor:wish-hard} also holds for $2^{n^{1-\delta}}$-time algorithms, for any $\delta > 0$. In other words, strong detection requires nearly-exponential time.
\end{remark}

% deg-D <=> time n^{~O(D)}
% D = n/log(n) <=> time n^{~O(n)}
% suppose we can do runtime 2^{n^{1-e}}
% 2^{n^{1-e}} = n^k
% n^{1-e} log(2) = k log(n)
% k = n^{1-e} log(2)/log(n) = ~O(n^{1-e})
% so D = n^{1-e} should blow up

\subsection{Constrained PCA}

We now give a reduction from strong detection in the spiked Wishart model to certification in the constrained PCA problem.

\begin{theorem}\label{thm:reduction}
Let $\sS$ be a constraint set and let $\sX$ be a normalized spike prior such that if $\bx \sim \sX_n$ then $\bx \in \sS_n$ with high probability. Suppose there exists $\varepsilon > 0$ and a randomized polynomial-time algorithm that certifies the value $2-\varepsilon$ on $\PCA(\sS)$. Then there exist $\gamma > 1$ and $\beta \in (-1,0)$ such that there is a randomized polynomial-time algorithm for strong detection in $\Wishart(\gamma,\beta,\sX)$.
\end{theorem}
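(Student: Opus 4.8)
The plan is to exhibit a randomized polynomial-time map $\Phi$ that turns an $N$-tuple $\by = (\by_1,\dots,\by_N) \in (\RR^n)^N$, with $N = \lceil n/\gamma\rceil$, into a symmetric matrix $\bW \in \RR^{n\times n}$ having two properties, for a suitable choice of $\gamma > 1$ and $\beta \in (-1,0)$: (i) if $\by$ is drawn from the null model $\QQ$ of $\Wishart(n,\gamma,\beta,\sX)$, then $\bW \sim \GOE(n)$ \emph{exactly}; and (ii) if $\by$ is drawn from the planted model $\PP$, then $\max_{\bx' \in \sS_n}\bx'^\top\bW\bx' \ge 2-\varepsilon/2$ with high probability. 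Given these, strong detection follows immediately: run the hypothesized certification algorithm $f$ for $\PCA(\sS)$ on $\bW = \Phi(\by)$ and declare ``planted'' iff $f(\bW) \ge 2-\varepsilon/2$. Under $\QQ$, property~(i) and the second clause of the definition of a certification algorithm give $f(\bW) \le 2-\varepsilon + o(1) < 2-\varepsilon/2$ with high probability; under $\PP$, property~(ii) together with the first clause (a deterministic upper bound valid for \emph{every} symmetric input) give $f(\bW) \ge \max_{\bx'\in\sS_n}\bx'^\top\bW\bx' \ge 2-\varepsilon/2$ with high probability. So the test achieves strong detection in $\Wishart(\gamma,\beta,\sX)$ in randomized polynomial time.

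The construction of $\Phi$ implements the heuristic from the introduction: I want a matrix whose ``top eigenspace'' of dimension $d \colonequals n - N$ is the orthogonal complement $V \colonequals \mathrm{span}(\by_1,\dots,\by_N)^\perp$, since under $\PP$ with $\beta$ near $-1$ the spike lies almost inside $V$. Concretely: compute by Gram--Schmidt some $\bU \in \sO(n)$ whose first $d$ columns form an orthonormal basis of $V$; draw a fresh $\bG \sim \GOE(n)$ independent of $\by$ and let $\bD = \mathrm{diag}(\mu_1 \ge \cdots \ge \mu_n)$ be its sorted eigenvalues; draw independent Haar-random $\bH_1 \in \sO(d)$, $\bH_2 \in \sO(n-d)$; set $\bQ = \bU\,\mathrm{diag}(\bH_1,\bH_2)$ and output $\bW = \bQ\bD\bQ^\top$. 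By construction, the span of the $d$ largest-eigenvalue eigenvectors of $\bW$ is exactly $V$, with eigenvalues $\mu_1,\dots,\mu_d$, and the span of the remaining eigenvectors is $V^\perp$. For property~(i): under $\QQ$ the span of the i.i.d.\ Gaussian vectors $\by_i$ is a uniformly random $N$-dimensional subspace, so $V$ is a uniformly random $d$-dimensional subspace; since $(\bG,\bH_1,\bH_2)$ are independent of $\by$, a standard disintegration of Haar measure (the map $\bQ \mapsto$ span of its first $d$ columns sends Haar on $\sO(n)$ to the uniform measure on $d$-subspaces, with conditional law uniform on the corresponding coset of $\sO(d)\times\sO(n-d)$) shows that $\bQ$ is Haar on $\sO(n)$ and independent of $\bD$. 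As $\bD$ carries the GOE eigenvalue law, the spectral description of the GOE (consistent with Proposition~\ref{prop:goe-invariance}) gives $\bW = \bQ\bD\bQ^\top \sim \GOE(n)$.

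For property~(ii) I first fix the constants. Choose $\gamma > 1$ so close to $1$ that, with $\delta \colonequals 1 - 1/\gamma$, the $(1-\delta)$-quantile $q_\delta$ of the semicircle law satisfies $q_\delta > 2 - \varepsilon/8$; this is possible since $q_\delta \to 2$ as $\delta \to 0$, and by Wigner's semicircle law the $\lceil \delta n\rceil$-th largest eigenvalue of $\GOE(n)$ (hence $\mu_d$) converges in probability to $q_\delta$. Then choose $\beta \in (-1,0)$ so close to $-1$ that $(1+\beta)\gamma^{-1}(1-\gamma^{-1/2})^{-2} < \varepsilon/8$. Now work on the probability-$(1-o(1))$ event that the spike $\bx \sim \sX_n$ satisfies $\beta\|\bx\|^2 > -1$, $\|\bx\|\to 1$, and $\bx \in \sS_n$, so that $\by_i \sim \sN(\bm 0,\bm I_n + \beta\bx\bx^\top)$. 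Writing $\by_i = (\bm I_n + \beta\bx\bx^\top)^{1/2}\bz_i$ with $\bz_i \sim \sN(\bm 0,\bm I_n)$ and splitting $\bz_i$ into components along and orthogonal to $\bx$, one finds $\bY^\top\bx = \sqrt{1+\beta\|\bx\|^2}\,\|\bx\|\,\bg$ for a standard Gaussian $\bg \in \RR^N$, so $\|\bY^\top\bx\|^2 = (1+\beta)N(1+o(1))$ with high probability, while $\bY^\top\bY$ exceeds, in the PSD order, the Gram matrix of the orthogonal-to-$\bx$ components, whose least eigenvalue is $n(1-\gamma^{-1/2})^2(1-o(1))$ with high probability by standard random matrix theory. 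Hence $\|\bP_{V^\perp}\bx\|^2 = (\bY^\top\bx)^\top(\bY^\top\bY)^{-1}(\bY^\top\bx) \le (1+\beta)\gamma^{-1}(1-\gamma^{-1/2})^{-2}(1+o(1)) < \varepsilon/8$ with high probability, so $\|\bP_V\bx\|^2 = \|\bx\|^2 - \|\bP_{V^\perp}\bx\|^2 \ge 1 - \varepsilon/8 - o(1)$. Since the $d$ largest eigenvalues of $\bW$ are $\ge \mu_d \ge q_\delta - o(1) > 2 - \varepsilon/4$ and $\lambda_{\min}(\bW) \ge -2 - o(1)$ with high probability,
\[ \bx^\top\bW\bx \;\ge\; \mu_d\,\|\bP_V\bx\|^2 + \lambda_{\min}(\bW)\,\|\bP_{V^\perp}\bx\|^2 \;\ge\; \Bigl(2-\tfrac{\varepsilon}{4}\Bigr)\Bigl(1-\tfrac{\varepsilon}{8}-o(1)\Bigr) - \Bigl(2+o(1)\Bigr)\tfrac{\varepsilon}{8} \;\ge\; 2-\tfrac{\varepsilon}{2} \]
with high probability (for $\varepsilon$ small and $n$ large, absorbing the $O(\varepsilon^2)$ and $o(1)$ terms); as $\bx \in \sS_n$ this is property~(ii). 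On the complementary $o(1)$-probability event the samples are pure noise, $\bW \sim \GOE(n)$, and the test merely errs with probability $o(1)$, which is harmless.

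The two steps I expect to require the most care are the following. First, a fully rigorous proof of property~(i) — that the law of $\bW$ under $\QQ$ is \emph{exactly} $\GOE(n)$ — rests on the Haar-invariance bookkeeping above and on spelling out the spectral description of the GOE; one should also note that sampling a GOE or Haar orthogonal matrix is only possible approximately in a real-arithmetic model, but this is harmless because the certification guarantee is stable under an $o(1)$ total-variation perturbation of the input. Second, the quantitative core of property~(ii): the claim that $\|\bP_{V^\perp}\bx\| \to 0$ as $\beta \to -1$ for fixed $\gamma$, i.e.\ that under $\PP$ with $\beta$ close to $-1$ the $N=(1-\delta)n$ samples nearly span an $N$-dimensional subspace orthogonal to $\bx$ — the Gram-matrix estimate sketched above (controlling $\lambda_{\min}(\bY^\top\bY)$ from below) is the crux, and it is what, together with $q_\delta \to 2$, pushes the quadratic form above $2-\varepsilon/2$. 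Everything else — polynomial-time implementability of $\Phi$, the union bounds over the high-probability events, and the choice of threshold — is routine. Finally, note that the pair $(\gamma,\beta)$ produced above automatically satisfies $\beta^2 < 1 < \gamma$, so this reduction can be combined with Corollary~\ref{cor:wish-hard} for appropriate i.i.d.\ subgaussian spike priors to deduce hardness of certification.
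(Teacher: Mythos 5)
Your proposal is correct in substance and follows essentially the same route as the paper: plant the orthogonal complement of the sample span as the top eigenspace of a freshly drawn GOE spectrum, use Haar invariance to get an exact $\GOE(n)$ law under the null, and control the spike's projection onto the sample span via the smallest eigenvalue of the Gram matrix as $\beta \to -1$ and $\gamma \to 1$ (your PSD comparison with the unspiked Gram matrix is a minor, slightly more self-contained variant of the paper's direct appeal to the Bai--Silverstein edge for spiked Wishart matrices). The only flaw is arithmetic: your final display actually yields $2 - 3\varepsilon/4 + O(\varepsilon^2)$, which does \emph{not} clear your threshold $2 - \varepsilon/2$ for small $\varepsilon$, but since it still exceeds $2-\varepsilon$ this is repaired by moving the threshold to, say, $2 - 7\varepsilon/8$, or by tightening the choices of $q_\delta$ and $\beta$ as the paper does.
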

\noindent We give the proof in Section~\ref{sec:reduction}. Note for the parameters $\gamma,\beta$ above, $\beta^2 < \gamma$ (the ``hard regime'').

\begin{corollary}\label{cor:pca-hard}
Suppose Conjecture~\ref{conj:wish-lowdeg} holds. Let $\pi$ be subgaussian with $\EE[\pi] = 0$ and $\EE[\pi^2] = 1$. Let $\sS$ be a constraint set such that, if $\bx \sim \iid(\pi/\sqrt{n})$, then $\bx \in \sS_n$ with high probability. Then, for any $\varepsilon > 0$, there is no randomized polynomial-time algorithm to certify the value $2-\varepsilon$ on $\PCA(\sS)$.
\end{corollary}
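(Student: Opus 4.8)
The plan is to deduce the corollary by chaining the two principal ingredients already established, arguing by contradiction. First I would suppose, toward a contradiction, that for some $\varepsilon > 0$ there is a randomized polynomial-time algorithm certifying the value $2-\varepsilon$ on $\PCA(\sS)$. Take $\sX \colonequals \iid(\pi/\sqrt{n})$; this is a legitimate normalized spike prior, since if $\bx \sim \sX_n$ then $\|\bx\|^2 = \frac1n\sum_{i} \pi_i^2 \to \EE[\pi^2] = 1$ in probability by the law of large numbers. By the hypothesis of the corollary, $\bx \in \sS_n$ with high probability when $\bx \sim \sX_n$, which is exactly the precondition needed to invoke Theorem~\ref{thm:reduction}.

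Next I would apply Theorem~\ref{thm:reduction} to the assumed certification algorithm and this $\sX$, obtaining constants $\gamma > 1$ and $\beta \in (-1, 0)$ together with a randomized polynomial-time algorithm for strong detection in $\Wishart(\gamma, \beta, \sX)$. The crucial bookkeeping point here is that these parameters lie in the ``hard regime'': since $\beta \in (-1,0)$ and $\gamma > 1$ we have $\beta^2 < 1 < \gamma$. Now, because Conjecture~\ref{conj:wish-lowdeg} is assumed, $\pi$ is subgaussian with mean zero and unit variance, $\sX = \iid(\pi/\sqrt{n})$, and $\beta^2 < \gamma$, Corollary~\ref{cor:wish-hard} asserts that \emph{no} randomized polynomial-time algorithm achieves strong detection in $\Wishart(\gamma, \beta, \sX)$. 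This contradicts the algorithm produced by the reduction, so no such certification algorithm can exist, proving the corollary.

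The corollary itself is thus essentially just this composition, and the main thing to get right is that the parameter regime handed back by the reduction always satisfies $\beta^2 < \gamma$, so that the Wishart hardness result genuinely applies; but this is guaranteed by the statement of Theorem~\ref{thm:reduction} (and flagged in the remark following it). A secondary point of care is that the spike prior appearing in the two ingredients must be literally the same object: here there is no mismatch, since Corollary~\ref{cor:wish-hard} is stated for exactly the untruncated prior $\iid(\pi/\sqrt{n})$, the $\beta$-truncation having already been absorbed inside its proof. The genuine difficulty in the argument is entirely upstream, in establishing Theorem~\ref{thm:reduction} (the quiet-planting reduction) and Part~1 of Theorem~\ref{thm:wish-lowdeg} (the low-degree calculation underlying Corollary~\ref{cor:wish-hard}), rather than in this final assembly.
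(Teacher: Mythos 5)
Your proof is correct and takes exactly the paper's route: the paper's own proof of this corollary is the one-line remark that it is immediate from Theorem~\ref{thm:reduction} and Theorem~\ref{thm:wish-lowdeg} (via Corollary~\ref{cor:wish-hard}), and you have simply spelled out that composition. Your two points of care --- that the reduction always returns parameters with $\beta^2 < 1 < \gamma$, and that Corollary~\ref{cor:wish-hard} covers the untruncated prior $\iid(\pi/\sqrt{n})$ --- are precisely the (minor) checks the paper leaves implicit.
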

\begin{proof}
The result is immediate from Theorem~\ref{thm:wish-lowdeg} and Theorem~\ref{thm:reduction}.
\end{proof}
\noindent In particular, we obtain the hardness of improving on the spectral certificate in the SK model.
\begin{corollary}
If Conjecture~\ref{conj:wish-lowdeg} holds, then, for any $\varepsilon > 0$, there is no randomized polynomial-time algorithm to certify the value $2-\varepsilon$ on the SK problem $\PCA(\{\pm 1/\sqrt{n}\}^n)$.
\end{corollary}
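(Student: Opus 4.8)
The plan is to obtain this statement as a direct specialization of Corollary~\ref{cor:pca-hard} to the constraint set $\sS_n = \{\pm 1/\sqrt{n}\}^n$, so that the only thing to supply is a distribution $\pi$ meeting the hypotheses of that corollary whose associated i.i.d.\ prior is supported on $\sS_n$.

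First I would take $\pi$ to be the Rademacher distribution, $\PP[\pi = +1] = \PP[\pi = -1] = 1/2$. Then $\EE[\pi] = 0$ and $\EE[\pi^2] = 1$ as required, and since $\pi$ is a bounded centered random variable with $\pi \in [-1,1]$ almost surely, the standard bound recorded after Definition~\ref{def:subg} shows that $\pi$ is subgaussian (with variance proxy $\sigma^2 = 1$). Thus $\pi$ satisfies all the conditions imposed on it in Corollary~\ref{cor:pca-hard}.

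Next I would verify the support condition. If $\bx \sim \iid(\pi/\sqrt{n})$, then by Definition~\ref{def:iid} each coordinate $x_i$ is drawn independently from $\frac{1}{\sqrt{n}}\pi$, so $x_i \in \{+1/\sqrt{n},\, -1/\sqrt{n}\}$ deterministically; hence $\bx \in \{\pm 1/\sqrt{n}\}^n = \sS_n$ with probability $1$, and in particular with high probability. Corollary~\ref{cor:pca-hard} then applies directly and gives, conditional on Conjecture~\ref{conj:wish-lowdeg}, that for every $\varepsilon > 0$ no randomized polynomial-time algorithm certifies the value $2-\varepsilon$ on $\PCA(\{\pm 1/\sqrt{n}\}^n)$.

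Because the reduction and the low-degree bound have already been carried out in the proof of Corollary~\ref{cor:pca-hard} (resting on Theorem~\ref{thm:reduction} and Part~1 of Theorem~\ref{thm:wish-lowdeg}), there is no substantive obstacle left here; the entire content of this corollary is the observation that the hypercube constraint set is itself the support of an i.i.d.\ prior with bounded---hence subgaussian---entries, so the general constrained-PCA hardness result applies to the SK problem with essentially no extra work.
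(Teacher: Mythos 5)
Your proposal is correct and is exactly the paper's proof: instantiate Corollary~\ref{cor:pca-hard} with $\pi$ Rademacher (which is bounded and centered, hence subgaussian) and $\sS_n = \{\pm 1/\sqrt{n}\}^n$, noting that $\iid(\pi/\sqrt{n})$ is supported on $\sS_n$ with probability one. Your write-up simply makes explicit the verifications the paper leaves implicit.
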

\begin{proof}
Apply Corollary~\ref{cor:pca-hard} with $\pi$ having the Rademacher distribution and $\sS_n = \{\pm 1/\sqrt{n}\}^n$.
\end{proof}

\section{Proof of Reduction from Spiked Wishart to Constrained PCA}
\label{sec:reduction}

\begin{proof}[Proof of Theorem~\ref{thm:reduction}]

Let $\sS$ be a constraint set and let $\sX$ be a normalized spike prior such that, if $\bx \sim \sX_n$, then $\bx \in \sS_n$ with high probability. Suppose that for some $\varepsilon > 0$ there is a randomized polynomial-time algorithm $f$ that certifies the value $2-\varepsilon$ on $\PCA(\sS)$. We will show that this implies that there is a polynomial-time algorithm for strong detection in $\Wishart(\gamma,\beta,\sX)$ with certain parameters $\gamma > 1$ and $\beta \in (-1, 0)$ (depending on $\varepsilon$). Note that these parameters lie in the ``hard'' regime $\beta^2 < \gamma$.

Our algorithm for detection in the Wishart model is as follows. Fix $\gamma > 1$, to be chosen later. Since $n/N \to \gamma$ we have $n > N$ (for sufficiently large $n$). Given samples $\by_1,\ldots,\by_N \sim \sN(\bm 0,\bm I_n+\beta \bx \bx^\top)$, let $V = \mathrm{span}\{\by_1,\ldots,\by_N\} \subseteq \RR^n$ and let $V^\perp$ be its orthogonal complement. We sample $\bW \in \RR^{n \times n}$ having the distribution $\GOE(n)$ conditioned on the event that the span of the top $n-N$ eigenvectors of $\bW$ is $V^\perp$. Concretely, we can obtain a sample in the following way. Let $\bv_1,\ldots,\bv_N$ be a uniformly random orthonormal basis for $V$ and let $\bv_{N+1},\ldots,\bv_n$ be a uniformly random orthonormal basis for $V^\perp$. Sample $\bW^\prime \sim \GOE(n)$ and let $\lambda_1 < \cdots < \lambda_n$ be the eigenvalues of $\bW^\prime$. Then, let $\bW \colonequals \sum_{i=1}^n \lambda_i \bv_i \bv_i^\top$.
Finally, run the certification algorithm $f$ for $\PCA(\sS)$ on $\bW$.
The detection algorithm $\widetilde{f}: (\RR^n)^N \to \{0, 1\}$ then thresholds $f(\bW)$:
\begin{equation}
    \widetilde{f}(\by_1, \dots, \by_N) = \left\{\begin{array}{ll} 0 \text{ (report samples came from null distribution } \QQ_n \text{)} & \text{if } f(\bW) \le 2-\varepsilon/2, \\ 1 \text{ (report samples came from planted distribution } \PP_n \text{)} & \text{if } f(\bW) > 2-\varepsilon/2. \end{array}\right.
\end{equation}

We now prove that $\widetilde{f}$ indeed achieves strong detection in $\Wishart(\gamma, \beta, \sX)$.
First, if the samples $\by_i$ are drawn from the null model $\QQ_n$, then $V$ is a uniformly random $N$-dimensional subspace of $\RR^n$, so by Proposition~\ref{prop:goe-invariance} the law of $\bW$ constructed above is $\GOE(n)$. Thus $f(\bW) \le 2 - \varepsilon/2$ with high probability by assumption, and therefore $\widetilde{f}(\by_1, \dots, \by_N) = 0$ with high probability, i.e., our algorithm correctly reports that the samples were drawn from the null model.

Next, suppose the samples $\by_i$ are drawn from the planted model $\PP_n$ with planted spike $\bx \sim \sX_n$. We will choose $\gamma > 1$ and $\beta \in (-1, 0)$ so that $\bx^\top \bW \bx \ge 2-\varepsilon/3$ with high probability. Since $\bx \in \sS_n$ with high probability, this would imply $f(\bW) \ge 2 - \varepsilon/3$, so we will have $\widetilde{f}(\by_1, \dots, \by_N) = 1$ with high probability, i.e., our algorithm will correctly report that the samples were drawn from the planted model.

It remains to show that $\bx^\top \bW \bx \ge 2-\varepsilon/3$. Let $\lambda_1 < \cdots < \lambda_n$ be the eigenvalues of $\bW$ and let $\bv_1,\ldots,\bv_n$ be the corresponding (unit-norm) eigenvectors. By Proposition~\ref{prop:goe-spectrum}, with high probability, for all $i \in [n]$, $\lambda_i \in [-2-o(1),2+o(1)]$. Furthermore, by the semicircle law \cite{wigner-semicircle}, with high probability, $\lambda_{N+1} \ge 2 - g(\gamma)$ where $g(\gamma) > 0$ is a function satisfying $g(\gamma) \to 0$ as $\gamma \to 1^+$ (recalling that $n/N \to \gamma$). Letting $\|\bx\|_V$ denote the norm of the orthogonal projection of $\bx$ onto $V$, we have, with high probability,
\begin{align}
\bx^\top \bW \bx &= \bx^\top \left(\sum_{i=1}^n \lambda_i \bv_i \bv_i^\top \right) \bx \nonumber \\
&= \sum_{i=1}^n \lambda_i \langle \bx,\bv_i \rangle^2\nonumber \\
&\ge \lambda_1 \|\bx\|^2_V + \lambda_{N+1} \|\bx\|^2_{V^\perp}\nonumber \\
&\ge (-2-o(1)) \|\bx\|^2_V + (2-g(\gamma)) (\|\bx\|^2-\|\bx\|^2_V)\nonumber \\
&= (2 - g(\gamma))\|\bx\|^2 + (-4 + g(\gamma) - o(1))\|\bx\|_V^2\nonumber \\
&\ge 2 - g(\gamma) - 4 \|\bx\|_V^2 - o(1).
\label{eq:xWx}
\end{align}
Thus we need to upper bound $\|\bx\|_V^2$. Let $\bP_V$ denote the orthogonal projection matrix onto $V$. Since $V$ is the span of $\{\by_1,\ldots,\by_N\}$, we have $\bP_V \preceq \frac{1}{\mu} \bY$ where
$$\bY = \frac{1}{N} \sum_{i=1}^N \by_i \by_i^\top$$
and $\mu$ is the smallest nonzero eigenvalue of $\bY$. (Here $\preceq$ denotes Loewner order.) Since $\bY$ is a spiked Wishart matrix, it follows from Theorem~1.2 of \cite{BS-wishart} that its smallest nonzero eigenvalue converges almost surely to $(1-\sqrt{\gamma})^2$ as $n \to \infty$. Thus we have $\mu = (1-\sqrt{\gamma})^2 + o(1)$. Therefore,
$$\|\bx\|_V^2 = \|\bP_V \bx\|^2 = \bx^\top \bP_V \bx \le \frac{1}{\mu} \bx^\top \bY \bx =\frac{1}{\mu N} \sum_{i=1}^N \langle \bx,\by_i \rangle^2.$$
We have $\by_i \sim \sN(\bm 0,\bm I_n + \beta \bx \bx^\top)$ and so $\langle \bx,\by_i \rangle \sim \sN(0,\bx^\top (\bm I_n + \beta \bx\bx^\top)\bx) = \sN(0,\|\bx\|^2 + \beta\|\bx\|^4)$.
Therefore, letting $a_N^2 = \sum_{i = 1}^N g_i^2$ for $g_i$ i.i.d.\ standard gaussian random variables, so that $a_N^2$ has the $\chi^2$ distribution with $N$ degrees of freedom, we have conditional on $\bx$ the distributional equality
\[ \bx^\top \bY \bx \stackrel{(d)}{=} (\|\bx\|^2 + \beta\|\bx\|^4)\frac{a_N^2}{N}. \]
Standard concentration inequalities imply $a_N^2 / N \in [1 - o(1), 1 + o(1)]$ with high probability, and therefore $\bx^\top \bY \bx = 1 + \beta + o(1)$ with high probability.
Thus, with high probability, we find
\begin{equation}\label{eq:xV}
\|\bx\|_V^2 = \frac{1+\beta}{(1-\sqrt{\gamma})^{2}} + o(1).
\end{equation}
Finally, we choose $\gamma > 1$ close enough to 1 so that $g(\gamma) \le \varepsilon/8$. By \eqref{eq:xV}, we can also choose $\beta \in (-1, 0)$ close enough to $-1$ so that $\|\bx\|_V^2 \le \varepsilon/32$ with high probability. Combining these, from \eqref{eq:xWx} it follows that $\bx^\top \bW \bx \ge 2 - \varepsilon/4 - o(1) \ge 2 - \varepsilon/3$ with high probability, completing the proof.
\end{proof}

\begin{remark}
We remark that we have ignored issues of numerical precision by assuming a model of computation where eigendecomposition computations can be done exactly in polynomial time. However, we believe all the operations we have used are stable, so that our reduction should also hold for weaker models of computation. (In particular, if we want to compute polynomially-many bits of precision of the $\PCA(\sS)$ instance, this should require only polynomially-many bits of precision in the eigendecomposition computation.)
\end{remark}

\section{Proofs for Spiked Wishart Models}
\label{sec:proof-wish}

\subsection{Preliminaries}

\paragraph{Spiked Wishart model statistics.}
The following formulae pertaining to the spiked Wishart model are derived in \cite{PWBM-contig}.
(Recall that in the spiked Wishart model, the parameter $N$ is determined by $n$ and $\gamma$ as $N = \lceil n/\gamma \rceil$.)
\begin{proposition}
    Suppose $\gamma > 0$, $\beta \in [-1, \infty)$, and $\sX$ is a $\beta$-good normalized spike prior.
    Then, the likelihood ratio of the null and planted probability distributions of Definition~\ref{def:spiked-wishart} is
    \begin{align}
    L_{n, \gamma, \beta, \sX}(\by_1, \dots, \by_N) &\colonequals \frac{d\PP_{n}}{d\QQ_{n}}(\by_1, \dots, \by_N) \nonumber \\ &= \Ex_{\bx \sim \sX_n}\left[\left(1 + \beta \|\bx\|^2\right)^{-N / 2}\prod_{i = 1}^N \exp\left(\frac{1}{2}\, \frac{\beta}{1 + \beta \|\bx\|^2} \la \bx, \by_i \ra^2 \right) \right].
    \end{align}
    If furthermore $\|\bx\|^2 < 1/|\beta|$ almost surely when $\bx \sim \sX_n$, then the second moment of the likelihood ratio is given by
    \begin{equation}\label{eq:2nd}
    \Ex_{\by \sim \QQ_{n}} \left(L_{n, \gamma, \beta, \sX}(\by_1, \dots, \by_N)\right)^2  = \Ex_{\bx^1, \bx^2 \sim \sX_n}\left[(1 - \beta^2\la \bx^1, \bx^2 \ra^2)^{-N / 2}\right]
    \end{equation}
    where $\bx^1,\bx^2$ are drawn independently from $\sX_n$.
\end{proposition}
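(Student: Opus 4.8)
The plan is a routine but careful computation in two parts, using standard linear-algebra identities for rank-one and rank-two perturbations.

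\textbf{Part 1: the likelihood ratio.} First I would condition on the spike $\bx$. Since $\sX$ is $\beta$-good we have $\beta\|\bx\|^2 > -1$ almost surely, so $\bm\Sigma \colonequals \bm I_n + \beta\bx\bx^\top$ is positive definite and, under $\PP_n$ conditioned on $\bx$, the $\by_i$ are i.i.d.\ $\sN(\bm 0,\bm\Sigma)$. The density of $\sN(\bm 0,\bm\Sigma)$ with respect to $\sN(\bm 0,\bm I_n)$ at a point $\by$ is $\det(\bm\Sigma)^{-1/2}\exp(\tfrac12\by^\top(\bm I_n-\bm\Sigma^{-1})\by)$; by the matrix determinant lemma $\det(\bm\Sigma)=1+\beta\|\bx\|^2$, and by the Sherman–Morrison formula $\bm\Sigma^{-1}=\bm I_n-\frac{\beta}{1+\beta\|\bx\|^2}\bx\bx^\top$, so $\by^\top(\bm I_n-\bm\Sigma^{-1})\by=\frac{\beta}{1+\beta\|\bx\|^2}\la\bx,\by\ra^2$. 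Multiplying this single-sample ratio over the $N$ i.i.d.\ samples and then averaging over $\bx\sim\sX_n$ (the planted law being exactly this mixture) gives the stated formula for $L_{n,\gamma,\beta,\sX}$.

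\textbf{Part 2: the second moment.} Write $L_n(\by)=\Ex_{\bx^1\sim\sX_n}[\ell(\bx^1;\by)]$, where $\ell(\bx;\by)$ denotes the conditional likelihood ratio from Part 1. Squaring, introducing an independent copy $\bx^2$, and exchanging expectations via Tonelli (the integrand is nonnegative), I would write
\[
\Ex_{\by\sim\QQ_n}\big[L_n(\by)^2\big]=\Ex_{\bx^1,\bx^2\sim\sX_n}\;\Ex_{\by\sim\QQ_n}\big[\ell(\bx^1;\by)\,\ell(\bx^2;\by)\big].
\]
The inner expectation factorizes over the $N$ i.i.d.\ coordinates $\by_i$ into $N$ copies of one Gaussian integral. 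Abbreviating $c_k=\tfrac12\frac{\beta}{1+\beta\|\bx^k\|^2}$ and $\bA=c_1\bx^1(\bx^1)^\top+c_2\bx^2(\bx^2)^\top$, the per-coordinate factor is $(1+\beta\|\bx^1\|^2)^{-1/2}(1+\beta\|\bx^2\|^2)^{-1/2}\,\Ex_{\bg\sim\sN(\bm 0,\bm I_n)}[\exp(\bg^\top\bA\bg)]=(1+\beta\|\bx^1\|^2)^{-1/2}(1+\beta\|\bx^2\|^2)^{-1/2}\det(\bm I_n-2\bA)^{-1/2}$. Since $\bA$ has rank at most two, Sylvester's determinant identity $\det(\bm I_n+\bU\bV^\top)=\det(\bm I_2+\bV^\top\bU)$, applied with $\bU=[\bx^1\ \bx^2]$ and $\bV=[-2c_1\bx^1\ -2c_2\bx^2]$ (so that $\bU\bV^\top=-2\bA$), reduces $\det(\bm I_n-2\bA)$ to a $2\times2$ determinant; using the identities $1-2c_k\|\bx^k\|^2=(1+\beta\|\bx^k\|^2)^{-1}$ and $2c_k=\beta(1+\beta\|\bx^k\|^2)^{-1}$, this evaluates to $\frac{1-\beta^2\la\bx^1,\bx^2\ra^2}{(1+\beta\|\bx^1\|^2)(1+\beta\|\bx^2\|^2)}$. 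Substituting back, the normalizing factors cancel and the per-coordinate factor reduces to $(1-\beta^2\la\bx^1,\bx^2\ra^2)^{-1/2}$; raising to the $N$th power and averaging over $\bx^1,\bx^2$ yields \eqref{eq:2nd}.

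\textbf{What needs care.} There is no conceptual obstacle; the only things to watch are convergence of the Gaussian integrals and keeping the $2\times2$ algebra organized so that the prefactors cancel cleanly. The integral $\Ex_{\bg}[\exp(\bg^\top\bA\bg)]$ is finite precisely when $\bm I_n-2\bA\succ 0$: for $\beta<0$ this is automatic since $\bA\preceq 0$, and for $\beta>0$ it follows from the hypothesis $\|\bx\|^2<1/|\beta|$ almost surely together with Cauchy–Schwarz, $\la\bx^1,\bx^2\ra^2\le\|\bx^1\|^2\|\bx^2\|^2<1/\beta^2$, which makes the relevant $2\times2$ matrix positive definite. With this in hand the computation is valid and the identity holds (both sides interpreted in $[0,\infty]$).
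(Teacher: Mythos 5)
Your derivation is correct. The paper itself gives no proof of this proposition, deferring to the reference \cite{PWBM-contig}; your computation (Gaussian density ratio via the matrix determinant lemma and Sherman--Morrison for the first claim, then Tonelli, factorization over the $N$ i.i.d.\ samples, and a rank-two Gaussian integral reduced by Sylvester's identity for the second) is the standard argument and all the algebra checks out, including the cancellation of the normalizing prefactors. One small point of rigor: $\bm I_2+\bV^\top\bU$ is not symmetric, so ``positive definite'' is loose there; but its eigenvalues are real (it is a diagonal scaling of a Gram matrix), and positivity of both its trace and its determinant --- the latter being exactly your Cauchy--Schwarz step --- gives $\lambda_{\max}(2\bA)<1$ and hence $\bm I_n-2\bA\succ 0$, so the integral converges as claimed.
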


\paragraph{Hermite polynomials.}

We recall the classical one-dimensional Hermite polynomials.
\begin{definition}
The polynomials $h_k \in \RR[x]$ for $k \geq 0$ are defined by the recursion
\begin{align*}
  h_0(x) &= 1, \\
  h_{k + 1}(x) &= xh_k(x) - h_k^{\prime}(x),
\end{align*}
and we define normalized versions
\begin{equation*}
    \what{h}_k(x) = \frac{1}{\sqrt{k!}} h_k(x).
\end{equation*}
\end{definition}

\begin{proposition}
The $\what{h}_k$ are an orthonormal polynomial system for the standard Gaussian measure:
\begin{equation*}
    \Ex_{g \sim \sN(0, 1)}\left[\what{h}_k(g)\what{h}_\ell(g)\right] = \delta_{k\ell}.
\end{equation*}
\end{proposition}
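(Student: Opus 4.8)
The plan is to establish the stronger unnormalized statement $\Ex_{g \sim \sN(0,1)}[h_k(g) h_\ell(g)] = k!\,\delta_{k\ell}$, from which the claim follows by dividing through by $\sqrt{k!\,\ell!}$. First I would record two elementary consequences of the defining recursion. Writing $\phi(x) = \frac{1}{\sqrt{2\pi}} e^{-x^2/2}$ for the standard Gaussian density, note that $\phi'(x) = -x\phi(x)$; combining this with the recursion $h_{k+1}(x) = x h_k(x) - h_k'(x)$ and a short induction on $k$ yields the Rodrigues-type formula
\[ h_k(x)\,\phi(x) = (-1)^k \frac{d^k}{dx^k}\phi(x). \]
Separately, another easy induction shows that $h_k$ is monic of degree exactly $k$, so that $h_k^{(k)}$ is the constant $k!$ and $h_k^{(\ell)} \equiv 0$ for $\ell > k$.

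Next I would compute the inner product directly. Without loss of generality assume $k \le \ell$. Substituting the Rodrigues formula for the factor $h_\ell$,
\[ \Ex_{g \sim \sN(0,1)}[h_k(g) h_\ell(g)] = \int_{\RR} h_k(x)\,(-1)^\ell \phi^{(\ell)}(x)\,dx, \]
and I would integrate by parts $\ell$ times. The boundary terms all vanish because each derivative of $\phi$ is $\phi$ times a polynomial and hence tends to $0$ at $\pm\infty$, while $h_k$ is a polynomial; this leaves
\[ \Ex_{g \sim \sN(0,1)}[h_k(g) h_\ell(g)] = \int_{\RR} h_k^{(\ell)}(x)\,\phi(x)\,dx. \]
If $\ell > k$ the integrand is identically zero; if $\ell = k$ it equals $k! \int_{\RR}\phi(x)\,dx = k!$. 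This gives $\Ex[h_k(g)h_\ell(g)] = k!\,\delta_{k\ell}$, hence $\Ex[\what h_k(g) \what h_\ell(g)] = \delta_{k\ell}$.

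An alternative route I might take instead uses the generating-function identity $\sum_{k \ge 0} \frac{t^k}{k!} h_k(x) = e^{tx - t^2/2}$ (again provable by induction from the recursion), from which $\Ex_{g}[e^{sg - s^2/2}\, e^{tg - t^2/2}] = e^{-s^2/2 - t^2/2}\,\Ex_g[e^{(s+t)g}] = e^{st}$ by the formula for the Gaussian moment generating function; matching the coefficient of $s^k t^\ell$ on both sides recovers $\Ex[h_k(g)h_\ell(g)] = k!\,\delta_{k\ell}$. Either way the content is entirely standard and there is no real obstacle; the only point requiring minor care is justifying that the boundary terms in the repeated integration by parts vanish (or, in the generating-function route, that interchanging expectation with the power series is legitimate, which follows from dominated convergence since the relevant series converge absolutely).
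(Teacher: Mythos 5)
Your proof is correct: the paper states this proposition as a classical fact without proof, and your argument via the Rodrigues formula $h_k(x)\phi(x) = (-1)^k\phi^{(k)}(x)$ followed by $\ell$-fold integration by parts (with $h_k$ monic of degree $k$) is the standard derivation, using exactly the same tools the paper itself records in its appendix (its Rodrigues formula and Gaussian integration-by-parts propositions). Both your main route and the generating-function alternative are sound; no gaps.
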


\noindent
Similarly, we define the product Hermite polynomials.
It is helpful to first define some notations for vectors of indices, which will also be used in the later derivations.
\begin{definition}
    Let $\NN = \{n \in \ZZ: n \geq 0\}$.
    For $\bm\alpha \in \NN^n$ and $\bx \in \RR^n$, let
    \begin{align*}
        |\bm\alpha| &\colonequals \sum_{i = 1}^n \alpha_i, \\
        \bm\alpha! &\colonequals \prod_{i = 1}^n \alpha_i!, \\
        \bx^{\bm\alpha} &\colonequals \prod_{i = 1}^n x_i^{\alpha_i}.
    \end{align*}
\end{definition}
\begin{definition}
    For $\bm\alpha \in \NN^n$ and $\bx \in \RR^n$,
\begin{align*}
  H_{\bm\alpha}(\bx) &\colonequals \prod_{i = 1}^n h_{\alpha_i}(x_i), \\
  \what{H}_{\bm\alpha}(\bx) &\colonequals \prod_{i = 1}^n \what{h}_{\alpha_i}(x_i) = \frac{1}{\sqrt{\bm\alpha!}} H_{\bm\alpha}(\bx).
\end{align*}
\end{definition}

\begin{proposition}
The $\what{H}_{\bm\alpha}$ are an orthonormal polynomial system for the product measure of $n$ standard Gaussian measures:
\[ \Ex_{\bg \sim \sN(\bm 0, \bm I_n)}\left[ \what{H}_{\bm\alpha}(\bg) \what{H}_{\bm\beta}(\bg)\right] = \delta_{\bm\alpha\bm\beta}. \]
\end{proposition}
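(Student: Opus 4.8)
The plan is to exploit the fact that both the test functions $\what H_{\bm\alpha}$ and the measure $\sN(\bm 0,\bm I_n)$ factorize over the $n$ coordinates, so that the claim reduces immediately to the one-dimensional case already recorded above. First I would note that if $\bg = (g_1,\dots,g_n) \sim \sN(\bm 0,\bm I_n)$ then the $g_i$ are independent $\sN(0,1)$ random variables, and that by definition $\what H_{\bm\alpha}(\bg)\,\what H_{\bm\beta}(\bg) = \prod_{i=1}^n \what h_{\alpha_i}(g_i)\,\what h_{\beta_i}(g_i)$. Since the product is over functions of the independent coordinates $g_i$, the expectation of the product is the product of the expectations:
\begin{equation*}
\Ex_{\bg \sim \sN(\bm 0,\bm I_n)}\left[\what H_{\bm\alpha}(\bg)\,\what H_{\bm\beta}(\bg)\right] = \prod_{i=1}^n \Ex_{g_i \sim \sN(0,1)}\left[\what h_{\alpha_i}(g_i)\,\what h_{\beta_i}(g_i)\right].
\end{equation*}

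Next I would invoke the preceding proposition on the univariate Hermite polynomials, which gives $\Ex_{g_i \sim \sN(0,1)}[\what h_{\alpha_i}(g_i)\,\what h_{\beta_i}(g_i)] = \delta_{\alpha_i\beta_i}$ for each $i$. Substituting, the right-hand side becomes $\prod_{i=1}^n \delta_{\alpha_i\beta_i}$, which equals $1$ precisely when $\alpha_i = \beta_i$ for every $i \in [n]$, i.e., when $\bm\alpha = \bm\beta$, and $0$ otherwise. This is exactly $\delta_{\bm\alpha\bm\beta}$, establishing orthonormality.

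The only remaining point, if one wants to justify calling $\{\what H_{\bm\alpha}\}_{\bm\alpha \in \NN^n}$ an orthonormal \emph{system} (i.e., a complete one) rather than merely an orthonormal set, is to check that its span is dense in $L^2(\sN(\bm 0,\bm I_n))$; I would observe that $h_k$ is monic of degree $k$, so $\bx^{\bm\gamma}$ lies in the span of $\{H_{\bm\alpha}\}_{\bm\alpha \le \bm\gamma}$ (componentwise), hence every polynomial is in the span, and polynomials are dense in $L^2$ of a Gaussian measure. There is no genuine obstacle in this proof — it is an immediate consequence of coordinatewise independence together with the one-dimensional statement — so the ``hard part'' is merely bookkeeping with the multi-index notation.
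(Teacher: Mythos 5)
Your proof is correct and is exactly the standard tensorization argument one would expect; the paper states this proposition without proof as a classical fact, and your reduction to the one-dimensional orthonormality via coordinatewise independence (plus the density remark for completeness) is precisely the intended justification.
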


\paragraph{Combinatorics.}

We will also need the (ordinary) generating function of the central binomial coefficients.
\begin{proposition}
    \label{prop:gf-central-binom}
    For any $x \in \RR$ with $|x| < \frac{1}{4}$,
    \[
    (1 - 4x)^{-1/2} = \sum_{k \geq 0} \binom{2k}{k} x^k. \]
\end{proposition}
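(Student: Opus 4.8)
The plan is to obtain this from Newton's generalized binomial series: for any real $\alpha$ and any $u$ with $|u| < 1$ one has $(1+u)^{\alpha} = \sum_{k \geq 0} \binom{\alpha}{k} u^k$, where $\binom{\alpha}{k} = \frac{\alpha(\alpha-1)\cdots(\alpha-k+1)}{k!}$. Taking $\alpha = -\frac12$ and $u = -4x$ --- which is valid exactly when $|x| < \frac14$ --- gives $(1-4x)^{-1/2} = \sum_{k\geq 0} \binom{-1/2}{k}(-4)^k\, x^k$, so the entire content of the proposition reduces to the coefficientwise identity $\binom{-1/2}{k}(-4)^k = \binom{2k}{k}$ for every $k \geq 0$.

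To verify that identity I would rewrite the falling factorial defining $\binom{-1/2}{k}$: for $j = 0, \dots, k-1$ we have $-\frac12 - j = -\frac{2j+1}{2}$, so the numerator equals $\frac{(-1)^k}{2^k}\,(2k-1)!!$ with $(2k-1)!! \colonequals 1 \cdot 3 \cdots (2k-1)$, whence $\binom{-1/2}{k} = \frac{(-1)^k\,(2k-1)!!}{2^k\,k!}$. Multiplying by $(-4)^k$ removes the sign and yields $\frac{(2k-1)!!\,2^k}{k!}$. Now applying the standard factorization $(2k)! = (2k)!!\,(2k-1)!! = 2^k\,k!\,(2k-1)!!$, i.e.\ $(2k-1)!! = \frac{(2k)!}{2^k\,k!}$, turns this into $\frac{(2k)!}{k!\,k!} = \binom{2k}{k}$, completing the argument.

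I do not expect a genuine obstacle here; the only points needing a sentence of care are the justification of the binomial series on the open interval $|x| < \frac14$ (equivalently, noting that the right-hand side has radius of convergence $\frac14$, which follows from the ratio test since $\binom{2(k+1)}{k+1}/\binom{2k}{k} = \frac{2(2k+1)}{k+1} \to 4$) and the double-factorial bookkeeping above. As an alternative proof one could set $f(x) = \sum_{k\geq 0}\binom{2k}{k}x^k$, observe that the same ratio identity is equivalent to the first-order linear ODE $(1-4x)f'(x) = 2f(x)$ with $f(0) = 1$, and solve it to get $f(x) = (1-4x)^{-1/2}$; I would present the binomial-series computation as the main proof since it is the most direct.
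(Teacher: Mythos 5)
Your argument is correct, and in fact the paper offers no proof of this proposition at all --- it is stated as a standard combinatorial fact and used as a black box. Your derivation via Newton's binomial series with $\alpha = -\tfrac12$, $u = -4x$, together with the double-factorial identity $(2k-1)!! = \frac{(2k)!}{2^k\,k!}$ to establish $\binom{-1/2}{k}(-4)^k = \binom{2k}{k}$, is the standard justification and is complete, including the correct identification of the radius of convergence $|x|<\tfrac14$.
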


\subsection{Norm of the Low-Degree Projection}

In this section, we describe the formulas for the norm of the low-degree likelihood ratio in the spiked Wishart model, $\|L_{n, \gamma, \beta, \sX}^{\leq D}\|_{L^2(\QQ_n)}$.
The full calculations are given in Appendix~\ref{app:orth-poly}.

The following result, the main technical one of this portion of the argument, computes the norm of the projection of the likelihood ratio onto a single Hermite polynomial.
\begin{lemma}
\label{lem:proj-component-norms}
Let $\bm\alpha \in (\NN^n)^N$, and let $\bm\alpha_i \in \NN^n$ denote the $i$th component.
Let $|\bm\alpha| = \sum_{i = 1}^N |\bm\alpha_i|$.
Suppose $\gamma > 0$, $\beta \in [-1, \infty)$, and $\sX$ is a $\beta$-good normalized spike prior.
Then,
\begin{align}
    &\la L_{n, \gamma, \beta, \sX}, \what{H}_{\bm\alpha}\ra_{L^2(\QQ_{n})}^2 \nonumber \\
    &\hspace{1cm}= \left\{\begin{array}{ll} \beta^{|\bm\alpha|} \cdot \prod_{i = 1}^N \frac{(|\bm\alpha_i| - 1)!!^2}{\bm\alpha_i!} \cdot \left( \EE_{\bx \sim \sX_n}\,\bx^{\sum_{i = 1}^N \bm\alpha_i}\right)^2 & \text{if } |\bm\alpha_i| \text{ even for all } i \in [N], \\ 0 & \text{otherwise,} \end{array} \right.
\end{align}
where when $\beta = 0$ and $\bm\alpha = \bm 0$ we interpret $0^0 = 1$.
\end{lemma}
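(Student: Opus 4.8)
The plan is to compute the inner product $\la L_{n,\gamma,\beta,\sX},\what H_{\bm\alpha}\ra_{L^2(\QQ_n)}$ directly from the formula for the likelihood ratio given in the preliminaries, namely
\[
L_{n,\gamma,\beta,\sX}(\by_1,\dots,\by_N)=\Ex_{\bx\sim\sX_n}\left[(1+\beta\|\bx\|^2)^{-N/2}\prod_{i=1}^N\exp\left(\tfrac12\,\tfrac{\beta}{1+\beta\|\bx\|^2}\la\bx,\by_i\ra^2\right)\right].
\]
Since $\QQ_n$ is the product of $nN$ standard Gaussians, the inner product against $\what H_{\bm\alpha}$ factors over the $N$ samples and over the $n$ coordinates within each sample. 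The key computation therefore reduces to a one-dimensional fact: for a fixed real parameter, $\EE_{g\sim\sN(0,1)}[\exp(cg^2/2)\cdot(\text{linear-in-}g\text{ shift})\cdot\what h_k(g)]$. More precisely, first I would swap the expectation over $\bx$ with the $L^2(\QQ_n)$ integration (justified by the $\beta$-goodness/truncation hypotheses, which keep everything integrable), and then for fixed $\bx$ evaluate $\EE_{\by_i\sim\sN(\bm 0,\bm I_n)}[\exp(\tfrac12 t_i\la\bx,\by_i\ra^2)\,\what H_{\bm\alpha_i}(\by_i)]$ where $t_i=\beta/(1+\beta\|\bx\|^2)$.

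The central lemma to establish is the Gaussian integral identity: for $k\ge 0$ and $|c|<1$, with $u\in\RR^m$ and $\by\sim\sN(\bm 0,\bm I_m)$,
\[
\Ex_{\by}\left[\exp\!\left(\tfrac{c}{2}\la u,\by\ra^2\right)\what H_{\bm\beta}(\by)\right]
\]
vanishes unless $|\bm\beta|$ is even, and otherwise equals a closed form involving $u^{\bm\beta}$, a power of $c$, a power of $\|u\|$, and double-factorial combinatorial factors. The cleanest route is to introduce an auxiliary Gaussian: write $\exp(\tfrac{c}{2}\la u,\by\ra^2)=\EE_{z\sim\sN(0,1)}\exp(\sqrt{c}\,z\,\la u,\by\ra)$ (valid for $c\ge 0$; for $c<0$, i.e. $\beta<0$, one uses an imaginary Gaussian or analytically continues the final rational expression, which is legitimate since both sides are analytic in $c$ on the relevant domain). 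Then $\EE_{\by}[\exp(\sqrt c\,z\la u,\by\ra)\what H_{\bm\beta}(\by)]$ is computed coordinatewise using the generating-function identity $\EE_{g\sim\sN(0,1)}[e^{sg}\what h_k(g)]=s^k/\sqrt{k!}$, giving $\prod_i (\sqrt c\, z u_i)^{\beta_i}/\sqrt{\beta_i!}=(\sqrt c)^{|\bm\beta|}z^{|\bm\beta|}u^{\bm\beta}/\sqrt{\bm\beta!}$. Taking $\EE_z[z^{|\bm\beta|}]$ then produces $0$ for $|\bm\beta|$ odd and $(|\bm\beta|-1)!!$ otherwise. Assembling: the $\bx$-fixed expectation is $c^{|\bm\beta|/2}(|\bm\beta|-1)!!\,\bx^{\bm\beta}/\sqrt{\bm\beta!}$ up to the $\|\bx\|$ factors carried in $c=t_i$ — but crucially the $(1+\beta\|\bx\|^2)^{-N/2}$ prefactor and the $t_i=\beta/(1+\beta\|\bx\|^2)$ denominators should cancel the $\|\bx\|$-dependence down to an exact monomial. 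I would verify this cancellation carefully: $(1+\beta\|\bx\|^2)^{-1/2}\cdot\prod_i t_i^{|\bm\alpha_i|/2}$ times the residual factors from the one-dimensional integrals should collapse to $\beta^{|\bm\alpha|/2}$ with no leftover norm terms, which is exactly what the stated formula predicts (note there is no $\|\bx\|$ in the conclusion).

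Having the per-$\bx$ expression $\prod_{i=1}^N\left[\beta^{|\bm\alpha_i|/2}(|\bm\alpha_i|-1)!!\,\bx^{\bm\alpha_i}/\sqrt{\bm\alpha_i!}\right]$ (when every $|\bm\alpha_i|$ is even), I would then take $\EE_{\bx\sim\sX_n}$ of the product, pulling out the constants to get $\beta^{|\bm\alpha|/2}\prod_i\frac{(|\bm\alpha_i|-1)!!}{\sqrt{\bm\alpha_i!}}\cdot\EE_{\bx}[\bx^{\sum_i\bm\alpha_i}]$, and finally square to match the statement (the square turns $\beta^{|\bm\alpha|/2}$ into $\beta^{|\bm\alpha|}$, the double factorials into $(|\bm\alpha_i|-1)!!^2$, and $\sqrt{\bm\alpha_i!}$ into $\bm\alpha_i!$, and $\EE_\bx[\cdots]$ into its square). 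The $0^0=1$ convention handles the degenerate case $\beta=0,\bm\alpha=\bm 0$. The main obstacle I anticipate is the $\beta<0$ sign issue in the auxiliary-Gaussian trick — I would handle it by first proving the identity for $\beta\ge 0$ and then invoking analyticity of both sides in $\beta$ on the domain where $\|\bx\|^2<1/|\beta|$ a.s. (guaranteed by $\beta$-goodness), which is precisely the hypothesis under which the second-moment formula in the preliminaries is also stated; the exchange-of-expectations and convergence bookkeeping is the other place requiring care, controlled by the same truncation hypotheses.
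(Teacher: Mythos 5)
Your route is genuinely different from the paper's, and it does work --- modulo one identity you mis-state. The paper computes $\la L,\what H_{\bm\alpha}\ra$ by Gaussian integration by parts ($\EE[h_k(g)f(g)]=\EE[f^{(k)}(g)]$), evaluates the resulting derivatives $\partial_{\by}^{\bm\alpha_i}\exp(\tfrac12 t_i\la\bx,\by\ra^2)$ via a Rodrigues formula for generalized Hermite polynomials of variance $v=-(1+\beta\|\bx\|^2)/\beta$ (a \emph{negative} variance, i.e.\ umbral Hermite polynomials, when $\beta>0$), and finishes with a mismatched-variance formula $\EE_{g\sim\sN(0,\sigma^2)}h_k(g;v)=(k-1)!!(\sigma^2-v)^{k/2}$. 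You instead linearize the quadratic exponent with an auxiliary Gaussian $z$ and use the Hermite generating function, which avoids the umbral machinery entirely and handles the sign of $\beta$ by analytic continuation in $c$ rather than by negative-variance polynomials; this is arguably more elementary, at the cost of a Fubini justification for the $z$-trick. The one thing you must fix: the identity $\EE_{g\sim\sN(0,1)}[e^{sg}\what h_k(g)]=s^k/\sqrt{k!}$ is missing a factor $e^{s^2/2}$ (the generating function is $e^{sx-s^2/2}=\sum_k \tfrac{s^k}{k!}h_k(x)$). Those factors accumulate over the $n$ coordinates to $\exp(\tfrac{c}{2}\|\bx\|^2z^2)$ inside the $z$-expectation, so the final step is not $\EE_z[z^{2m}]=(2m-1)!!$ but $\EE_z[e^{az^2/2}z^{2m}]=(1-a)^{-m-1/2}(2m-1)!!$ with $a=c\|\bx\|^2=\beta\|\bx\|^2/(1+\beta\|\bx\|^2)$, hence $1-a=(1+\beta\|\bx\|^2)^{-1}$. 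This is precisely the source of the factor $(1+\beta\|\bx\|^2)^{|\bm\alpha_i|/2+1/2}$ per sample that cancels both the $t_i^{|\bm\alpha_i|/2}$ denominators and the global $(1+\beta\|\bx\|^2)^{-N/2}$ prefactor --- the cancellation you rightly flag as needing verification would fail with the identity as you wrote it. With the corrected identity everything collapses to $\beta^{|\bm\alpha|/2}\prod_i(|\bm\alpha_i|-1)!!\,\bx^{\bm\alpha_i}/\sqrt{\bm\alpha_i!}$ as you predict, and the rest of your assembly (parity, double factorials, squaring) matches the statement.
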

\noindent
Note in particular that the quantity in question does not depend on the sign of $\beta$; thus the calculation of the norm of the low-degree projection of the likelihood ratio will not distinguish between the positively and negatively spiked Wishart models.
Interestingly, in our proof, which involves generalized Hermite polynomials that form families of orthogonal polynomials with respect to Gaussian measures of different variances, this corresponds to the fact that an ``umbral'' analogue of the Hermite polynomials corresponding to a fictitious Gaussian measure with \emph{negative} variance satisfies many of the same identities as the ordinary Hermite polynomials.

Combining these quantities, we may give a simple description of the norm of the low-degree projection of the likelihood ratio.
\begin{lemma}
    \label{lem:proj-norm}
    Suppose $\gamma > 0$, $\beta \in [-1, \infty)$, and $\sX$ is a $\beta$-good normalized spike prior.
    Define
    \begin{align}
        \varphi_N(x) &\colonequals (1 - 4x)^{-N / 2}, \\
        \varphi_{N, k}(x) &\colonequals \sum_{d=0}^{k} x^d \sum_{\substack{d_1, \dots, d_N \\ \sum d_i = d}} \prod_{i = 1}^N \binom{2d_i}{d_i}, \label{eq:taylor}
    \end{align}
    so that $\varphi_{N, k}(x)$ is the Taylor series of $\varphi_N$ around $x = 0$ truncated to degree $k$ (as may be justified by Proposition~\ref{prop:gf-central-binom}).
    Then,
    \begin{equation}\label{eq:LD}
        \|L_{n, \gamma, \beta, \sX}^{\leq D}\|_{L^2(\QQ_{n})}^2 = \Ex_{\bx^1, \bx^2 \sim \sX_n}\left[ \varphi_{N, \lfloor D / 2\rfloor}\left(\frac{\beta^2\la \bx^1, \bx^2 \ra^2}{4}\right)\right]
    \end{equation}
    where $\bx^1,\bx^2$ are drawn independently from $\sX$.
\end{lemma}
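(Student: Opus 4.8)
The plan is to expand the squared norm of the low-degree projection in the orthonormal Hermite basis, plug in the single-component formula from Lemma~\ref{lem:proj-component-norms}, and recognize the resulting sum as a truncated power series in the overlap. First I would write
\[
\|L_{n, \gamma, \beta, \sX}^{\leq D}\|_{L^2(\QQ_{n})}^2 = \sum_{\substack{\bm\alpha \in (\NN^n)^N \\ |\bm\alpha| \leq D}} \la L_{n, \gamma, \beta, \sX}, \what{H}_{\bm\alpha}\ra_{L^2(\QQ_{n})}^2,
\]
using that $\{\what{H}_{\bm\alpha}\}$ is an orthonormal basis of $L^2(\QQ_n)$ and that the degree of $\what{H}_{\bm\alpha}$ is exactly $|\bm\alpha|$. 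By Lemma~\ref{lem:proj-component-norms}, only the terms with every $|\bm\alpha_i|$ even contribute, so I set $|\bm\alpha_i| = 2d_i$ and write $d = \sum_i d_i$, noting the constraint becomes $2d \leq D$, i.e.\ $d \leq \lfloor D/2 \rfloor$.

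Next I would substitute the explicit value and pull the expectation over $\bx$ outside. Writing the square $(\EE_{\bx} \bx^{\sum_i \bm\alpha_i})^2$ as a product of two independent expectations $\EE_{\bx^1, \bx^2}[(\bx^1)^{\bm\beta}(\bx^2)^{\bm\beta}]$ where $\bm\beta = \sum_i \bm\alpha_i \in \NN^n$, the whole thing becomes
\[
\Ex_{\bx^1, \bx^2 \sim \sX_n}\left[ \sum_{\substack{\bm\alpha \,:\, |\bm\alpha_i| = 2d_i \\ \sum 2 d_i \leq D}} \beta^{2d} \prod_{i=1}^N \frac{(2d_i - 1)!!^2}{\bm\alpha_i!}\, (\bx^1)^{\bm\alpha_i}(\bx^2)^{\bm\alpha_i} \right].
\]
The key combinatorial step is then to carry out the inner sum over each $\bm\alpha_i \in \NN^n$ with $|\bm\alpha_i| = 2d_i$ fixed. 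Using the multinomial identity $\sum_{|\bm\alpha_i| = 2d_i} \frac{(2d_i)!}{\bm\alpha_i!} (\bx^1)^{\bm\alpha_i}(\bx^2)^{\bm\alpha_i} = \la \bx^1, \bx^2\ra^{2 d_i}$ together with $(2d_i - 1)!!^2 / (2d_i)! = \binom{2d_i}{d_i} / 4^{d_i}$ (which follows from $(2d_i)! = 2^{2d_i} d_i! (2d_i-1)!!$ and $(2d_i)! = (d_i!)^2 \binom{2d_i}{d_i}$), the factor for sample $i$ collapses to $\binom{2d_i}{d_i} (\beta^2 \la \bx^1, \bx^2\ra^2 / 4)^{d_i}$. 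Assembling the product over $i$ and the sum over $d = \sum d_i \leq \lfloor D/2 \rfloor$ gives exactly $\varphi_{N, \lfloor D/2 \rfloor}(\beta^2 \la \bx^1, \bx^2\ra^2 / 4)$ by the defining formula \eqref{eq:taylor}, proving \eqref{eq:LD}. The identification of the full (untruncated) series with $\varphi_N(x) = (1-4x)^{-N/2}$ is a consequence of Proposition~\ref{prop:gf-central-binom} applied $N$ times (equivalently, raising the central-binomial generating function to the $N$th power), which also confirms consistency with the second-moment formula \eqref{eq:2nd} in the limit $D \to \infty$.

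I expect the main obstacle to be purely bookkeeping rather than conceptual: correctly tracking the double-indexed family $\bm\alpha \in (\NN^n)^N$, making sure the parity constraint from Lemma~\ref{lem:proj-component-norms} is applied per-sample (not globally), and verifying the factorial identity $(2d-1)!!^2/(2d)! = \binom{2d}{d}/4^d$ so that the central binomial coefficients appear cleanly. One should also be slightly careful with the degenerate case $\beta = 0$, $\bm\alpha = \bm 0$ (where the convention $0^0 = 1$ from Lemma~\ref{lem:proj-component-norms} makes the $d=0$ term equal to $1$, consistent with $\varphi_{N,k}(0) = 1$), but this is immediate once the conventions are stated.
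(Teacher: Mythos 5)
Your proposal is correct and follows essentially the same route as the paper's proof: Parseval expansion over the Hermite basis, the per-sample parity constraint from Lemma~\ref{lem:proj-component-norms}, the replica trick for the squared expectation, collapsing the multinomial sums into powers of $\la \bx^1,\bx^2\ra$, and converting the double-factorial ratios into central binomial coefficients (the paper keeps $|\bm\alpha_i| = d_i$ even and reindexes $d_i \to 2d_i$ at the end, whereas you substitute $|\bm\alpha_i| = 2d_i$ upfront, which is purely cosmetic). One small typo: the identity you invoke, $(2d_i-1)!!^2/(2d_i)! = \binom{2d_i}{d_i}/4^{d_i}$, is correct, but its justification should read $(2d_i)! = 2^{d_i}\, d_i!\, (2d_i-1)!!$ rather than $2^{2d_i}\, d_i!\, (2d_i-1)!!$.
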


\begin{remark}
The squared norm of the low-degree likelihood ratio \eqref{eq:LD} is closely related via Taylor expansion to the squared norm (or second moment) of the full likelihood ratio \eqref{eq:2nd}, which is recovered by taking $D \to \infty$ while $n$ and $N$ remain fixed.
\end{remark}

\subsection{Asymptotics as $n \to \infty$}

In this section, we use the formula from Lemma~\ref{lem:proj-norm} to prove Part 1 of Theorem~\ref{thm:wish-lowdeg} (the case $\beta^2 < \gamma$). The proof of Part 2 ($\beta^2 > \gamma$) is deferred to Appendix~\ref{app:above-thresh}.

The following concentration result is the key property that we require from the spike prior $\sX$.

\begin{definition}\label{def:local-chernoff}
A normalized spike prior $\sX$ admits a \emph{local Chernoff bound} if for every $\eta > 0$ there exist $\delta > 0$ and $C > 0$ such that, for all $n$,
\begin{equation}\label{eq:local-chernoff}
\Pr\left\{|\la \bx^1,\bx^2 \ra| \ge t\right\} \le C \exp\left(-\frac{1}{2}(1-\eta)nt^2\right) \quad \text{for all } t \in [0,\delta]
\end{equation}
where $\bx^1,\bx^2$ are drawn independently from $\sX_n$.
\end{definition}

\begin{proposition}\label{prop:local-chernoff}
If $\pi$ is subgaussian with $\EE[\pi] = 0$ and $\EE[\pi^2] = 1$ then $\iid(\pi/\sqrt{n})$ and $\trunc_\beta(\iid(\pi/\sqrt{n}))$ (for any $\beta > -1$) each admit a local Chernoff bound.
\end{proposition}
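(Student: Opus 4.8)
The plan is to prove the local Chernoff bound for $\iid(\pi/\sqrt n)$ first, and then deduce it for $\trunc_\beta(\iid(\pi/\sqrt n))$ by a coupling argument. Fix $\eta > 0$. For the i.i.d.\ case, write $\bx^1, \bx^2 \sim \iid(\pi/\sqrt n)$, so that $\la \bx^1, \bx^2\ra = \frac 1n \sum_{i=1}^n \pi^{1}_i \pi^2_i$ where all $\pi^j_i$ are i.i.d.\ copies of $\pi$. The idea is to apply a standard Chernoff/Markov bound: for $s > 0$,
\[
\Pr\{\la \bx^1, \bx^2 \ra \ge t\} \le e^{-snt}\, \EE\!\left[\exp\!\left(\tfrac sn \textstyle\sum_i \pi^1_i \pi^2_i\right)\right] = e^{-snt}\, \left(\EE\!\left[\exp\!\left(\tfrac sn\, \pi^1_1 \pi^2_1\right)\right]\right)^n.
\]
The main work is to control the single-coordinate moment generating function $m(u) \colonequals \EE[\exp(u\, \pi^1 \pi^2)]$ for small $u$. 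First I would condition on $\pi^2$: since $\pi^1$ is centered subgaussian with variance proxy $\sigma^2$, we get $\EE[\exp(u \pi^1 \pi^2)\mid \pi^2] \le \exp(\tfrac12 \sigma^2 u^2 (\pi^2)^2)$. If $|u| \sigma^2$ is small enough that $\tfrac12 \sigma^2 u^2 < 1/(2\sigma^2)$ (using that $\pi^2$ is itself subgaussian, hence $\EE[\exp(c (\pi^2)^2)] < \infty$ for $c$ below a threshold; see, e.g., \cite{rig-notes}), then $m(u) \le 1 + O(u^2)$, and more precisely $m(u) \le \exp(c_1 u^2)$ for some constant $c_1$ depending only on $\sigma^2$, valid for $|u| \le u_0$ with $u_0$ a constant. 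Setting $u = s/n$, this requires $s \le n u_0$, and gives $\left(m(s/n)\right)^n \le \exp(c_1 s^2 / n)$. Plugging in, $\Pr\{\la \bx^1,\bx^2\ra \ge t\} \le \exp(-snt + c_1 s^2/n)$; optimizing over $s$ (formally $s = nt/(2c_1)$) yields $\exp(-n t^2/(4c_1))$, and one checks the constraint $s \le n u_0$ holds provided $t \le \delta \colonequals 2 c_1 u_0$.

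The one subtlety is that a bare $1/(4c_1)$ in the exponent is not quite the form $\frac12(1-\eta)n t^2$ demanded by Definition~\ref{def:local-chernoff}. To fix this I would be more careful in the small-$u$ expansion: since $\EE[\pi^1\pi^2] = 0$ and $\EE[(\pi^1\pi^2)^2] = 1$, we have $m(u) = 1 + \tfrac12 u^2 + o(u^2)$ as $u \to 0$, so for any $\eta' > 0$ there is $u_0 = u_0(\eta')$ with $m(u) \le \exp(\tfrac12 (1+\eta') u^2)$ for $|u| \le u_0$. Then $(m(s/n))^n \le \exp(\tfrac12(1+\eta') s^2/n)$, and optimizing gives $\Pr\{\la\bx^1,\bx^2\ra \ge t\} \le \exp\!\big(-\tfrac{n t^2}{2(1+\eta')}\big) \le \exp\!\big(-\tfrac12(1-\eta) n t^2\big)$ for $\eta'$ chosen small relative to $\eta$, valid for $t \le \delta \colonequals (1+\eta') u_0$. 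The two-sided bound (absolute value) follows by symmetry of the roles of $+t$ and $-t$ — apply the same argument to $-\la\bx^1,\bx^2\ra$, which has the same distribution structure since $-\pi^1$ is also centered subgaussian with the same variance proxy — and a union bound, absorbing the factor $2$ into the constant $C$.

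For $\sX = \trunc_\beta(\iid(\pi/\sqrt n))$, I would use the natural coupling: draw $\bx'^1, \bx'^2 \sim \iid(\pi/\sqrt n)$ and let $\bx^j$ equal $\bx'^j$ on the good event $E_j = \{\beta \|\bx'^j\|^2 > -1,\ \|\bx'^j\|^2 \le 2\}$ and $\bx^j = \bm 0$ otherwise. On $E_1 \cap E_2$ we have $\la \bx^1, \bx^2\ra = \la \bx'^1, \bx'^2\ra$, and outside that event $\la\bx^1,\bx^2\ra = 0$, which certainly satisfies $|\la\bx^1,\bx^2\ra| = 0 < t$ for any $t > 0$. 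Hence $\Pr\{|\la\bx^1,\bx^2\ra| \ge t\} \le \Pr\{|\la\bx'^1,\bx'^2\ra| \ge t\}$, and the bound transfers verbatim with the same $\delta, C$. (Here I use $t \in [0,\delta]$ so $t > 0$; the case $t = 0$ is trivial since the right-hand side is $\ge C \ge 1$.) I expect the main obstacle to be the bookkeeping in the expansion of $m(u)$ — specifically verifying that the error term is genuinely $o(u^2)$ uniformly, which follows from $\EE[\exp(c(\pi^2)^2)] < \infty$ for small $c$ (a standard consequence of subgaussianity) allowing differentiation under the integral sign near $u = 0$, but which should be stated carefully to keep constants independent of $n$. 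Everything else is routine.
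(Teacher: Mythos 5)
Your proposal is correct and follows essentially the same route as the paper: establish finiteness of the moment generating function of $\Pi = \pi^1\pi^2$ near zero by conditioning and using subgaussianity, upgrade the second-order Taylor behavior $m(u) = 1 + \tfrac12 u^2 + o(u^2)$ to $m(u) \le \exp(\tfrac12(1+\eta')u^2)$ on a small interval, run the standard Chernoff argument with the tilt constrained to that interval, and transfer to the truncated prior by the natural coupling (a step the paper handles only implicitly, so your explicit argument is a small improvement). One bookkeeping slip: since $\la \bx^1,\bx^2\ra = \frac1n\sum_i \Pi_i$, the Markov bound should read $e^{-st}\left(m(s/n)\right)^n$ rather than $e^{-snt}\left(m(s/n)\right)^n$; the optimizer is then $s = nt/(1+\eta')$, which is exactly what reproduces the exponent $-nt^2/(2(1+\eta'))$ and the constraint $t \le (1+\eta')u_0$ that you state, so the conclusion is unaffected.
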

\noindent We defer the proof to Appendix~\ref{app:local-chernoff}.

\begin{proof}[Proof of Theorem~\ref{thm:wish-lowdeg} (Part 1)]
Let $\beta^2 < \gamma$. We decompose the norm of the low-degree likelihood ratio into two parts, which we will bound separately:

\[ \|L_{n, \gamma, \beta, \sX}^{\leq D}\|_{L^2(\QQ_{n})}^2 = \Ex_{\bx^1, \bx^2 \sim \sX_n}\left[\varphi_{N, \lfloor D / 2\rfloor}\left(\frac{\beta^2}{4} \la \bx^1, \bx^2 \ra^2\right)\right] = R_1 + R_2 \]
where
\begin{align*}
R_1 &\colonequals \Ex_{\bx^1, \bx^2 \sim \sX_n}\left[\One_{|\langle \bx^1,\bx^2 \rangle| \le \varepsilon}\, \varphi_{N, \lfloor D / 2\rfloor}\left(\frac{\beta^2}{4} \la \bx^1, \bx^2 \ra^2\right)\right], \\
R_2 &\colonequals \Ex_{\bx^1, \bx^2 \sim \sX_n}\left[\One_{|\langle \bx^1,\bx^2 \rangle| > \varepsilon}\, \varphi_{N, \lfloor D / 2\rfloor}\left(\frac{\beta^2}{4} \la \bx^1, \bx^2 \ra^2\right)\right].
\end{align*}
Here $\varepsilon > 0$ is a small constant to be chosen later. We call $R_1$ the \emph{small deviations} and call $R_2$ the \emph{large deviations}.

The following two lemmas bound these two terms, respectively.
First, we bound the large deviations.
\begin{lemma}[Large Deviations]\label{lem:large-dev}
Let $\beta^2 < \gamma$. Suppose $\sX$ is a $\beta$-good normalized spike prior that admits a local Chernoff bound. Suppose that for any $n$, $\bx \sim \sX_n$ satisfies $\|\bx\|^2 \le 2$ almost surely. If $D = o(n/\log n)$ and $\varepsilon > 0$ is any constant, then $R_2 = o(1)$.
\end{lemma}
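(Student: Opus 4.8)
\textbf{Proof proposal for Lemma~\ref{lem:large-dev}.}

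The plan is to bound $R_2$ by a product: a deterministic upper bound on the integrand $\varphi_{N,\lfloor D/2\rfloor}\!\left(\tfrac{\beta^2}{4}\la\bx^1,\bx^2\ra^2\right)$ times the probability of the large-deviation event $\{|\la\bx^1,\bx^2\ra|>\varepsilon\}$. Since the integrand is nonnegative, $R_2 \le \big(\sup\varphi_{N,\lfloor D/2\rfloor}(\cdot)\big)\cdot\Pr\{|\la\bx^1,\bx^2\ra|>\varepsilon\}$, so it suffices to show the first factor is $e^{o(n)}$ and the second is $e^{-\Omega(n)}$, whence the product tends to $0$.

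\emph{Bounding the integrand.} Because $\|\bx^1\|^2,\|\bx^2\|^2\le 2$ almost surely, Cauchy--Schwarz gives $|\la\bx^1,\bx^2\ra|\le 2$, so the argument $\tfrac{\beta^2}{4}\la\bx^1,\bx^2\ra^2$ always lies in $[0,\beta^2]$. By \eqref{eq:taylor} together with Proposition~\ref{prop:gf-central-binom}, $\varphi_{N,k}$ with $k:=\lfloor D/2\rfloor$ is the degree-$k$ truncation of $(1-4x)^{-N/2}=\sum_{d\ge 0}4^d\binom{N/2+d-1}{d}x^d$; in particular it has nonnegative coefficients and is nondecreasing on $[0,\infty)$, so $\varphi_{N,k}\!\left(\tfrac{\beta^2}{4}\la\bx^1,\bx^2\ra^2\right)\le\varphi_{N,k}(\beta^2)$. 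Estimating crudely, for $d\le k$ and $n$ large (so that $k=o(n)\ll N/2$) we have $\binom{N/2+d-1}{d}\le(eN/d)^d\le(eN/k)^k$, whence $\varphi_{N,k}(\beta^2)=\sum_{d=0}^{k}4^d\binom{N/2+d-1}{d}\beta^{2d}\le(k+1)\max(1,\beta^{2k})\,(4eN/k)^k$. Taking logarithms, $\log\varphi_{N,k}(\beta^2)\le\log(k+1)+O(k)+k\log(N/k)$; writing $k=o(n/\log n)=o(n)$ in the form $k=n/g(n)$ with $g(n)\to\infty$ and recalling $N=\Theta(n)$, we get $k\log(N/k)=\tfrac{n}{g(n)}\log\Theta(g(n))=o(n)$, so $\log\varphi_{N,k}(\beta^2)=o(n)$.

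\emph{Bounding the tail, and conclusion.} Apply the local Chernoff bound (Definition~\ref{def:local-chernoff}) with $\eta=\tfrac12$ to obtain constants $\delta,C>0$ with $\Pr\{|\la\bx^1,\bx^2\ra|\ge t\}\le C\exp(-\tfrac14 nt^2)$ for $t\in[0,\delta]$. Since $\varepsilon>0$ is a fixed constant, evaluating at $t=\min(\varepsilon,\delta)$ (and using that the tail is nonincreasing in $t$, which also covers the case $\varepsilon\ge\delta$) gives $\Pr\{|\la\bx^1,\bx^2\ra|>\varepsilon\}\le C\exp\!\big(-\tfrac14 n\min(\varepsilon,\delta)^2\big)=e^{-\Omega(n)}$. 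Combining, $R_2\le\varphi_{N,k}(\beta^2)\cdot\Pr\{|\la\bx^1,\bx^2\ra|>\varepsilon\}\le e^{o(n)}\cdot e^{-\Omega(n)}=o(1)$.

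\emph{Main obstacle.} The only step where the hypotheses do real work is the estimate $\log\varphi_{N,k}(\beta^2)=o(n)$. Note the full power series $(1-4x)^{-N/2}$ diverges at $x=\beta^2$ whenever $\beta^2\ge\tfrac14$, so some care with the truncation is essential: the point is that truncating to degree $k=o(n/\log n)\ll N$ keeps the value subexponential in $n$, since the degree-$d$ coefficient is only $(O(N)/d)^d$ and $k\log(N/k)=o(n)$ in this regime. It is precisely the interplay between the degree budget $D=o(n/\log n)$ and the $\Theta(n)$ tail-decay rate that yields an $o(1)$ product; for $D=\Theta(n)$ the argument would fail. (The condition $\beta^2<\gamma$ is not used in this lemma; it is the complementary ``small deviations'' term $R_1$ that relies on it.)
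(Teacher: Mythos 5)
Your proposal is correct and follows essentially the same route as the paper's proof: bound $R_2$ by a deterministic bound on $\varphi_{N,\lfloor D/2\rfloor}(\beta^2)$ times the tail probability $\Pr\{|\la \bx^1,\bx^2\ra| > \varepsilon\}$ from the local Chernoff bound. The only difference is in how the truncated series is estimated---the paper crudely counts the at most $N^{D/2}$ compositions in \eqref{eq:taylor} and bounds each product of central binomial coefficients by $\binom{2d}{d} \le D^{D/2}$, yielding a bound of roughly $(ND)^{D/2}$ that needs $D = o(n/\log n)$, whereas your closed-form coefficient $4^d\binom{N/2+d-1}{d}$ gives the tighter $(4eN/k)^k$, which (as you note) would suffice even for $D = o(n)$.
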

\noindent We give a proof summary, with the full proof deferred to Appendix~\ref{app:large-dev}. Since $\|\bx^1\|^2 \le 2$ and $\|\bx^2\|^2 \le 2$,
$$R_2 \le \Pr\left\{|\langle \bx^1,\bx^2 \rangle| > \varepsilon\right\} \varphi_{N, \lfloor D / 2\rfloor}(\beta^2).$$
By the local Chernoff bound, $\Pr\left\{|\langle \bx^1,\bx^2 \rangle| > \varepsilon\right\}$ decays exponentially in $n$. To complete the proof, we use elementary combinatorial bounds to control the polynomial expression \eqref{eq:taylor} for $\varphi_{N, \lfloor D / 2\rfloor}(\beta^2)$.
Its growth is roughly of order $O(n^D)$, which is counteracted by the exponential decay of $\Pr\left\{|\langle \bx^1,\bx^2 \rangle| > \varepsilon\right\}$ so long as $D = o(n / \log n)$.

Next, we bound the small deviations.
For this part of the argument, it is irrelevant that the likelihood ratio is truncated to its low-degree component, and we essentially reuse an existing argument for the full likelihood ratio from \cite{PWBM-contig}.
\begin{lemma}[Small Deviations]\label{lem:small-dev}
Let $\beta^2 < \gamma$. Suppose $\sX$ is a $\beta$-good normalized spike prior that admits a local Chernoff bound. Let $D = D(n)$ be any function of $n$. If $\varepsilon > 0$ is a sufficiently small constant then $R_1 = O(1)$.
\end{lemma}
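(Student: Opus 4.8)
The plan is to discard the truncation of the generating function at the outset and reduce to a restricted second‑moment estimate. Every coefficient in the power series \eqref{eq:taylor} of $\varphi_N$ is nonnegative, and on the event $\{|\la \bx^1,\bx^2\ra| \le \varepsilon\}$ the argument $\tfrac{\beta^2}{4}\la\bx^1,\bx^2\ra^2$ is a nonnegative number strictly less than $\tfrac14$ (take $\varepsilon$ small enough that $\beta^2\varepsilon^2 < 1$). Hence $\varphi_{N,\lfloor D/2\rfloor}(x) \le \varphi_N(x) = (1-4x)^{-N/2}$ there, so, writing $Z \colonequals \la \bx^1,\bx^2\ra$,
$$R_1 \le \Ex_{\bx^1,\bx^2 \sim \sX_n}\left[\One_{|Z|\le\varepsilon}\,(1-\beta^2 Z^2)^{-N/2}\right],$$
which no longer involves $D$. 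This is exactly the small‑overlap contribution to the second moment \eqref{eq:2nd}, i.e.\ the quantity analyzed below the threshold in \cite{PWBM-contig}, which is what the lemma statement alludes to.

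Next I would linearize the exponent: for $|Z|\le\varepsilon$ one has $-\log(1-\beta^2 Z^2)\le \beta^2 Z^2(1 + c\,\beta^2\varepsilon^2)$ for an absolute constant $c$, so $(1-\beta^2 Z^2)^{-N/2}\le \exp(\tfrac{N}{2}\beta'^2 Z^2)$ where $\beta'^2 \colonequals \beta^2(1+c\,\beta^2\varepsilon^2)$. Since $\beta^2 < \gamma$ is a \emph{strict} inequality, $\beta'^2 < \gamma$ still holds once $\varepsilon$ is small enough; fix such an $\varepsilon$. Then with $h(t) \colonequals \exp(\tfrac{N\beta'^2}{2}t^2)$, a layer‑cake decomposition gives
$$R_1 \le \Ex\!\left[\One_{|Z|\le\varepsilon}\,h(|Z|)\right] = \Pr[|Z|\le\varepsilon] + \int_0^\varepsilon h'(t)\,\Pr[t<|Z|\le\varepsilon]\,dt \le 1 + N\beta'^2\!\int_0^\varepsilon t\,\exp\!\left(\tfrac{N\beta'^2}{2}t^2\right)\Pr[|Z|>t]\,dt.$$

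Finally I would invoke the local Chernoff bound. Choose $\eta>0$ small enough that $\tfrac{\beta'^2}{\gamma} < 1-\eta$ (possible because $\beta'^2<\gamma$); this produces $\delta,C$, and I shrink $\varepsilon$ further so that $\varepsilon\le\delta$. Using $\Pr[|Z|>t]\le C\exp(-\tfrac12(1-\eta)n t^2)$ on $[0,\varepsilon]$ together with $N\le(1+o(1))n/\gamma$, the integrand is at most $C'\,t\exp\!\big(\tfrac n2(\tfrac{\beta'^2}{\gamma}-(1-\eta)+o(1))\,t^2\big)$, and the parenthesized rate is a strictly negative constant $-c'$ for large $n$. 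Then $\int_0^\varepsilon t\,e^{-c'n t^2}\,dt \le \tfrac{1}{2c'n}$, so the whole integral is $O(N/n)=O(1)$ and $R_1 = O(1)$. The main obstacle is purely bookkeeping of the quantifiers: the margin $\gamma-\beta^2>0$ must first be spent on $\beta'^2$ (via the first shrinking of $\varepsilon$) and on $\eta$ (so the Chernoff rate beats the $e^{N\beta'^2 t^2/2}$ growth), and only afterwards may $\varepsilon$ be constrained to the interval $[0,\delta]$ on which the local Chernoff bound is valid, while also absorbing the gap between $N$ and $n/\gamma$ into these slacks; the individual estimates are routine, but the argument closes only because all of these choices are simultaneously compatible.
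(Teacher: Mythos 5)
Your proposal is correct and follows essentially the same route as the paper's proof: both drop the truncation via nonnegativity of the Taylor coefficients of $\varphi_N$, bound $-\log(1-\beta^2 Z^2)$ by a linear function of $Z^2$ with slope $\beta^2(1+O(\varepsilon^2))$ (the paper via the convexity chord bound on $[0,\varepsilon^2]$, you via the series expansion — equivalent estimates), apply a layer-cake decomposition, and close with the local Chernoff bound using the strict margin $\beta^2<\gamma$. The quantifier ordering you describe matches the paper's choice of $\varepsilon$ and $\eta$ at the end of its argument.
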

\noindent We again give a proof summary, with the full proof deferred to Appendix~\ref{app:small-dev}. As mentioned above, unlike in the proof of Lemma~\ref{lem:large-dev}, here we simply bound $\varphi_{N, \lfloor D / 2\rfloor} \leq \varphi_N$ in the expression for $R_1$. To bound the resulting expression, we borrow an argument from \cite{PWBM-contig}. This step crucially uses the local Chernoff bound, and amounts to showing that the exponential decay from the Chernoff bound sufficiently counteracts the exponential growth of the likelihood ratio term $\varphi_N(\frac{\beta^2}{4}\la \bx^1, \bx^2 \ra^2)$ when $\la \bx^1, \bx^2 \ra$ is small.

Combining Proposition~\ref{prop:local-chernoff} with Lemmas~\ref{lem:large-dev} and \ref{lem:small-dev} completes the proof of Part 1 of Theorem~\ref{thm:wish-lowdeg}. (The proof of Part 2 is deferred to Appendix~\ref{app:above-thresh}.)
\end{proof}

\section*{Acknowledgements}
We thank Andrea Montanari and Samuel B.\ Hopkins for insightful discussions.

\bibliographystyle{alpha}
\bibliography{main}

\newpage

\appendix

\section{Proofs for Computing the Low-Degree Likelihood Ratio}
\label{app:orth-poly}

\subsection{Generalized and Umbral Hermite Polynomials}

We introduce some calculations with a useful generalization of the Hermite polynomials.
While the usual Hermite polynomials are a family of orthogonal polynomials for the Gaussian measure with variance 1, and a straightforward generalization yields orthogonal polynomials for the Gaussian measure with any positive variance, we will use the surprising further generalization to fictitious Gaussian measures with \emph{negative} variance, as described by the so-called \emph{umbral calculus}.
We follow the presentation of \cite{R-umbral} (specifically, Section 2.1 of Chapter 4).
\begin{definition}
    For any $v \in \RR$, the \emph{Hermite polynomials with variance $v$} are defined by the recursion
    \begin{align}
        h_0(x; v) &= 1, \\
        h_{k + 1}(x; v) &= xh_k(x; v) - v \partial_x[h_{k}](x; v).
    \end{align}
\end{definition}
\noindent
The next facts are useful for translating between different versions of the basic recursion and other properties of the Hermite polynomials.
\begin{proposition}[Differentiation Identity]
    For any $v, x \in \RR$,
    \begin{equation}
        \partial_x [h_k](x; v) = kh_{k - 1}(x; v).
    \end{equation}
\end{proposition}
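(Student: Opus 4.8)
The plan is to prove the identity by induction on $k$, differentiating the defining recursion $h_{k+1}(x; v) = xh_k(x; v) - v\,\partial_x[h_k](x; v)$. For the base case, note $h_0(x; v) = 1$ gives $\partial_x[h_0](x; v) = 0 = 0 \cdot h_{-1}(x;v)$ under the convention $h_{-1} \equiv 0$; alternatively one checks $k = 1$ directly, where $h_1(x; v) = xh_0 - v\,\partial_x[h_0] = x$ and so $\partial_x[h_1] = 1 = h_0$.

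For the inductive step, assume the claimed identity holds for all indices up to $k$. Differentiating the recursion in $x$ yields
$$\partial_x[h_{k+1}](x;v) = h_k(x;v) + x\,\partial_x[h_k](x;v) - v\,\partial_x^2[h_k](x;v).$$
By the inductive hypothesis, $\partial_x[h_k] = k h_{k-1}$, and therefore $\partial_x^2[h_k] = k\,\partial_x[h_{k-1}] = k(k-1)h_{k-2}$. Substituting these gives
$$\partial_x[h_{k+1}](x;v) = h_k(x;v) + k\bigl(x h_{k-1}(x;v) - v(k-1)h_{k-2}(x;v)\bigr).$$
Now invoke the recursion once more, this time at index $k-1$: $h_k = x h_{k-1} - v\,\partial_x[h_{k-1}] = x h_{k-1} - v(k-1)h_{k-2}$, where we again used the inductive hypothesis for $\partial_x[h_{k-1}]$. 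Hence the bracketed expression equals $h_k$, and we conclude $\partial_x[h_{k+1}] = h_k + k h_k = (k+1)h_k$, completing the induction.

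I do not expect any genuine obstacle here; the only point requiring care is the bookkeeping of which instance of the recursion is being applied at each stage—once to expand $\partial_x[h_{k+1}]$ into lower-index polynomials, and once at index $k-1$ to recognize $x h_{k-1} - v(k-1)h_{k-2}$ as $h_k$—together with the harmless convention $h_{-1} \equiv 0$ needed to make the base case uniform.
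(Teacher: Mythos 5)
Your induction is correct: the base cases are handled, the strong inductive hypothesis is applied legitimately at indices $k$ and $k-1$, and the recognition of $x h_{k-1} - v(k-1)h_{k-2}$ as $h_k$ via the defining recursion closes the argument. The paper itself states this proposition without proof, treating it as a standard fact from the umbral-calculus literature (it cites Roman's presentation), so there is no proof to compare against; your elementary induction is a perfectly good self-contained justification, and the convention $h_{-1} \equiv 0$ you adopt is the natural one.
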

\begin{proposition}[Alternate Recursion]
    For any $v \in \RR$, the Hermite polynomials are equivalently defined by the recursion
    \begin{align}
        h_0(x; v) &= 1, \\
        h_{k + 1}(x; v) &= xh_k(x; v) - v k h_{k - 1}(x; v).
    \end{align}
\end{proposition}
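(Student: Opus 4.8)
The plan is to read off the alternate recursion directly from the defining recursion by substituting the Differentiation Identity stated just above. Both recursions begin from $h_0(x;v) = 1$, and a first-order recursion in $k$ together with a prescribed initial value determines the whole sequence, so it is enough to check that the two rules for passing from $h_k$ to $h_{k+1}$ coincide. Starting from $h_{k+1}(x;v) = x h_k(x;v) - v\,\partial_x[h_k](x;v)$ and inserting $\partial_x[h_k](x;v) = k\,h_{k-1}(x;v)$, we obtain $h_{k+1}(x;v) = x h_k(x;v) - v k\,h_{k-1}(x;v)$, which is exactly the alternate recursion. The edge case $k = 0$ needs a word: there $h_{-1}$ is undefined, but its coefficient $v k$ vanishes, and indeed $h_1(x;v) = x h_0(x;v) - v\,\partial_x[h_0](x;v) = x$, consistent with dropping that term.

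Modulo the Differentiation Identity there is nothing more to do, so the only real content lies in that identity. For completeness I would include its proof by induction on $k$: the base cases $\partial_x[h_0] = 0$ and $\partial_x[h_1] = \partial_x[x] = 1 = h_0$ are immediate, and for the inductive step one differentiates the defining recursion,
\[
\partial_x[h_{k+1}](x;v) = h_k(x;v) + x\,\partial_x[h_k](x;v) - v\,\partial_x^2[h_k](x;v),
\]
then applies the inductive hypothesis as $\partial_x[h_k] = k h_{k-1}$ and $\partial_x^2[h_k] = k(k-1) h_{k-2}$, and recognizes $x\,h_{k-1}(x;v) - v(k-1)\,h_{k-2}(x;v) = h_k(x;v)$ via the defining recursion one index down (using $\partial_x[h_{k-1}] = (k-1) h_{k-2}$, again the hypothesis). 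This yields $\partial_x[h_{k+1}] = h_k + k h_k = (k+1) h_k$, closing the induction.

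There is essentially no obstacle in the present proposition itself — it is a one-line substitution. If I had to prove the Differentiation Identity from scratch, the only subtlety would be bookkeeping: organizing the induction so that the recognition step uses the defining recursion rather than the alternate one (which is not yet available at that point), and stating the $k = 0$ convention explicitly so that the $v k h_{k-1}$ term is unambiguous.
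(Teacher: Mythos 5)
Your argument is correct: substituting the Differentiation Identity $\partial_x[h_k](x;v) = k h_{k-1}(x;v)$ into the defining recursion immediately yields the alternate recursion, and your strong induction establishing the Differentiation Identity itself (carefully using the defining recursion, not the alternate one, in the recognition step) is sound. The paper states this proposition without proof as a standard fact from the umbral calculus literature, and your derivation is exactly the expected one.
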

\noindent
The following is yet another common way of defining the Hermite polynomials, in terms of the derivatives of the corresponding Gaussian density (or, in the negative variance case, a suitable generalization thereof).
\begin{proposition}[Rodrigues Formula]
    \label{prop:rodrigues}
    Let $v \in \RR$ with $v \neq 0$.
    Then,
    \begin{equation}
    \frac{d^k}{dx^k}\left[\exp\left(-\frac{1}{2v}x^2\right)\right] = (-v)^{-k}h_k(x; v)\exp\left(-\frac{1}{2v}x^2\right).
    \end{equation}
\end{proposition}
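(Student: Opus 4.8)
The plan is to prove the Rodrigues formula by induction on $k$, using only the defining recursion for the generalized Hermite polynomials $h_k(\,\cdot\,;v)$ together with the product rule; no special properties of $v$ beyond $v \neq 0$ are needed. For the base case $k = 0$, we have $h_0(x;v) = 1$ and $(-v)^0 = 1$, so both sides of the claimed identity equal $\exp(-x^2/(2v))$.

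For the inductive step, suppose the identity holds for some $k \geq 0$. Differentiating both sides once more in $x$ and applying the product rule to the right-hand side, together with $\frac{d}{dx}\exp(-\tfrac{1}{2v}x^2) = -\tfrac{x}{v}\exp(-\tfrac{1}{2v}x^2)$, gives
\[
\frac{d^{k+1}}{dx^{k+1}}\left[\exp\left(-\tfrac{1}{2v}x^2\right)\right] = (-v)^{-k}\exp\left(-\tfrac{1}{2v}x^2\right)\left(\partial_x[h_k](x;v) - \tfrac{x}{v}\,h_k(x;v)\right).
\]
Now I would rewrite the bracketed factor using the defining recursion $h_{k+1}(x;v) = x\,h_k(x;v) - v\,\partial_x[h_k](x;v)$, which rearranges to $\partial_x[h_k](x;v) - \tfrac{x}{v}\,h_k(x;v) = -\tfrac{1}{v}\,h_{k+1}(x;v)$. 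Substituting this in and combining the scalar factors via $(-v)^{-k}\cdot(-v)^{-1} = (-v)^{-(k+1)}$ yields exactly the claimed identity at level $k+1$, closing the induction.

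The argument is entirely mechanical, so there is no real obstacle; the only thing that requires care is bookkeeping of the sign and the power of $(-v)$. The one conceptual point worth emphasizing is that the induction never uses $v > 0$: the recursion, the product rule, and the algebra all go through verbatim for $v < 0$, which is precisely why the proposition is stated for general $v \neq 0$ and underlies the ``umbral'' negative-variance calculus used later. The restriction $v \neq 0$ enters only to make $\exp(-x^2/(2v))$ and $(-v)^{-k}$ meaningful.
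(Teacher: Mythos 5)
Your proof is correct, and it is essentially the same induction-plus-recursion argument the paper uses for the multidimensional generalization (Proposition~\ref{prop:ip-gaussian-diff}); the one-dimensional statement itself is quoted in the paper as standard without proof. The sign bookkeeping via $\partial_x[h_k](x;v) - \tfrac{x}{v}h_k(x;v) = -\tfrac{1}{v}h_{k+1}(x;v)$ is exactly right, and your remark that nothing requires $v>0$ is the relevant point for the umbral use later.
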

\noindent
The next fact shows how the generalized Hermite polynomials transform under scaling.
\begin{proposition}[Scaling Identity]
    Let $v, w, x \in \RR$, then
    \begin{equation}
        h_k(wx; v) = w^k h_k\left(x; \frac{v}{w^2}\right).
    \end{equation}
\end{proposition}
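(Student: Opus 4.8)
The plan is to prove the Scaling Identity by induction on $k$, using the \emph{Alternate Recursion} rather than the original recursion, since the Alternate Recursion is purely algebraic and lets us substitute the inductive hypothesis directly without invoking the chain rule. Throughout I assume $w \neq 0$, so that the variance $v/w^2$ appearing on the right-hand side is well-defined; the only content when $w=0$ is the case $k=0$, where both sides equal $1$.

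For the base cases, $h_0(wx;v) = 1 = w^0 h_0(x; v/w^2)$, and since the recursion gives $h_1(x;v) = x h_0(x;v) - v\,\partial_x[h_0](x;v) = x$ for every $v$, we also have $h_1(wx;v) = wx = w\, h_1(x; v/w^2)$. For the inductive step, fix $k \geq 1$ and assume the identity at indices $k-1$ and $k$. Applying the Alternate Recursion at the point $wx$ with variance $v$, then substituting the two inductive hypotheses, gives
\[
h_{k+1}(wx; v) = (wx)\, h_k(wx; v) - v k\, h_{k-1}(wx; v) = w^{k+1}\, x\, h_k(x; \tfrac{v}{w^2}) - v k\, w^{k-1}\, h_{k-1}(x; \tfrac{v}{w^2}).
\]
Factoring out $w^{k+1}$ yields
\[
h_{k+1}(wx; v) = w^{k+1}\left( x\, h_k(x; \tfrac{v}{w^2}) - \tfrac{v}{w^2}\, k\, h_{k-1}(x; \tfrac{v}{w^2}) \right),
\]
and the bracketed expression is exactly $h_{k+1}(x; v/w^2)$ by the Alternate Recursion applied with variance $v/w^2$. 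This closes the induction.

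I do not anticipate any genuine obstacle; the only points requiring minor care are tracking the powers of $w$ correctly and noting the degenerate case $w = 0$. As an alternative one could give a one-line proof via the exponential generating function $\sum_{k} h_k(x;v)\, t^k/k! = \exp(xt - vt^2/2)$ together with the substitution $s = wt$, but that would require first establishing the generating function, whereas the inductive argument above uses only recursions already available in the paper.
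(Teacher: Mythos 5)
Your proof is correct. Note that the paper does not actually prove this proposition: it is listed among the ``standard'' facts about the generalized Hermite polynomials, quoted from the umbral-calculus literature (the presentation follows \cite{R-umbral}), and only the two subsequent propositions are proved in the appendix. So there is no in-paper argument to compare against. Your two-step induction via the Alternate Recursion is a clean, self-contained way to establish the identity using only tools the paper has already stated, and your remark that the statement is only meaningful for $w \neq 0$ (the $w = 0$ case being vacuous beyond $k = 0$) is a fair reading of the proposition as written.
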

\noindent
Finally, the following is a generalized version of Gaussian integration by parts.
We only provide the version of this identity for the standard Hermite polynomials, which is the only one we will use, but analogous statements hold for the generalized and umbral Hermite polynomials.
\begin{proposition}[Integration by Parts]
    \label{prop:hermite-ibp}
    Let $f \in \sC^k(\RR)$ have $|f^{(i)}(x)| \leq e^{Cx}$ for all $i \in \{0, 1, \dots, k\}$ and some $C > 0$.
    Then,
    \begin{equation}
        \Ex_{g \sim \sN(0, 1)}\left[ h_k(g; 1)f(g) \right] = \Ex_{g \sim \sN(0, 1)}\left[ f^{(k)}(g) \right].
    \end{equation}
\end{proposition}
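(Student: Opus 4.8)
The plan is to prove the identity by induction on $k$, using the single-variable Gaussian integration-by-parts (Stein) identity $\Ex_{g \sim \sN(0,1)}[g\,\phi(g)] = \Ex_{g \sim \sN(0,1)}[\phi'(g)]$ as the only analytic input, together with the defining recursion for $h_k(\cdot\,;1)$. For the base case $k=0$ we have $h_0(x;1)=1$, so both sides equal $\Ex_{g\sim\sN(0,1)}[f(g)]$, which is finite since $|f(x)|\le e^{Cx}$.

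For the inductive step, suppose the claim holds at level $k$ for every function in $\sC^k(\RR)$ whose first $k$ derivatives are bounded by $e^{Cx}$. Given $f \in \sC^{k+1}(\RR)$ with $|f^{(i)}(x)|\le e^{Cx}$ for $0 \le i \le k+1$, use $h_{k+1}(x;1) = x\,h_k(x;1) - \partial_x[h_k](x;1)$ (the basic recursion with $v=1$) to write
\[
\Ex_{g\sim\sN(0,1)}[h_{k+1}(g;1) f(g)] = \Ex_{g\sim\sN(0,1)}[g\,h_k(g;1) f(g)] - \Ex_{g\sim\sN(0,1)}[\partial_x h_k(g;1)\, f(g)].
\]
Applying the Stein identity to $\phi(x) = h_k(x;1) f(x)$ and expanding $\phi'$ by the product rule, the first term on the right equals $\Ex[\partial_x h_k(g;1)\, f(g)] + \Ex[h_k(g;1)\, f'(g)]$, so the two $\partial_x h_k \cdot f$ contributions cancel and we are left with $\Ex[h_{k+1}(g;1) f(g)] = \Ex[h_k(g;1)\, f'(g)]$. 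Since $f' \in \sC^k(\RR)$ with $|(f')^{(i)}(x)| = |f^{(i+1)}(x)| \le e^{Cx}$ for $0 \le i \le k$, the inductive hypothesis applied to $f'$ gives $\Ex[h_k(g;1)\, f'(g)] = \Ex[(f')^{(k)}(g)] = \Ex[f^{(k+1)}(g)]$, completing the induction.

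The main point to handle carefully — essentially the only nontrivial step — is justifying the Stein identity under our growth hypotheses. Writing it out as $\int \phi(x)\,(-\frac{d}{dx}e^{-x^2/2})\,dx = \int \phi'(x)\,e^{-x^2/2}\,dx$ after integration by parts, the boundary term is $\bigl[-\phi(x) e^{-x^2/2}\bigr]_{-\infty}^{\infty}$, and this vanishes because $\phi(x)=h_k(x;1)f(x)$ satisfies $|\phi(x)| \le C'(1+|x|)^k e^{Cx}$, while $(1+|x|)^k e^{Cx} e^{-x^2/2} \to 0$ as $x \to \pm\infty$; the same bound makes all the integrals above absolutely convergent, so every expectation is well-defined and the manipulations are legitimate. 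An equivalent route, which I would mention as an alternative, is to substitute the Rodrigues formula $h_k(x;1)e^{-x^2/2} = (-1)^k \frac{d^k}{dx^k} e^{-x^2/2}$ from Proposition~\ref{prop:rodrigues} directly and integrate by parts $k$ times at once; this requires the same decay estimate $p(x) f^{(i)}(x) e^{-x^2/2} \to 0$ for polynomials $p$ and for $0 \le i \le k$, which again follows from $|f^{(i)}(x)| \le e^{Cx}$.
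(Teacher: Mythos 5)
Your proof is correct. The paper states this proposition as a standard fact among the background identities for Hermite polynomials and gives no proof of its own, so there is nothing to compare against; your induction on $k$ via the Stein identity $\Ex[g\,\phi(g)]=\Ex[\phi'(g)]$ combined with the recursion $h_{k+1}(x;1)=x\,h_k(x;1)-\partial_x[h_k](x;1)$ is the standard argument, and you correctly handle the only delicate point, namely that the polynomial-times-$e^{Cx}$ growth of $h_k(\cdot\,;1)f$ and its derivative kills the boundary term against the Gaussian density and makes every expectation involved absolutely convergent.
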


While the above results are standard, we now give two results we will use in our calculation that do not seem to appear explicitly in the previous literature, although they are straightforward to obtain from the preceding facts.
First, we will use the following slightly more general version of the Rodgrigues formula (Proposition~\ref{prop:rodrigues}) in our calculations.
\begin{proposition}[Multidimensional Rodrigues Formula]
    \label{prop:ip-gaussian-diff}
    Let $\bx \in \RR^n$, $\bm\alpha \in \NN^n$, and $v \in \RR$ with $v \neq 0$.
    Then,
    \begin{equation}
        \partial_{\bm y}^{\bm\alpha}\left[\exp\left(-\frac{1}{2v}\la \bx, \by \ra^2\right)\right] = (-v)^{-|\bm\alpha|}\bx^{\bm\alpha} h_{|\bm\alpha|}(\la \bx, \by \ra; v)\exp\left(-\frac{1}{2v}\la \bx, \by \ra^2\right).
    \end{equation}
\end{proposition}
\begin{proof}
    We proceed by induction on $|\bm\alpha|$.
    Clearly the result holds for $\bm\alpha = \bm 0$.
    Suppose the result holds for all $|\bm\alpha^\prime| \leq k$, and $|\bm\alpha| = k + 1 > 0$.
    Let $\bm\alpha^\prime$ having $|\bm\alpha^\prime| = k$ differ from $\bm\alpha$ only in coordinate $i$, so that $\alpha_i^\prime = \alpha_i - 1$ and $\alpha_j^\prime = \alpha_j$ for all $j \neq i$.
    Then,
    \begin{align*}
    &\partial_{\bm y}^{\bm\alpha}\left[\exp\left(-\frac{1}{2v}\la \bx, \by \ra^2\right)\right] \nonumber \\
    &\hspace{2cm}= \partial_{y_i}\left[\partial_{\bm y}^{\bm\alpha^\prime}\left[\exp\left(-\frac{1}{2v}\la \bx, \by \ra^2\right)\right]\right] \\
    &\hspace{2cm}= (-v)^{-k}\bx^{\bm\alpha^\prime}\partial_{y_i}\left[h_{k}(\la \bx, \by \ra; v)\exp\left(-\frac{1}{2v}\la \bx, \by \ra^2\right)\right] \\
    &\hspace{2cm}= (-v)^{-k}\bx^{\bm\alpha^\prime}\bigg(x_i \partial_x[h_{k}](\la \bx, \by \ra; v) - v^{-1}\la \bx, \by \ra x_i h_k(\la \bx, \by \ra; v)\bigg)\exp\left(-\frac{1}{2v}\la \bx, \by \ra^2\right) \\
    &\hspace{2cm}= (-v)^{-(k + 1)}\bx^{\bm\alpha}\bigg( \la \bx, \by \ra h_k(\la \bx, \by \ra; v) - v\partial_x[h_{k}](\la \bx, \by \ra; v)\bigg)\exp\left(-\frac{1}{2v}\la \bx, \by \ra^2\right) \\
    &\hspace{2cm}= (-v)^{-(k + 1)}\bx^{\bm\alpha}h_{k + 1}(\la \bx, \by \ra; v)\exp\left(-\frac{1}{2v}\la \bx, \by \ra^2\right),
    \end{align*}
    completing the proof.
\end{proof}

\noindent
Second, we will need the following calculation evaluating the expectation of any Hermite polynomial under any centered Gaussian measure.
\begin{proposition}[Expectation Under Mismatched Variance]
    \label{prop:hermite-mismatched}
    Let $v \in \RR$ and $k \geq 0$.
    Then,
    \[ \EE_{g \sim \sN(0, \sigma^2)}\left[h_k(g; v)\right] = \left\{ \begin{array}{ll} 0 & \text{if } k \text{ odd,} \\ (k - 1)!! (\sigma^2 - v)^{k/2} & \text{if } k \text{ even.}\end{array}\right. \]
\end{proposition}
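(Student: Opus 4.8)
The plan is a short induction on $k$ using the Alternate Recursion together with the Differentiation Identity and Gaussian integration by parts. Write $m_k \colonequals \EE_{g \sim \sN(0, \sigma^2)}[h_k(g; v)]$. Since $h_0(x;v) = 1$ and $h_1(x;v) = x$, the base cases are $m_0 = 1$ and $m_1 = \EE[g] = 0$.

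For the inductive step, the Alternate Recursion gives
\[ m_{k+1} = \EE_{g \sim \sN(0,\sigma^2)}\bigl[ g\, h_k(g;v) \bigr] - v k\, m_{k-1}. \]
The elementary Stein identity $\EE_{g\sim\sN(0,\sigma^2)}[g\,f(g)] = \sigma^2\,\EE[f'(g)]$, applied with $f = h_k(\cdot;v)$ and combined with the Differentiation Identity $\partial_x h_k(x;v) = k\,h_{k-1}(x;v)$, yields $\EE[g\,h_k(g;v)] = \sigma^2 k\, m_{k-1}$. Hence
\[ m_{k+1} = k\,(\sigma^2 - v)\, m_{k-1}. \]
With $m_0 = 1$ and $m_1 = 0$, this two-term recursion forces $m_k = 0$ for every odd $k$ and telescopes, for even $k = 2\ell$, to $m_{2\ell} = (2\ell-1)(2\ell-3)\cdots 1 \cdot (\sigma^2-v)^\ell = (2\ell-1)!!\,(\sigma^2-v)^\ell$, which is exactly the claimed formula.

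I expect no real obstacle: the argument is purely formal in the parameter $v$, which is the whole point --- it applies verbatim when $v < 0$ (the ``umbral'' regime), even though there is no genuine Gaussian measure of negative variance. An equally short alternative would be to use the exponential generating function $\sum_{k\ge 0} \frac{t^k}{k!} h_k(x;v) = \exp(tx - \frac{1}{2}v t^2)$, take its expectation under $\sN(0,\sigma^2)$ to obtain $\exp(\frac{1}{2}(\sigma^2 - v)t^2)$, and read off the coefficient of $t^k$; I would present the recursion version since it relies only on identities already recorded above.
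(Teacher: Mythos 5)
Your proof is correct and follows essentially the same route as the paper's: both derive the two-step recursion $m_{k+2}=(k+1)(\sigma^2-v)\,m_k$ from the alternate recursion plus Stein's identity and the differentiation identity, then telescope. The only cosmetic differences are that the paper rescales to a standard Gaussian before integrating by parts and disposes of odd $k$ by the parity of $h_k$ rather than via $m_1=0$.
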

\begin{proof}
    The result for odd $k$ holds since $h_k(
    \cdot \hspace{0.05cm}; v)$ is an odd function for any $v \in \RR$ in this case.
    For even $k$, we argue by induction on $k$.
    The result clearly holds for $k = 0$.
    If the result holds for a given $k$, then we may compute
    \begin{align*}
      \EE_{g \sim \sN(0, \sigma^2)}[h_{k + 2}(g; v)]
      &= \EE_{g \sim \sN(0, 1)}[h_{k + 2}(\sigma g)] \\
      &= \sigma \EE_{g \sim \sN(0, 1)}[gh_{k + 1}(\sigma g; v)] - v(k + 1) \EE_{g \sim \sN(0, 1)}[h_k(\sigma g)] \\
      &= \sigma^2 \EE_{g \sim \sN(0, 1)}[\partial_x[h_{k +1}](\sigma g; v)] - v(k + 1)\EE_{g \sim \sN(0, 1)}[h_k(\sigma g)] \\
      &= (k + 1)\sigma^2 \EE_{g \sim \sN(0, 1)}[h_{k}(\sigma g; v)] - v(k + 1)\EE_{g \sim \sN(0, 1)}[h_k(\sigma g)] \\
      &= (k + 1)(\sigma^2 - v)\EE_{g \sim \sN(0, 1)}[h_k(\sigma g)] \\
      &= (k + 1)(\sigma^2 - v)\EE_{g \sim \sN(0, \sigma^2)}[h_k(g)]
    \end{align*}
    completing the proof.
\end{proof}
\noindent
We note two interesting features of this result.
First, it generalizes two simple cases, on the one hand $v = \sigma^2$ where the expectation is zero unless $k = 0$, as may be seen from the orthogonality relations, and on the other $v = 0$ where it recovers the moments of a Gaussian measure.
Second, the quantities appearing on the right-hand side formally resemble the moments of a Gaussian measure of suitable variance, but the formula in fact still holds for $\sigma^2 < v$, in which case case these quantities may be viewed as the moments of a fictitious Gaussian measure of negative variance (the same as inspired the umbral Hermite polynomials).

\subsection{Individual Hermite Components of the Likelihood Ratio}

\begin{proof}[Proof of Lemma~\ref{lem:proj-component-norms}]
By Proposition~\ref{prop:hermite-ibp}, we find
\begin{align}
  &\la L_{n, \gamma, \beta, \sX}, \what{H}_{\bm\alpha} \ra^2\nonumber \\
  &\hspace{0.5cm} = \frac{1}{\bm\alpha!} \left(\Ex_{\bm y \sim \QQ_{n}}\partial^{\bm\alpha}_{\by} L_{n, \gamma, \beta, \sX}(\by_1, \dots, \by_N)\right)^2 \nonumber \\
  &\hspace{0.5cm} = \frac{1}{\bm\alpha!} \left(\Ex_{\bm x \sim \sX_n, \bm y \sim \QQ_{n}}\left(1 + \beta \|\bx\|^2\right)^{-N / 2}\partial^{\bm\alpha}_{\by}\prod_{i = 1}^N \exp\left(\frac{1}{2} \frac{\beta}{1 + \beta \|\bx\|^2} \la \bx, \by_i \ra^2 \right) \right)^2 \nonumber \\
  &\hspace{0.5cm} =  \frac{1}{\prod_{i = 1}^N\bm\alpha_i!} \left(\Ex_{\bm x \sim \sX_n}\left(1 + \beta \|\bx\|^2\right)^{-N / 2}\prod_{i = 1}^N \Ex_{\bm y \sim \sN(\bm 0, \bm I_n)}\left[\partial^{\bm\alpha_i}_{\bm y} \exp\left(\frac{1}{2} \frac{\beta}{1 + \beta \|\bx\|^2} \la \bx, \by \ra^2 \right)\right] \right)^2,
\end{align}
where the $\bm\alpha_i \in \NN^n$ are the components of $\bm\alpha$ corresponding to $\by_i$, for each $i \in [N]$.
When $\beta = 0$, our result follows from the above, giving $\la L_{n, \gamma, \beta, \sX}, \what{H}_{\bm\alpha}\ra^2 = \delta_{0, |\bm\alpha|}$.
(Indeed, in this case the null and planted models are identical, so $L_{n, \gamma, \beta, \sX} = 1$ is a constant, which is compatible with the above.)
Let us suppose $\beta \neq 0$ below.

In this case, using Proposition~\ref{prop:ip-gaussian-diff}, we have
\begin{align}
    &\partial^{\bm\alpha_i}_{\bm y} \exp\left(\frac{1}{2} \frac{\beta}{1 + \beta \|\bx\|^2} \la \bx, \by \ra^2 \right) \nonumber \\
    &\hspace{1cm}= \left(\frac{1 + \beta \|\bx\|^2}{\beta}\right)^{-|\bm\alpha_i|}\bx^{\bm\alpha_i}h_{|\bm\alpha_i|}\left(\la \bx, \by \ra; -\frac{1 + \beta \|\bx\|^2}{\beta}\right)\exp\left(\frac{1}{2} \frac{\beta}{1 + \beta \|\bx\|^2} \la \bx, \by \ra^2 \right).
\end{align}
(Note that the sign of the spike, or equivalently the sign of $\beta$, is the opposite of the sign of the variance of the Hermite polynomials that appear; thus, it is the negatively spiked case that corresponds to the more natural positive variance Hermite polynomials.)
Since when $\by \sim \sN(\bm 0, \bm I_n)$ then $\la \bx, \by \ra \sim \sN(0, \|\bx\|^2)$, we find
\begin{align}
    \la L_{n, \gamma, \beta, \sX}, \what{H}_{\bm\alpha} \ra^2 = &\frac{\beta^{2|\bm\alpha|}}{\prod_{i = 1}^N\bm\alpha_i!} \Bigg(\Ex_{\bm x \sim \sX_n}\frac{\bx^{\sum_{i = 1}^N\bm\alpha_i}}{\left(1 + \beta \|\bx\|^2\right)^{|\bm\alpha| + N / 2}} \nonumber \\
    &\hspace{0.5cm}\prod_{i = 1}^N \Ex_{g \sim \sN(\bm 0, \|\bx\|^2)}h_{|\bm\alpha_i|}\left(g; -\frac{1 + \beta \|\bx\|^2}{\beta}\right)\exp\left(\frac{1}{2} \frac{\beta}{1 + \beta \|\bx\|^2} g^2 \right) \Bigg)^2.
    \label{eq:l-h-alpha-1}
\end{align}

We next focus on the innermost expectation. We may rewrite:
\begin{align}
    &\Ex_{g \sim \sN(\bm 0, \|\bx\|^2)}h_{|\bm\alpha_i|}\left(g; -\frac{1 + \beta \|\bx\|^2}{\beta}\right)\exp\left(\frac{1}{2} \frac{\beta}{1 + \beta \|\bx\|^2} g^2 \right) \nonumber \\
    &\hspace{0.5cm}= \frac{1}{\sqrt{2\pi \|\bx\|^2}}\int_{-\infty}^\infty h_{|\bm\alpha_i|}\left(g; -\frac{1 + \beta \|\bx\|^2}{\beta}\right)\exp\left(-\frac{1}{2}\left(\frac{1}{\|\bx\|^2} - \frac{\beta}{1 + \beta \|\bx\|^2}\right) g^2 \right)dg \nonumber \\
    &\hspace{0.5cm}= \frac{(1 + \beta\|\bx\|^2)^{1/2}}{\sqrt{2\pi \|\bx\|^2(1 + \beta\|\bx\|^2)}}\int_{-\infty}^\infty h_{|\bm\alpha_i|}\left(g; -\frac{1 + \beta \|\bx\|^2}{\beta}\right)\exp\left(-\frac{1}{2\|\bx\|^2(1 + \beta\|\bx\|^2)} g^2 \right)dg \nonumber \\
    &\hspace{0.5cm}= (1 + \beta\|\bx\|^2)^{1/2} \Ex_{g \sim \sN(0, \|\bx\|^2(1 + \beta\|\bx\|^2))}h_{|\bm\alpha_i|}\left(g; -\frac{1 + \beta \|\bx\|^2}{\beta}\right).
\end{align}
By Proposition~\ref{prop:hermite-mismatched}, this quantity will be zero unless $|\bm\alpha_i|$ is even, and thus $\la L_{n, \gamma, \beta, \sX}, \what{H}_{\bm\alpha}\ra^2$ will be zero unless $|\bm\alpha_i|$ is even for all $i$.
In this case, by Proposition~\ref{prop:hermite-mismatched},
\begin{equation}
    \Ex_{g \sim \sN(\bm 0, \|\bx\|^2)}h_{|\bm\alpha_i|}\left(g; -\frac{1 + \beta \|\bx\|^2}{\beta}\right)\exp\left(\frac{1}{2} \frac{\beta}{1 + \beta \|\bx\|^2} g^2 \right)
    = (|\bm\alpha_i| - 1)!!\frac{(1 + \beta\|\bx\|^2)^{|\bm\alpha_i| + 1/2}}{\beta^{|\bm\alpha_i| / 2}}.
\end{equation}
Substituting into~\eqref{eq:l-h-alpha-1}, we find many cancellations after which we are left with
\begin{equation}
    \la L_{n, \gamma, \beta, \sX}, \what{H}_{\bm\alpha} \ra^2 = \frac{\prod_{i = 1}^N (|\bm\alpha_i| - 1)!!^2}{\prod_{i = 1}^N \bm\alpha_i!} \beta^{|\bm\alpha|} \left(\Ex_{\bx \sim \sX_n} \bx^{\sum_{i = 1}^N \bm \alpha_i}\right)^2,
\end{equation}
the final result.
\end{proof}

\subsection{Norm of the Low-Degree Likelihood Ratio}

\begin{proof}[Proof of Lemma~\ref{lem:proj-norm}]
Recall that
\begin{equation}
    \|L^{\leq D}_{n, \gamma, \beta, \sX}\|_{L^2(\QQ_{n})}^2 = \sum_{\substack{\bm\alpha \in (\NN^n)^N \\ |\bm\alpha| \leq D}} \la L_{n, \gamma, \beta, \sX}, \what{H}_{\bm\alpha}\ra^2.
\end{equation}
We substitute in the result of Lemma~\ref{lem:proj-component-norms}, which, after introducing independent replicas $\bx^1, \bx^2 \sim \sX_n$, may be rewritten as
\begin{align*}
  \|L_{n, \gamma, \beta, \sX}^{\leq D}\|_{L^2(\QQ_{n})}^2
  &= \Ex_{\bx^1, \bx^2 \sim \sX_n}\sum_{\substack{\bm\alpha_i \in \NN^n, i \in [N] \\ |\bm\alpha_i| \text{even} \\ \sum_{i = 1}^N|\bm\alpha_i| \leq D}} \prod_{i = 1}^N \frac{(|\bm\alpha_i| - 1)!!^2}{\bm\alpha_i!} \beta^{|\bm\alpha_i|}(\bx^1)^{\bm\alpha_i}(\bx^2)^{\bm\alpha_i} \\
  &= \Ex_{\bx^1, \bx^2 \sim \sX_n}\sum_{d = 0}^D \beta^{d}\sum_{\substack{d_1, \dots, d_N \text{ even} \\ \sum d_i = d}}\left(\prod_{i = 1}^N \frac{(d_i - 1)!!^2}{d_i!}\right) \sum_{\substack{\bm\alpha_i \in \NN^n, i \in [N] \\ |\bm\alpha_i| = d_i}} \prod_{i = 1}^N \binom{d_i}{\bm\alpha_i}\prod_{j = 1}^n(x^1_jx^2_j)^{\bm\alpha_i(j)}.
\end{align*}
By the multinomial theorem,
\[ \la \bx^1, \bx^2 \ra^{d_i} = \sum_{\substack{\bm\alpha \in \NN^n \\ |\bm\alpha| = d_i}} \binom{d_i}{\bm\alpha}\prod_{j = 1}^n(x^1_jx^2_j)^{\bm\alpha(j)}, \]
and therefore
\begin{align*}
  \la \bx^1, \bx^2 \ra^{\sum_{i = 1}^N d_i}
  &= \prod_{i = 1}^N\la \bx^1, \bx^2 \ra^{d_i} \\
  &= \prod_{i = 1}^N \sum_{\substack{\bm\alpha \in \NN^n \\ |\bm\alpha| = d_i}} \binom{d_i}{\bm\alpha}\prod_{j = 1}^n(x^1_jx^2_j)^{\bm\alpha(j)} \\
  &= \sum_{\substack{\bm\alpha_i \in \NN^n, i \in [N] \\ |\bm\alpha_i| = d_i}} \prod_{i = 1}^N \binom{d_i}{\bm\alpha_i}\prod_{j = 1}^n(x^1_jx^2_j)^{\bm\alpha_i(j)}.
\end{align*}
In our case, this shows
\begin{align*}
  \|L_{n, \gamma, \beta, \sX}^{\leq D}\|_{L^2(\QQ_{n})}^2
  &= \Ex_{\bx^1, \bx^2 \sim \sX_n}\sum_{\substack{0 \leq d \leq D \\ d \text{ even}}} \beta^{d}\sum_{\substack{d_1, \dots, d_N \text{ even} \\ \sum d_i = d}}\left(\prod_{i = 1}^N \frac{(d_i - 1)!!^2}{d_i!}\right) \left(\prod_{i = 1}^N \la \bx^1, \bx^2 \ra^{d_i}\right) \\
  &= \Ex_{\bx^1, \bx^2 \sim \sX_n}\sum_{\substack{0 \leq d \leq D \\ d \text{ even}}} \beta^{d}\la \bx^1, \bx^2 \ra^d \sum_{\substack{d_1, \dots, d_N \text{ even} \\ \sum d_i = d}}\prod_{i = 1}^N \frac{d_i!}{d_i!!^2} \\
  &= \Ex_{\bx^1, \bx^2 \sim \sX_n}\sum_{\substack{0 \leq d \leq D \\ d \text{ even}}} 2^{-d}\beta^{d}\la \bx^1, \bx^2 \ra^d \sum_{\substack{d_1, \dots, d_N \text{ even} \\ \sum d_i = d}}\prod_{i = 1}^N \binom{d_i}{d_i / 2}
\end{align*}
where we have used the identities $n! = n!! \cdot (n - 1)!!$ and $(2n)!! = 2^n \cdot n!$.
We now pass to a notation making the restriction to even degrees clearer:
\[ \|L_{n, \gamma, \beta, \sX}^{\leq D}\|_{L^2(\QQ_{n})}^2 = \Ex_{\bx^1, \bx^2 \sim \sX_n}\sum_{0 \leq d \leq \lfloor D / 2 \rfloor } \left(\frac{\beta^2 \la \bx^1, \bx^2 \ra^2}{4}\right)^d \sum_{\substack{d_1, \dots, d_N \\ \sum d_i = d}}\prod_{i = 1}^N \binom{2d_i}{d_i}. \]
The remaining function may be understood in terms of the generating function of the central binomial coefficients: using Proposition~\ref{prop:gf-central-binom}, we have that for any $x \in (-\frac{1}{4}, \frac{1}{4})$,
\[ \varphi_N(x) \colonequals (1 - 4x)^{-N / 2} = \left(\sum_{d \geq 0} \binom{2d}{d}x^d\right)^N = \sum_{d \geq 0} x^d \sum_{\substack{d_1, \dots, d_N \\ \sum d_i = d}} \prod_{i = 1}^N \binom{2d_i}{d_i}. \]
Writing $\varphi_{N, k}(x)$ for the truncation of this Taylor series to degree $k$, we see that
\[ \|L_{n, \gamma, \beta, \sX}^{\leq D}\|_{L^2(\QQ_{n})}^2 = \Ex_{\bx^1, \bx^2 \sim \sX_n}\left[\varphi_{N, \lfloor D / 2 \rfloor}\left(\frac{\beta^2 \la \bx^1, \bx^2 \ra^2}{4}\right)\right], \]
the final result.
\end{proof}

\section{Proofs for Bounding the Low-Degree Likelihood Ratio}\label{app:low-deg}

\subsection{Local Chernoff Bound}
\label{app:local-chernoff}

\begin{proof}[Proof of Proposition~\ref{prop:local-chernoff}]
It is sufficient to show that $\iid(\pi/\sqrt{n})$ admits a local Chernoff bound. Since $\pi$ is subgaussian, $\pi^2 - \EE[\pi^2]$ is subexponential (see, e.g., \cite{rig-notes}), i.e., the moment-generating function $M(t) = \EE[\exp(t(\pi^2 - \EE[\pi^2]))]$ satisfies $M(t) \leq \exp(\frac{t^2}{2s^2})$ for all $|t| \leq s$ for a suitable choice of a constant $s > 0$.
In particular, $\EE[\exp(t \pi^2)] < \infty$ for all $|t| \le s$ for this choice of $s > 0$.

Let $\Pi = \pi \pi'$, the product of two independent copies of $\pi$. Let $\sigma^2$ be the variance proxy of $\pi$ (see Definition~\ref{def:subg}). The moment-generating function of $\Pi$ is
$$M(t) = \EE[\exp(t \Pi)] = \EE_\pi \EE_{\pi'}[\exp(t \pi \pi')] \le \EE_\pi\left[\exp\left(\sigma^2 t^2 \pi^2/2\right)\right] < \infty$$
provided $\frac{1}{2}\sigma^2 t^2 < s$, i.e., $|t| < \sqrt{2s/\sigma^2}$. Thus $M(t)$ exists in an open interval containing $t=0$, which implies $M^{\prime}(0) = \EE[\Pi] = 0$ and $M^{\prime\prime}(0) = \EE[\Pi^2] = 1$ (this is the defining property of the moment-generating function: its derivatives at $t = 0$ are the moments of $\Pi$).

Let $\eta > 0$ and $f(t) = \exp\left(\frac{t^2}{2(1-\eta)}\right)$. Since $M(0) = 1, M^{\prime}(0) = 0, M^{\prime\prime}(0) = 1$ and $f(0) = 1, f^{\prime}(0) = 0, f^{\prime\prime}(0) = \frac{1}{1-\eta} > 1$, there exists $\delta > 0$ such that for all $t \in [-\delta,\delta]$, $M(t)$ exists and $M(t) \le f(t)$.

We now apply the standard Chernoff bound argument to $\la \bx^1,\bx^2 \ra = \frac{1}{n} \sum_{i=1}^n \Pi_i$, where $\Pi_1,\ldots,\Pi_n$ are i.i.d.\ copies of $\Pi$. For any $\lambda > 0$,
\begin{align*}
\Pr\left\{\la \bx^1,\bx^2 \ra \ge t\right\} &= \Pr\left\{\exp(\lambda \la \bx^1,\bx^2 \ra) \ge \exp(\lambda t)\right\}\\
&\le \exp(-\lambda t) \EE[\exp(\lambda \la \bx^1,\bx^2 \ra)] \tag{by Markov's inequality}\\
&= \exp(-\lambda t) \EE[\exp(\lambda n^{-1} \sum_{i=1}^n \Pi_i)]\\
&= \exp(-\lambda t) [M(\lambda/n)]^n\\
&\le \exp(-\lambda t) [f(\lambda/n)]^n \tag{provided $\lambda/n \le \delta$} \\
&\le \exp(-\lambda t) \exp\left(\frac{\lambda^2}{2(1-\eta)n}\right).
\end{align*}
Taking $\lambda = (1-\eta)nt$,
$$\Pr\left\{\la \bx^1,\bx^2 \ra \ge t\right\} \le \exp\left(-(1-\eta)nt^2 + \frac{1}{2}(1-\eta)nt^2\right) = \exp\left(-\frac{1}{2}(1-\eta)nt^2\right)$$
as desired. This holds provided $\lambda/n \le \delta$, i.e., $t \le \delta/(1-\eta)$. The same argument (with $-\Pi$ instead of $\Pi$) holds for the other tail bound $\Pr\left\{\la \bx^1,\bx^2 \ra \le -t\right\}$.
\end{proof}

\subsection{Bounding the Large Deviations}
\label{app:large-dev}

\begin{proof}[Proof of Lemma~\ref{lem:large-dev}]
Recall that
\[ \varphi_{N, \lfloor D/2 \rfloor}(x) = \sum_{d=0}^{\lfloor D/2 \rfloor} x^d \sum_{\substack{d_1, \dots, d_N \\ \sum d_i = d}} \prod_{i = 1}^N \binom{2d_i}{d_i}. \]
Note that the first sum above has $\lfloor D/2 \rfloor+1$ terms and the second sum has at most $N^d \le N^{\lfloor D/2 \rfloor} \leq N^{D / 2}$ terms. It is combinatorially clear that $\binom{2d_i}{d_i}\binom{2d_j}{d_j} \le \binom{2(d_i+d_j)}{d_i+d_j}$, and therefore
\[ \prod_{i=1}^N \binom{2d_i}{d_i} \le \binom{2 \sum_{i = 1}^N d_i}{\sum_{i = 1}^N d_i} = \binom{2d}{d} \le (2d)^d \le D^{D/2}. \]
Since $\|\bx^1\|^2 \le 2$ and $\|\bx^2\|^2 \le 2$ we have $\frac{1}{4}\beta^2 \la \bx^1, \bx^2 \ra^2 \le \beta^2$. Since $d \le D/2$ we have $(\beta^2)^d \le (1+\beta^2)^{D/2}$, and therefore
\[ \varphi_{N, \lfloor D/2 \rfloor}\left(\frac{\beta^2}{4} \la \bx^1, \bx^2 \ra^2\right) \le (D/2+1) (1+\beta^2)^{D/2} N^{D/2} D^{D/2} \le (1+\beta^2)^{D/2} D N^{D/2} D^{D/2}. \]

Combining these bounds,
\begin{align*}
R_2 &= \Ex_{\bx^1, \bx^2 \sim \sX_n}\left[\One_{|\langle \bx^1,\bx^2 \rangle| > \varepsilon}\, \varphi_{N, \lfloor D/2 \rfloor}\left(\frac{\beta^2}{4} \la \bx^1, \bx^2 \ra^2\right)\right]\\
&\le \Pr\left\{|\langle \bx^1,\bx^2 \rangle| > \varepsilon\right\} (1+\beta^2)^{D/2} D N^{D/2} D^{D/2}.
\intertext{Since $R_2$ increases as $\varepsilon$ decreases, we can assume without loss of generality that $\varepsilon$ is small enough that we may apply the local Chernoff bound \eqref{eq:local-chernoff}:}
&\le \exp\left(-\frac{1}{3}n\varepsilon^2\right) (1+\beta^2)^{D/2} D N^{D/2} D^{D/2}\\
&= \exp\left(-\frac{1}{3}n\varepsilon^2 + \frac{D}{2} \log(1+\beta^2) + \log D + \frac{D}{2} \log N + \frac{D}{2} \log D\right)\\
&= o(1)
\end{align*}
provided $D = o(n/\log n)$, completing the proof.
\end{proof}

\subsection{Bounding the Small Deviations}
\label{app:small-dev}

\begin{proof}[Proof of Lemma~\ref{lem:small-dev}]
We use the argument from Appendix~K of \cite{PWBM-contig}. Since the Taylor series for $\varphi_N(x)$ has nonnegative coefficients, we have $\varphi_{N, \lfloor D/2 \rfloor}(x) \le \varphi_N(x)$ for all $x \in [0,1/4)$. Taking $\varepsilon < 1/|\beta|$, we have
\begin{align*}
R_1 &= \Ex_{\bx^1, \bx^2 \sim \sX_n}\left[\One_{|\langle \bx^1,\bx^2 \rangle| \le \varepsilon}\, \varphi_{N, \lfloor D/2 \rfloor}\left(\frac{\beta^2}{4} \la \bx^1, \bx^2 \ra^2\right)\right]\\
&\le \Ex_{\bx^1, \bx^2 \sim \sX_n}\left[\One_{|\langle \bx^1,\bx^2 \rangle| \le \varepsilon}\, \varphi_N\left(\frac{\beta^2}{4} \la \bx^1, \bx^2 \ra^2\right)\right]\\
&= \Ex_{\bx^1, \bx^2 \sim \sX_n}\left[\One_{|\langle \bx^1,\bx^2 \rangle| \le \varepsilon}\, \left(1-\beta^2 \la \bx^1, \bx^2 \ra^2\right)^{-N/2}\right]\\
&= \Ex_{\bx^1, \bx^2 \sim \sX_n}\left[\One_{|\langle \bx^1,\bx^2 \rangle| \le \varepsilon}\, \exp\left(-\frac{N}{2} \log\left(1-\beta^2 \la \bx^1, \bx^2 \ra^2\right)\right)\right].\\
\intertext{By the convexity of $t \mapsto -\log(1-\beta^2 t)$, we have $-\log(1-\beta^2 t) \le -\frac{t}{\varepsilon^2}(1-\beta^2 \varepsilon^2)$ for all $t \in [0,\varepsilon^2]$. Letting $c \colonequals -\frac{N}{2\varepsilon^2} \log(1-\beta^2\varepsilon^2) > 0$, we proceed bounding}
&\le \Ex_{\bx^1, \bx^2 \sim \sX_n}\left[\One_{|\la \bx^1,\bx^2 \ra| \le \varepsilon}\, \exp\left(c\la \bx^1, \bx^2 \ra^2\right)\right]\\
&= \int_0^\infty \Pr\left\{\One_{|\la \bx^1,\bx^2 \ra| \le \varepsilon}\, \exp\left(c\la \bx^1, \bx^2 \ra^2\right) > u\right\} du\\
&= \int_0^\infty \Pr\left\{|\la \bx^1,\bx^2 \ra| \le \varepsilon \quad\text{and}\quad \exp\left(c\la \bx^1, \bx^2 \ra^2\right) > u\right\} du\\
&\le 1 + \int_1^\infty \Pr\left\{|\la \bx^1,\bx^2 \ra| \le \varepsilon \quad\text{and}\quad \exp\left(c\la \bx^1, \bx^2 \ra^2\right) > u\right\} du.
\intertext{Applying the change of variables $u = \exp\left(ct\right)$,}
&= 1 + \int_0^\infty \Pr\left\{|\la \bx^1,\bx^2 \ra| \le \varepsilon \quad\text{and}\quad \la \bx^1, \bx^2 \ra^2 > t\right\}c\exp\left(ct\right) dt\\
&\le 1 + \int_0^{\varepsilon^2} \Pr\left\{\la \bx^1, \bx^2 \ra^2 > t\right\}c\exp\left(ct\right) dt.\\
\intertext{Provided $\varepsilon$ is sufficiently small, we can apply the local Chernoff bound (\ref{eq:local-chernoff}):}
&\le 1 + C c \int_0^{\varepsilon^2} \exp\left(-\frac{1}{2}(1-\eta)nt + ct\right) dt.\\
\intertext{Let $\what{\gamma} \colonequals n/N$, so that $\what\gamma \to \gamma$ as $n \to \infty$. Letting $c \equalscolon \what{c} n$ where $\what{c} = -\log(1-\beta^2\varepsilon^2) / (2\varepsilon^2 \what\gamma)$,}
&\le 1 +  C \cdot \what c n \int_0^{\varepsilon^2} \exp\left[\left(-\frac{1}{2}(1-\eta) + \what c\right)nt\right] dt.\\
\intertext{We have $\lim_{\varepsilon \to 0^+} \what c = \frac{\beta^2}{2\what\gamma}$. Since $\beta^2 < \gamma$, we have that for sufficiently large $n$, $\frac{\beta^2}{2\what\gamma} < \frac{1}{2}$. Thus we can choose $\varepsilon$ and $\eta$ small enough so that for sufficiently large $n$, $-\frac{1}{2}(1-\eta)+\what c \le -\alpha$ for some $\alpha > 0$. Now}
&\le 1 + C \cdot \what c n \int_0^\infty \exp\left(-\alpha nt\right) dt\\
&= 1 + \frac{C \cdot \what c}{\alpha}\\
&= O(1),
\end{align*}
completing the proof.
\end{proof}

\subsection{Above the BBP Threshold}
\label{app:above-thresh}

\begin{proof}[Proof of Theorem~\ref{thm:wish-lowdeg} (Part 2)]
Let $\beta^2 > \gamma$. Recall

\begin{align}
\|L_{n, \gamma, \beta, \sX}^{\leq D}\|_{L^2(\QQ_{n})}^2
&= \Ex_{\bx^1, \bx^2 \sim \sX_n}\left[\varphi_{N, \lfloor D/2 \rfloor}\left(\frac{\beta^2}{4} \la \bx^1, \bx^2 \ra^2\right)\right] \nonumber \\
&= \sum_{d=0}^{\lfloor D/2 \rfloor} \Ex_{\bx^1, \bx^2 \sim \sX_n}\left(\frac{\beta^2}{4} \la \bx^1, \bx^2 \ra^2\right)^d \sum_{\substack{d_1, \dots, d_N \\ \sum d_i = d}} \prod_{i = 1}^N \binom{2d_i}{d_i}. \label{eq:above-thresh-expansion}
\end{align}

Since each term in the outer summation of \eqref{eq:above-thresh-expansion} is nonnegative, it sufficies to fix a single $d \le D/2$ and show that the corresponding term is $\omega(1)$. We can write $\la \bx^1,\bx^2 \ra = \frac{1}{n} \sum_{i=1}^n \Pi_i$ where $\Pi_1,\ldots,\Pi_n$ are i.i.d.\ with distribution of the product $\Pi = \pi \pi'$ of two independent copies of $\pi$. This means
$$\Ex_{\bx^1, \bx^2 \sim \sX_n} \la \bx^1, \bx^2 \ra^{2d} = \EE \left(\frac{1}{n} \sum_{i=1}^n \Pi_i\right)^{2d} = n^{-2d} \sum_{i_1,\ldots,i_{2d} \in [n]} \EE[\Pi_{i_1} \Pi_{i_2} \cdots \Pi_{i_{d2}}].$$
Since $\pi$ is symmetric about zero, $\Pi$ is also symmetric about zero, so all moments $\EE[\Pi^k]$ are nonnegative. This means each term in the remaining sum is nonnegative, so we can obtain a lower bound by only considering terms where each index occurring among the $i_1, \dots, i_{2d}$ occurs exactly twice:
$$\Ex_{\bx^1, \bx^2 \sim \sX_n} \la \bx^1, \bx^2 \ra^{2d} \ge n^{-2d} \binom{n}{d} \frac{(2d)!}{2^d} \left(\EE[\Pi^2]\right)^d = n^{-2d} \binom{n}{d} \frac{(2d)!}{2^d}.$$
Next, we bound the inner summation of \eqref{eq:above-thresh-expansion} below by taking only the terms with $d_i \in \{0,1\}$ for all $i \in [N]$:
$$\sum_{\substack{d_1, \dots, d_N \\ \sum d_i = d}} \prod_{i = 1}^N \binom{2d_i}{d_i} \ge \binom{N}{d} 2^d.$$
Combining these bounds, we find that for any fixed $0 \leq d \leq \lfloor D / 2 \rfloor$,
\begin{align*}
\|L_{n, \gamma, \beta, \sX}^{\leq D}\|_{L^2(\QQ_{n})}^2
&\ge \frac{\beta^{2d}}{4^d} n^{-2d} \binom{n}{d} \frac{(2d)!}{2^d} \binom{N}{d} 2^d \\
&= \left(\frac{\beta^2}{4n^2}\right)^d \frac{(2d)!\, n!\, N!}{(d!)^2 \,(n-d)!\, (N-d)!} \\
&\ge \left(\frac{\beta^2 (n-d)(N-d)}{4n^2}\right)^d \binom{2d}{d}.
\end{align*}
Using the standard bound $\binom{2d}{d} \ge 4^d / (2\sqrt{d})$,

% https://math.stackexchange.com/questions/58560/elementary-central-binomial-coefficient-estimates

$$\|L_{n, \gamma, \beta, \sX}^{\leq D}\|_{L^2(\QQ_{n})}^2 \ge \frac{1}{2\sqrt{d}} \left(\frac{\beta^2 (n-d)(N-d)}{n^2}\right)^d.$$

\noindent This final expression will be $\omega(1)$ provided that $1 \ll d \ll n$, since $n/N \to \gamma$ and $\beta^2 > \gamma$.
\end{proof}

\end{document}